\newcommand{\rS}{\mathrm{S}}
\newcommand{\Om}{\Omega}
\newcommand{\beq}{\begin{equation}}  
\newcommand{\eeq}{\end{equation}}  
\newcommand{\bea}{\begin{eqnarray}} 
\newcommand{\eea}{\end{eqnarray}}   
\newcommand{\bear}{\begin{array}}  
\newcommand{\eear}{\end{array}} 
\newcommand\dd{\mathrm{d}}
\newtheorem{thm}{Theorem}[section] 
\newtheorem{propn}[thm]{Proposition}
\newtheorem{lem}[thm]{Lemma}
\newtheorem{cor}[thm]{Corollary}
\newenvironment{prfr}{\trivlist \item [\hskip 
\labelsep {\bf Proof of Proposition \ref{laurentpiv} (reprise):}]\ignorespaces}{\qed \endtrivlist}
\theoremstyle{definition}
\newtheorem{definition}[thm]{Definition}
\newtheorem{exa}[thm]{Example}
\newtheorem{remark}[thm]{Remark}
\newcommand{\N}{{\mathbb N}}
\newcommand{\Q}{{\mathbb Q}}
\newcommand{\Z}{{\mathbb Z}}
\newcommand{\C}{{\mathbb C}}
\newcommand{\Pro}{{\mathbb P}}
\newcommand{\F}{{\mathbb F}}
\newcommand{\rF}{\mathrm{F}}
\newcommand{\rX}{\mathrm{X}}
\newcommand{\rd}{\mathrm{d}}
\newcommand{\ri}{\mathrm{i}}
\newcommand\al{{\alpha}}
\newcommand\be{{\beta}}
\newcommand\gam{{\gamma}}
\newcommand\eps{{\epsilon}}
\newcommand\de{{\delta}}
\newcommand\om{{\omega}}
\DeclareMathOperator{\Jac}{Jac}
\newcommand{\Res}{\mathop{\textrm{Res}}\limits}
\newcommand\si{{\sigma}}
\newcommand\lax{{\bf L}}
\newcommand\mma{{\bf M}}
\newcommand\tX{\tilde{X}}
\newcommand{\cC}{{\cal C}}
\newcommand{\cQ}{{\cal Q}}
\newcommand{\cP}{{\cal P}}
\newcommand{\cR}{{\cal R}}
\newcommand{\cV}{{\cal V}}
\newcommand{\cS}{{\cal S}}
\newcommand{\cT}{{\cal T}}
\def\lc{\left\lfloor}   
\def\rc{\right\rfloor}
\newcommand{\p}[1]{p_{#1}}
\newcommand{\q}[1]{q_{#1}}
\renewcommand{\r}[1]{r_{#1}}
\newcommand{\pb}[1]{\left\{#1\right\}}
\newcommand{\Pb}[1]{\left\{\cdot\,,#1\right\}}
\newcommand{\PB}{\left\{\cdot\,,\cdot\right\}}
\newcommand{\cD}{\mathcal D}
\title{A family of integrable maps associated with the Volterra lattice} 
\author{A.N.W. Hone
  \thanks{Work begun while on leave from School of Mathematics, Statistics \& Actuarial Science,
    University of Kent, Canterbury CT2 7NF, UK.}
  $\,$ and J.A.G. Roberts~\\ School of Mathematics \&  Statistics~\\ University of New South Wales~\\
  Sydney, NSW 2052, Australia.~\\ 
A.N.W.Hone@kent.ac.uk$\qquad$j.a.g.roberts@unsw.edu.au \\ ~\\
  P. Vanhaecke~\\ Laboratoire de Math\'ematiques et Applications\\ UMR 7348 CNRS-Universit\'e de Poitiers~\\
  86360 Chasseneuil-du-Poitou, France.~\\
Pol.Vanhaecke@univ-poitiers.fr}
\begin{document} 

\maketitle
 
\begin{abstract} 
Recently Gubbiotti, Joshi, Tran and Viallet classified birational maps in four dimensions admitting two invariants
(first integrals) with a particular degree structure, by considering recurrences of fourth order with a certain
symmetry.  The last three of the maps so obtained were shown to be Liouville integrable, in the sense of admitting
a non-degenerate Poisson bracket with two first integrals in involution. Here we show how the first of these three
Liouville integrable maps corresponds to genus 2 solutions of the infinite Volterra lattice, being the $g=2$ case
of a family of maps associated with the Stieltjes continued fraction expansion of a certain function on a
hyperelliptic curve of genus $g\geqslant 1$. The continued fraction method provides explicit Hankel determinant
formulae for tau functions of the solutions, together with an algebro-geometric description via a Lax
representation for each member of the family, associating it with an algebraic completely integrable system.  In
particular, in the elliptic case ($g=1$), as a byproduct we obtain Hankel determinant expressions for the solutions
of the Somos-5 recurrence, but different to those previously derived by Chang, Hu and Xin.  By applying contraction
to the Stieltjes fraction, we recover integrable maps associated with Jacobi continued fractions on hyperelliptic
curves, that one of us considered previously, as well as the Miura-type transformation between the Volterra and
Toda lattices.
\end{abstract} 

\section{Introduction} 

\setcounter{equation}{0}

In classical mechanics, the study of integrable Hamiltonian systems, given by Hamiltonian vector fields with a
sufficient number of functionally independent first integrals in involution with respect to a Poisson bracket, has
a long history that goes back to the origins of calculus. It was further enriched in the latter half of the last
century by the discovery of the method of inverse scattering for solving certain Hamiltonian partial differential
equations, which gave new perspectives and new techniques for deriving finite-dimensional integrable systems
obtained as reductions of the latter.  The case of discrete integrable systems, in the form of difference equations
or maps preserving a symplectic (or Poisson) structure and satisfying the conditions for a discrete analogue of
Liouville's theorem, soon began to attract attention \cite{bruschi, maeda, veselov}, but it is fair to say that,
despite the fact that many examples are now known, the theory of discrete integrability is much less well
developed.  For integrable birational maps in the plane, the archetypal example is provided by the QRT family of
maps \cite{qrt}, whose level sets are biquadratic curves (generically, of genus one), 
are associated with elliptic fibrations \cite{duistermaat}.  If one imposes a requirement of subexponential degree
growth (zero algebraic entropy, in the terminology of \cite{bv, viallet}), then in two dimensions the only possibilities are
maps that preserve a pencil of genus one curves (like QRT), maps that preserve a pencil of rational curves, or
completely periodic maps \cite{df}. An example of a quadratic map in the projective plane that preserves a pencil of cubic curves was studied in \cite{PS}. 
This fits in with an observation of Veselov \cite{veselov}, that for an
infinite order birational map of the plane with an algebraic invariant, the level curves can have genus at most one
(as a consequence of the Hurwitz theorem on the automorphism group of a Riemann surface).

Poisson maps in three dimensions with two first integrals, of which one is a Casimir, can be reduced to the
two-dimensional case by restricting to symplectic leaves, and the common level sets are curves, so in an
algebro-geometric setting this will typically lead to elliptic fibrations. Thus, in order to see new geometrical
features, with invariant tori of dimension greater than one, it is necessary to look to integrable maps in four
dimensions.  Building on the work \cite{jv} and \cite{gjtv1}, which was based on considering autonomous versions of
the fourth-order members of hierarchies of discrete Painlev\'e I/II equations from \cite{cj}, in \cite{gjtv2}
Gubbiotti et al.\ presented a classification of four-dimensional birational maps of recurrence type, that is
\beq\label{themap} \varphi: \qquad (w_0,w_1,w_2,w_3)\mapsto \Big(w_1,w_2,w_3 ,\rF(w_0,w_1,w_2,w_3)\Big), \eeq for a
suitable rational function $\rF$ of affine coordinates $(w_0,w_1,w_2,w_3)\in \C^4$, with $\varphi$ being invariant
under the involution $\iota:\, (w_0,w_1,w_2,w_3)\mapsto (w_3,w_2,w_1,w_0)$ and having two functionally independent
polynomial invariants, $K_1$, $K_2$ say, with specific degree patterns $(\deg_{w_0}K_j, \deg_{w_1}K_j,
\deg_{w_2}K_j, \deg_{w_3}K_j)=(1,3,3,1)$ and $(2,4,4,2)$ for $j=1,2$, respectively. The result of this
classification was six maps with parameters, labelled (P.i-vi), together with six associated maps, labelled
(Q.i-vi), dual to them in the sense of \cite{quispel}, meaning that they arise as discrete integrating factors for
linear combinations of the first integrals.  (Note that, aside from the original connection with \cite{cj}, the
letter P in this nomenclature has nothing to do with the usual labelling of continuous Painlev\'e equations.)

\subsection{The map (P.iv): an integrable map in 4D} 

From our point of view, the most interesting examples among those presented in \cite{gjtv2}
  are the maps labelled (P.iv), (P.v) and (P.vi). 
According to Table 1 in \cite{gjtv2}, these are the only ones arising from a discrete variational principle (Lagrangian), which leads 
to a non-degenerate Poisson bracket in four dimensions, such that the two first integrals $K_1$, $K_2$ are in involution, and this 
means that in the real case the Liouville tori are two-dimensional (cf.\ Fig.1).  In this paper, our main concern will be  
the case of (P.iv), which is 
the birational map given in affine form by the recurrence 
\beq\label{pivmap} 
\begin{array}{l} 
w_{n+4} w_{n+3} w_{n+2} 
+
w_{n+2} w_{n+1} w_{n} 
+
2w_{n+2}^2( w_{n+3} +w_{n+1}) \\ 
+
w_{n+2} (w_{n+3}^2+ w_{n+3} w_{n+1} +w_{n+1}^2) 
 +w_{n+2}^3 +
\nu w_{n+2} (w_{n+3} +w_{n+2}+ w_{n+1})+ 
b w_{n+2} +a =0.
\end{array} 
\eeq 
The above map depends on three essential parameters $a,b,\nu$ (compared with \cite{gjtv2}, 
by rescaling we have 
set the parameter $d=1$),  
and it can be written in the form   
(\ref{themap}), 
with  
$$\rF
= -\frac{w_0w_1w_2+w_1w_2w_3+w_1^2w_2+w_2w_3^2+2w_1w_2^2+2w_2^2w_3+w_2^3
+\nu(w_1w_2+w_2w_3+w_2^2)+bw_2+a}
{w_2w_3},  
$$  which 
is the rational function of $w_0,w_1,w_2,w_3$ obtained by 
solving for $w_4$ in the recurrence (\ref{pivmap}) with $n=0$. 
More recently,  
Gubbiotti showed how the equation (\ref{pivmap}) 
also arises from a classification of additive fourth-order difference equations, 
based on the requirement of a discrete Lagrangian structure alone \cite{gub}.


The first integral denoted $I_{\mathrm{low}}^{\mathrm{P}.\mathrm{iv}}$ in 
\cite{gjtv2} is given in affine coordinates by 
\beq\label{pivh1} 
K_1=w_1w_2\Big(w_2w_3+w_0w_1-w_0w_3
+(w_1+w_2)^2 +\nu(w_1+w_2) + b 
\Big)
+a(w_1+w_2).
\eeq
The latter has the degree pattern $(1,3,3,1)$. In particular, it is linear in $w_3$, which implies that, on each three-dimensional level 
set $K_1=k_1=\,\,$const, the map (\ref{pivmap}) reduces to a birational map in three dimensions, given 
by the recurrence 
$$
\begin{array}{rcl} 
w_{n+3}w_{n+2}w_{n+1}(w_{n+2}-w_n) 
+w_{n+2}w_{n+1}^2w_n 
 +w_{n+2}w_{n+1}(w_{n+1}+w_{n+2})^2 &&  \\ 
+\nu \,w_{n+2}w_{n+1}(w_{n+1}+w_{n+2})+b\,w_{n+2}w_{n+1}
+a\,(w_{n+1}+w_{n+2})&=&k_1. 
\end{array} 
$$  
A second functionally independent invariant for (\ref{pivmap}), with degree pattern $(2,4,4,2)$, is given by 
\beq\label{pivh2} 
\begin{array}{rcr}  
K_2 &=& w_1w_2\left(\begin{array}{c}  
w_0^2w_1 + w_3^2w_2 + w_0w_3(w_1+w_2) 
+w_0(w_2^2+2w_1^2)  +w_3(w_1^2+2w_2^2) \\ +\,3(w_0+w_3)w_1w_2 +(w_1+w_2)^3 \\ 
+\nu\,\,\Big(w_0w_3+(w_0+w_3)(w_1+w_2) +(w_1+w_2)^2\Big) 
+b \,(w_0+w_1+w_2+w_3)
\end{array}  
\right) \\ 
&& +a\,\Big(w_0w_1+w_3w_2+(w_1+w_2)^2\Big) \,\,.
\end{array} 
\eeq 
This differs slightly from the second invariant presented in \cite{gjtv2}, 
which is 
$I_{\mathrm{high}}^{\mathrm{P}.\mathrm{iv}}=K_2-\nu K_1$. 

The non-degenerate Poisson bracket between the coordinates, which was obtained  in \cite{gjtv2} by making use of %
a discrete  Lagrangian for (\ref{pivmap}),  is given by 
\beq\label{pbiv} 
\{ \, w_n,w_{n+1}\,\} =0, \, 
\{ \, w_n,w_{n+2}\,\} = \frac{1}{w_{n+1}}, \, 
\{ \, w_n,w_{n+3}\,\} = - \frac{w_n +2w_{n+1}+
2w_{n+2}+w_{n+3}+\nu}{w_{n+1}w_{n+2}},
\eeq
for all $n$. So the 4D map of the form (\ref{themap}) defined by (\ref{pivmap}) is a Poisson map, 
in the sense 
that $\{\, \varphi^*G,\varphi^*H\,\} =\varphi^*\{\,G,H\,\}$ for all functions $G,H$ 
on $\C^4$. 
Equivalently, the fact that $\varphi$ is  Poisson with respect to a non-degenerate bracket 
means that it preserves a symplectic form $\om$, such that $\varphi^*\om=\om$. 
The two functionally independent invariants given in \cite{gjtv2} are in involution with respect to the 
Poisson bracket, which 
is equivalent to the involutivity of functions (\ref{pivh1}) and (\ref{pivh2}),  that is 
$$\{\,K_1,K_2\,\}=0.$$ 
Hence the four-dimensional map defined by 
(\ref{pivmap}) is integrable in the Liouville sense.

Computing the Hamiltonian vector field for the first flow, generated by $K_1$, we find 
that  on the phase space $\C^4$ with coordinates $(w_0,w_1,w_2,w_3)$ this takes the form 
\beq\label{volterra} 
\frac{\rd w_n}{\rd t} = w_n (w_{n+1}-w_{n-1}) \qquad \mathrm{for}\quad n=1,2, 
\eeq 
while the components of the vector field for $n=0,3$ appear to be more complicated rational 
functions of these 4 coordinates and the parameters $a,b,\nu$. 
However, since (\ref{pivmap}) is a Poisson map it commutes with this flow, 
so it follows that the relation (\ref{volterra}) extends to all 
$n\in\Z$. To see this, note that  the vecor field 
$\tfrac{\rd}{\rd t}=\{ \cdot, K_1\}$ commutes with the action of 
$\varphi$, and $\varphi^* (w_n)=w_{n+1}$; hence, if (\ref{volterra}) holds for some particular $n$, 
then 
$$ 
\frac{\rd w_{n+1}}{\rd t} =\varphi^*\left(\frac{\rd w_n}{\rd t}\right) = \varphi^*\big(w_n (w_{n+1}-w_{n-1})\big) = w_{n+1} (w_{n+2}-w_{n}), 
$$ 
which is just the same equation with $n\to n+1$. 
 Thus the combined solutions of the iterated map and the flow, which are compatible with one another, generate a sequence of functions 
$\big(w_n(t)\big)_{n\in\Z}$ satisfying (\ref{volterra}), which is the Volterra lattice equation, first considered by Kac and van Moerbeke \cite{kacvm}. We will see that, in a certain sense to be made precise, these 
are genus 2 solutions of this lattice equation. 

\begin{figure}[h]\label{orbit}
  \centering
  \epsfig{file=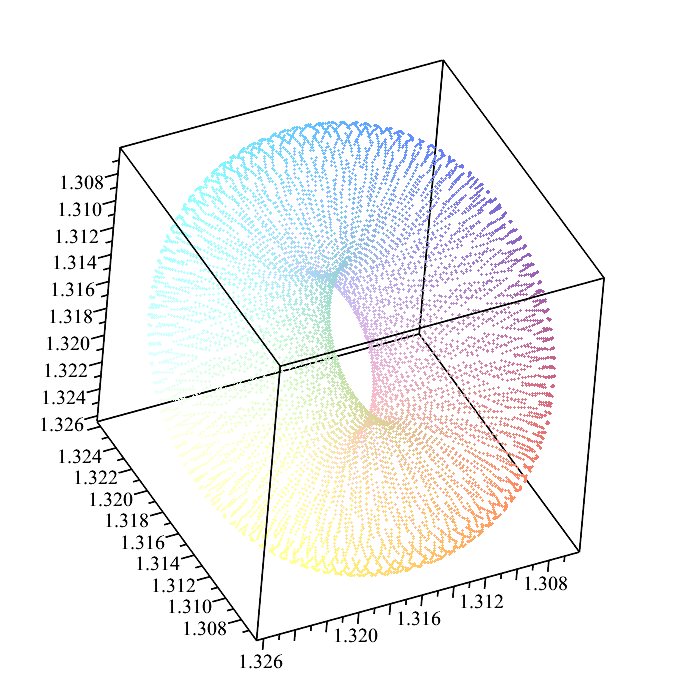, height=3.7in, width=3.7in}
  \caption{Plot of the 3D projection of 10000 points on the orbit  of  (\ref{pivmap}) with initial values
    $(\tfrac{21}{16},\tfrac{21}{16},\tfrac{452}{343},\tfrac{3124}{2373})$ and parameters $a=-9$, $b=29$, $\nu=-10$.} 
\end{figure} 

A wide variety of difference equations admitting Lax pairs and explicit formulae 
for first integrals have been presented by Svinin \cite{svinin1, svinin2}, including a family 
that arises as reductions of the hierarchy of symmetries of the Volterra lattice. By eliminating the parameter $b$ from (\ref{pivmap}) we get an equation of fifth order, that is
\beq \label{fifth}
w_{n+4}\left(\sum_{j=2}^5w_{n+j}+\nu \right) +\frac{a}{w_{n+3}}=
w_{n+1}\left(\sum_{j=0}^3w_{n+j}+\nu \right) +\frac{a}{w_{n+2}},  
\eeq 
and upon setting $a=0$  this reduces to equation (1) in \cite{svinin3} when $s=4$ (cf.\ also the case $N=4$ in \cite{hkq}, where an equivalent equation is obtained 
via a periodic reduction of the lattice KdV equation); 
thus the map (\ref{pivmap}) can be viewed as a 1-parameter generalization of one of Svinin's symmetry reductions of the Volterra lattice hierarchy, 
which in turn is a generalization of one of the maps considered in \cite{dtkq}. 

\subsection{Outline of the paper} 

The purpose of this article is to give a complete description of the complex geometry of the solutions of the map defined by (\ref{pivmap}). In reaching this goal, we found that all of the structures we obtained could naturally be extended to analogous constructions 
associated with a family of curves of arbitrary genus $g$ (elliptic for $g=1$, hyperelliptic for 
$g\geqslant 2$).

Section 2 presents a set of empirical observations, numerical examples and standalone results about the (P.iv) map. 
Originally, these were the specific clues that  led us to uncover the geometrical structure 
of the solutions of (\ref{pivmap}).  To begin with, 
we use a $p$-adic method to identify the singularity pattern of the solutions, 
leading us to introduce a tau function $\tau_n$, which lifts (\ref{pivmap}) to a recurrence of order 7 with the Laurent property;  
this is a Laurentification of the original map, in the sense of \cite{hhkq}. By considering a 
pattern of  initial values that approaches a singularity, and substituting this set of initial data into the expressions 
(\ref{pivh1}) and (\ref{pivh2}) on the level set $K_j=k_j$ for $j=1,2$, 
in  the limit that the singularity is reached  we find a hyperelliptic curve of genus 2, isomorphic to the Weierstrass quintic
\beq\label{wg2} 
y^2=(1+\nu x +bx^2)^2+4a(1+\nu x +bx^2)x^3+4k_1x^4+4(k_2+\nu k_1)x^5.
\eeq 
We also show that the tau function $\tau_n$ satisfies a Somos-9 recurrence with coefficients that depend on $a,b,\nu$ and the values of $K_1,K_2$ along each orbit of (\ref{pivmap}). 
It turns out 
that both the singularity pattern, and the corresponding tau function substitution 
$ w_n ={\tau_n\tau_{n+3}}/({\tau_{n+1}\tau_{n+2}}) $ 
found for (\ref{pivmap}), are the same as for the QRT map associated with the Somos-5 recurrence \cite{hones5}, 
which is associated with a family of elliptic curves; so this 
was a strong initial hint that analogues of the (P.iv) map should exist for any genus $g$. 
For enthusiasts of detective stories, the results in this section provide motivation and insight into how we made the first steps on the trail that led to the rest of the paper. 
However, a reader who is not particularly fond of experimental mathematics  can safely omit  Section 2  on first reading, since the subsequent sections  are not logically dependent on it, and are written in a more linear, deductive style.

In Section 3, we start from a hyperelliptic curve $\Gamma_f$ of arbitrary genus $g\geqslant 1$, 
given by a Weierstrass equation $y^2=f(x)$ where $f\in\C[x]$ is of
odd degree $2g+1$, analogous to (\ref{wg2}), 
together with a particular choice of rational function $F_0$ on the curve, and show how a
Stieltjes continued fraction (S-fraction) expansion of this function, 
of the form 
\beq\label{sfracorig} 
F_0=1-\frac{w_1x}{F_1} =1-\cfrac{w_1x}{1-\cfrac{w_2x}{F_2}} 
= 1-\cfrac{w_1x}{ 1 -\cfrac {w_2x}{ 1-\cfrac{w_3x}{1- \cdots} } } \;,
\eeq 
 leads to a birational map on the coefficients $w_j$ of the
fraction, in dimension $3g+1$, which we refer to as the \textit{Volterra map} ${\cal V}_g$.  
As we shall see,  iterating ${\cal V}_g$ for generic initial data produces the infinite sequence of 
coefficients $w_n$ for $n\geqslant 1$ that appear in the fraction \eqref{sfracorig}, while applying the inverse map 
${\cal V}_g^{-1}$ extends this sequence to $n\leqslant 0$. 
Furthermore, the recursion for the S-fraction
can be rewritten in the form of a discrete Lax equation. 
In this setting, the
hyperelliptic curve $\Gamma_f$ is the spectral curve, and the polynomial $f$ has $2g+1$ non-trivial coefficients which 
provide  conserved quantities
(first integrals) for the map. 
In particular, for $g=1$ it reduces to a QRT map whose tau functions satisfy the Somos-5
recurrence, while when $g=2$ we find that, by fixing the values of three of the first integrals  
to reduce it to four
dimensions, the map is precisely (\ref{pivmap}).  We also show from the S-fraction that, for any $g$, the solutions of the map can be
written explicitly in terms of tau functions that (up to gauge transformations) are expressed as Hankel determinants.

Next, in Section 4, we introduce a family of compatible Poisson brackets for the map ${\cal V}_g$: it is a Poisson
map with respect to any of these brackets, and the conserved quantities provide a sufficient number of invariants
in involution, so we have a Liouville integrable map for any positive integer $g$. The maps 
${\cal V}_g$ are examples of discrete
a.c.i.\ systems, which we define as follows:

\begin{definition}\label{def:adi}
  Suppose that $\C^n$ is equipped with a rational Poisson structure of rank $2r$. A birational map
  $\varphi:\C^n\to\C^n$ is said to be a \emph{discrete a.c.i.\ system} if it is a Poisson map having $s=n-r$
  functionally independent invariants $F_1,\dots,F_s$ that are pairwise in involution (so that the map is Liouville integrable)
  and such that
  \begin{enumerate}
    \item[(1)] The generic fiber of the momentum map $\mu:=(F_1,\dots,F_s):\C^n\to\C^s$ are affine parts of
      Abelian varieties ($r$-dimensional complex algebraic tori);
    \item[(2)] The restriction of $\varphi$ to the generic fiber is a translation.
  \end{enumerate}
\end{definition}

This is the natural discrete analogue of the concept of an algebraic completely integrable (a.c.i.)\ system, on which
there is a considerable literature \cite{amv,vanhaecke}, and that we will also discuss in Section
\ref{par:integrability} since a.c.i.\ systems and discrete a.c.i.\ systems are intimately connected. Although
Definition \ref{def:adi} is new, several examples of discrete a.c.i.\ systems in the particular case of $r=1$ have
already been discovered and thoroughly analyzed in the literature, often in connection with QRT maps (see
\cite{duistermaat}). The case $r=1$ is a very particular case, as the Abelian varieties are in this case
one-dimensional, that is they are elliptic curves. In the present paper, the Abelian varieties that appear as
compactified level sets of the invariants are affine parts of Abelian varieties of dimension~$g$, namely the fiber is
the Jacobian $\Jac(\bar{\Gamma}_f)$ of the (completion of the) corresponding spectral curve $\bar{\Gamma}_f$, and
the restriction of the map to any of these complex tori is indeed given by translation over a fixed vector which we
describe explicitly.  In particular, for the map (P.iv) given by (\ref{pivmap}), each generic level set defined by
fixing $K_j=k_j$ for $j=1,2$ is an affine part of an Abelian surface -- that is the Jacobian of the curve (\ref{wg2}) --
and the map restricts to a translation on each of these Abelian surfaces. It is worth clarifying that the requirement 
for a discrete system  to be a.c.i.\ 
is much more restrictive than just being integrable (in the Liouville sense): for example, the invariants of a generic linear map 
are transcendental (see section III in \cite{honesenthil}), so such maps cannot be a.c.i.\ except in certain special cases (and see also 
\cite{honep} for some examples of  
integrable Poisson maps in 3D with transcendental invariants).

The whole basis of our construction is the S-fraction expansion 
(\ref{sfracorig}), which may appear to be a deus ex machina in Section 3, 
but in fact 
has many antecedents in the literature on integrable systems, and especially 
in the development of van der Poorten's results on 
Jacobi fraction (J-fraction) expansions in elliptic~\cite{vdp1} and hyperelliptic function fields \cite{vdphyp, vdp2,vdp3}, 
as presented 
in recent work by 
one of us \cite{contfrac}.  
The latter revealed the integrable structure of maps generated by J-fractions of the form 
\beq\label{jfracvdp} 
Y_0 = \alpha_0(X) + \frac{1}{Y_1} = \al_0(X) +\cfrac{1}{\al_1(X)+\cfrac{1}{Y_2}}  
= \al_0(X)+\cfrac{1}{ \al_1(X) +\cfrac {1}{ \al_2(X)+\cfrac{1}{ \ddots} } } \;,
\eeq   
where $Y_0=\big(Y+P_0(X)\big)/Q_0(X)$ is a rational function on a hyperelliptic curve ${\cal C}$ defined 
by a polynomial of even degree $2g+2$, that is ${\cal C}:\,Y^2=P_0^2+Q_0Q_{-1}$, for polynomials $P_j,Q_j$   
of degrees $g+1,g$ in $X$, respectively, with the coefficents $\al_j=\al_j(X)$ in (\ref{jfracvdp}) 
being linear in $X$. It was shown in \cite{contfrac} that, for appropriate 
such $Y_0$, the shift from one line of the J-fraction to the next defines a Liouville integrable map 
on a phase space of dimension $3g+1$, which (on a generic level set of the first integrals) 
corresponds to a fixed translation on the Jacobian of (the completion of) ${\cal C}$.

There are classical results going back to Abel on the continued fraction expansion of the square root of  
an even degree polynomial (i.e.\ 
the function $Y$ on an even hyperelliptic curve $\cC$), 
although  the fact that the sequence of degrees of the coefficients $\al_j(X)$ in such an expansion 
is eventually periodic was proved only very recently \cite{zannier} (they need not all be linear in $X$, as per the above assumption about 
the function $Y_0$ in (\ref{jfracvdp})).   This is intimately related to elliptic \cite{akhiezer} and hyperelliptic analogues of orthogonal 
polynomials \cite{apt, chenits},  as well as more general  types of Pad\'e approximation problems connected with integrable systems \cite{bertola, ds}. 
In fact,  Stieltjes  continued fractions (of finite type)  were already used  in the solution of the finite Volterra lattice by Moser \cite{moser}, 
and such fractions were applied to obtain Hankel determinant solutions for non-isospectral extensions more 
recently \cite{chen}. 

Section \ref{sec:continuous} of the paper starts by considering 
the continuous Hamiltonian system that shares the same phase space 
with the Volterra map ${\cal V}_g$. After proving that this continuous system is a.c.i., we show that iteration of ${\cal V}_g$ 
(and its inverse) leads to an infinite sequence 
of meromorphic functions $\big(w_n(t)\big)_{n\in\Z}$ of $t$,  the time associated with one of the commuting Hamiltonian flows, 
providing a meromorphic solution of the Volterra lattice equation \eqref{volterra} (hence reproducing the above observation about (P.iv)  in the  
particular case $g=2$). Next we show that this also produces a meromorphic solution 
of the Toda lattice, taken in the form 
\begin{equation}\label{eq:TodaL}
\frac{\rd d_n}{\rd t}  =  d_n(v_{n-1} - v_{n})\;,  \qquad  
  \frac{\rd v_n}{\rd t}  =  d_n - d_{n+1}\;, \qquad n\in\Z, 
\end{equation}
by applying the well-known Miura transformation between the Volterra and Toda lattices. 
We further show that the latter transformation  arises naturally via the contraction procedure for continued fractions, 
due to Stieltjes \cite{stieltjes}, which combines successive pairs of lines in an S-fraction into a single line in a 
J-fraction, and thereby maps a generic solution of the Volterra map   ${\cal V}_g$ to an associated 
solution of the map generated by van der Poorten's construction in genus $g$. 
%
The paper ends in Section \ref{sec:conc}, with a few conclusions and observations concerning transformations relating solutions of (P.v) and (P.vi) to 
solutions of the map (P.iv), which we plan to  discuss in detail elsewhere. Also, in Appendix A (Section \ref{par:even_mumford}), 
a birational Poisson isomorphism is established between the genus $g$ even Mumford 
system (see \cite{vanhaecke}) and the Hamiltonian system associated with the  Volterra map  ${\cal V}_g$, and in Appendix B we provide the details of MAPLE code used to 
carry out the proof of Proposition \ref{bils9} using computer algebra.

\section{Laurentification and tau functions for the map (P.iv)}\label{sec:singularity} 
\setcounter{equation}{0}

In this section we exhibit certain phenomena displayed by the iterates of the map (P.iv), which are related to its discrete
integrability. Firstly, we describe the singularity pattern of the iterates, which is found from an empirical $p$-adic approach, and leads to the introduction of a
sequence of tau functions $\tau_n$ for these iterates. On the one hand, these tau functions satisfy a homogeneous 
recursion relation of order 7 with the Laurent property; so this is a Laurentification of (P.iv), as we state here and prove in Section \ref{par:hankel}. 
On the other hand, these tau functions are also shown to satisfy a Somos-9 relation, with invariants of (P.iv) as
coefficients. Secondly, by considering the limit where a solution of (P.iv) approaches 
a singularity, we are  led to a family of genus two curves which
will turn out to be at the core of the Stieltjes continued fractions (Section \ref{sec:volterra}) and the algebraic
integrability of (P.iv) (Section~\ref{sec:adi}).

The Laurent property is a very special feature of certain birational transformations, appearing in cluster algebras  and their generalizations \cite{fz1, lp}, 
which a priori is unrelated to integrability \cite{conf}. However, it turns out that the solutions of discrete integrable systems are often  encoded 
by tau functions satisfying relations that have the Laurent property, such as  bilinear equations of discrete Hirota type \cite{mase}.  %
Despite the fact that integrable maps occurring ``in the wild'' typically do not exhibit the Laurent phenomenon, it nevertheless seems to be a common feature of such maps 
that they  admit 
Laurentification, that is, a lift to a higher-dimensional 
relation that does have the Laurent property \cite{hhkq}. 
For some time,  
singularity analysis has been used as a tool to detect integrability of maps (see \cite{sing} and references),  and when the pattern of 
places where the solutions have a zero or pole is sufficiently simple, this can further suggest an appropriate way to introduce tau functions and perform Laurentification \cite{hkq}.   


To start with,  we apply the $p$-adic approach described in \cite{hkq} (see also \cite{kanki}) to the map (P.iv) defined  by (\ref{pivmap}),  and
derive a singularity pattern from it. This empirical approach is based on examining the prime factorization of the
terms of specific orbits $(w_n)_{n\in\N}$ defined in $\Q$, chosen arbitrarily, and considering the behaviour of the
$p$-adic norms $|w_n|_p$ for particular primes $p$.  As a concrete example, 
upon picking the specific parameter values $\nu=3$,
$a=5$, $b=7$ and setting all four initial values to be 1, we then find a sequence of rational numbers given by 
\begin{equation}\label{rational_orbit} 
  -30,\tfrac{743}{30},\tfrac{10541}{22290}, \tfrac{3819540}{7831963},
 -\tfrac{4315187227}{1342059038}, \tfrac{6624290612327}{739436079902}, -\tfrac{23965197528782842}{3649794341246183},
 -\tfrac{304709076970269230792}{118290200741883010693}, \ldots,
\end{equation}
where the latter terms factorize as
\begin{align*}%
 -2\cdot 3\cdot 5,\  \tfrac{743}{2\cdot 3\cdot 5},\  \tfrac{83\cdot127}{2\cdot
  3\cdot 5\cdot 743},\  \tfrac{2^2\cdot 3 \cdot 5 \cdot 63659}{83\cdot 127\cdot 743},\  -\tfrac{13\cdot 743\cdot
  446753}{2\cdot 83\cdot 127\cdot63659},\  \tfrac{19\cdot 83\cdot 127\cdot 1579\cdot 20947}{2\cdot 13\cdot
   63659\cdot 446753},\  \\
 -\tfrac{2\cdot 59\cdot 51593\cdot 61837\cdot 63659} {13\cdot 19\cdot
  1579\cdot20947\cdot446753},\  -\tfrac{2^3\cdot 13\cdot 967\cdot 446753\cdot 6782004923} {19\cdot 59\cdot
   1579\cdot 20947\cdot 51593\cdot 61837},\  \ldots .
\end{align*}
For several different primes, e.g.\ $p=3$, $5$, $83$, $127$, $743, \ldots$, this reveals a common pattern whereby,
for some~$n$,
\begin{equation}\label{eq:p-adic_norms}
  |w_{n}|_p=p^{-1}\;,\quad |w_{n+1}|_p=p\;,\quad |w_{n+2}|_p=p\;,\quad |w_{n+3}|_p=p^{-1}\;,
\end{equation}
with the prime $p$ being absent from the factorization on the previous and on the next terms: 
$|w_{n-1}|_p=|w_{n+4}|_p=1$. The $p$-adic norms (\ref{eq:p-adic_norms}) identify places where the orbit of the map over the finite field $\F_p$ has a zero or pole, 
as well as the order of these \cite{kanki}. 
Since the recurrence (\ref{pivmap}) defines a birational map, any orbit defined for 
$n\geqslant 0$ can be extended to $n<0$ (at least, provided that it does not reach a singularity, where $w_n=0$ for some $n$; but see Corollary \ref{generics} below). 
Hence, the pattern~\eqref{eq:p-adic_norms} suggests that  for $n\in\Z$ one should make the tau function 
substitution
\begin{equation}\label{pivtau}
  w_n=\frac{\tau_{n}\tau_{n+3}}{\tau_{n+1}\tau_{n+2}}\;,
\end{equation}
so that the places where a prime factor $p$ appears in the numerators or denominators of the sequence
$(w_n)_{n\in\Z}$ can be encoded 
by the appearance of the factor $p$ in the terms of the tau function sequence
$(\tau_n)_{n\in\Z}$.  These tau functions can be defined recursively, in two quite
different ways:
\begin{propn}\label{bils9}  
Suppose that $(w_n)_{n\in\Z}$ is a solution of (\ref{pivmap}). Then the corresponding sequence $(\tau_n)_{n\in\Z}$ 
satisfies
\begin{enumerate}
\item[(1)] A homogeneous recurrence of order 7 and degree 8:
  \beq\label{taurec}
\begin{array}{ccl}
\tau_{n+7}\tau_{n+4}^2\tau_{n+3}^3\tau_{n+2}^2+\tau_{n+6}^2\tau_{n+3}^4\tau_{n+2}^2 
+2\tau_{n+6}\tau_{n+5}^2\tau_{n+3}^2\tau_{n+2}^3 
+\tau_{n+6}\tau_{n+5}\tau_{n+4}^2\tau_{n+3}^2\tau_{n+2}\tau_{n+1} 
&& \\ 
+\, \tau_{n+5}^4\tau_{n+2}^4 
+2\tau_{n+5}^3\tau_{n+4}^2\tau_{n+2}^2\tau_{n+1} 
+\tau_{n+5}^2\tau_{n+4}^4\tau_{n+1}^2 
+\tau_{n+5}^2\tau_{n+4}^3\tau_{n+3}^2\tau_{n} 
&& \\
+\, \nu (\tau_{n+6}\tau_{n+5}\tau_{n+4}\tau_{n+3}^3\tau_{n+2}^2
+\tau_{n+5}^3\tau_{n+4}\tau_{n+3}\tau_{n+2}^3    
+\tau_{n+5}^2\tau_{n+4}^3\tau_{n+3}\tau_{n+2}\tau_{n+1} ) 
&& \\ 
+\,b\,  \tau_{n+5}^2\tau_{n+4}^2\tau_{n+3}^2\tau_{n+2}^2
+a\, \tau_{n+5}\tau_{n+4}^3\tau_{n+3}^3\tau_{n+2}= 0\;.
\end{array} 
\eeq  
\item[(2)] A (generalized) Somos-9 recurrence:
\begin{equation}\label{s9iv}
  \al_1\, \tau_{n+9}\tau_{n} +\al_2\, \tau_{n+8}\tau_{n+1} +\al_3\, \tau_{n+7}\tau_{n+2}
  +\al_4\, \tau_{n+6}\tau_{n+3} +\al_5\, \tau_{n+5}\tau_{n+4}=0\;. 
\end{equation} 
The coefficients $\al_i$ are polynomial functions of the parameters of the map and of the values $k_1,k_2$ of the
invariants $K_1, K_2$, hence are constant along each orbit $(w_n)_{n\in\Z}$. They are given by
$$
\begin{array}{c} 
\al_1=k_1, \qquad
\al_2=ak_2-k_1^2, \qquad 
\al_3 = a\Big(ak_2-2k_1^2\Big), \\  
\al_4 = a\Big(k_2^2+\nu k_1k_2+bk_1^2+a^2k_1\Big), \qquad
\al_5 = -k_1\Big(k_2^2+\nu k_1k_2+bk_1^2+a^2k_1\Big)\;.
\end{array} 
$$ 
\end{enumerate}
\end{propn}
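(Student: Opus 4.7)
The plan is to verify both parts by direct substitution, reducing each to an algebraic identity in the $\tau_n$ and the parameters; part (1) is essentially bookkeeping, while part (2) uses the conservation laws $K_j=k_j$ and is best handled by computer algebra.

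For part (1), I would substitute $w_n=\tau_n\tau_{n+3}/(\tau_{n+1}\tau_{n+2})$ together with its shifts into the recurrence (\ref{pivmap}). Each monomial in the $w_j$ becomes a Laurent monomial in $\tau_n,\dots,\tau_{n+7}$; after multiplying through by the common monomial chosen to clear all denominators simultaneously, the resulting polynomial identity is precisely (\ref{taurec}). The degree 8 and order 7 are forced by this denominator-clearing step, and the homogeneity of each term (total $\tau$-exponent $8n+28$, as one checks on a few monomials) is consistent with the gauge freedom $\tau_n\mapsto c\,d^n\,\tau_n$ intrinsic to the substitution (\ref{pivtau}). No input beyond (\ref{pivmap}) itself is required.

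For part (2), the Somos-9 identity (\ref{s9iv}) is not a consequence of (\ref{taurec}) alone, but encodes the two conservation laws along each orbit. I would proceed by ansatz, treating the coefficients $\alpha_1,\dots,\alpha_5$ as unknowns to be sought as polynomials in $a,b,\nu,k_1,k_2$. Using (\ref{taurec}) iteratively to express $\tau_{n+8}$ and $\tau_{n+9}$ as rational functions of $\tau_n,\dots,\tau_{n+7}$, and using the tau expressions of $K_1,K_2$ obtained by substituting (\ref{pivtau}) into (\ref{pivh1}) and (\ref{pivh2}) with $K_j=k_j$, the relation (\ref{s9iv}) becomes, after clearing denominators, a polynomial identity in $\tau_n,\dots,\tau_{n+7}$ and the parameters. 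Equating coefficients in the $\tau$-monomials gives a linear system for the $\alpha_i$, whose unique solution (up to overall scaling) coincides with the values stated in the Proposition, and in particular confirms that the $\alpha_i$ depend only on $k_1,k_2$ and the parameters, not on the tau variables themselves.

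The main obstacle is the sheer size of the expressions in part (2): the Laurent polynomial obtained after substituting $\tau_{n+8},\tau_{n+9}$ into (\ref{s9iv}) and reducing modulo $K_j=k_j$ involves high multi-degrees in eight tau variables, so a direct pen-and-paper verification is not realistic. The rigorous check is therefore performed symbolically, as detailed in Appendix B, by confirming that with the stated $\alpha_i$ the left-hand side of (\ref{s9iv}) vanishes identically once (\ref{taurec}) and the conservation laws are imposed. A more conceptual derivation of the Somos-9 structure, via the algebro-geometric interpretation of the level sets $K_j=k_j$ as affine parts of the Jacobian of the curve (\ref{wg2}), will emerge naturally in Sections \ref{sec:volterra} and \ref{sec:adi}, where the coefficients $\alpha_i$ reflect linear relations in a finite-dimensional space of theta functions.
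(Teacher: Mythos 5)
Your part (1) is exactly the paper's proof: direct substitution of (\ref{pivtau}) into (\ref{pivmap}) followed by clearing denominators, and your homogeneity check (total index $8n+28$ in each term) is a correct sanity test. For part (2) you reach the same conclusion by a genuinely different organization of the computer-algebra step. The paper implements the method of \cite{hones6}: it stacks \emph{five} shifted copies of (\ref{s9iv}) into a matrix equation $\mathrm{M}_n\,{\boldsymbol{\alpha}}=\mathbf{0}$, uses (\ref{taurec}) to write every entry of $\mathrm{M}_0$ in terms of the seven initial tau functions, checks $\det(\mathrm{M}_0)=0$, extracts the one-dimensional kernel, and then verifies that the kernel vector is invariant under the index shift $j\to j+1$ of the initial data; this last step is what establishes, with no ansatz whatsoever, that the $\al_i$ are constant along orbits, and only afterwards are the kernel components recognized as the stated polynomials in $k_1,k_2,a,b,\nu$. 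Your single-copy scheme instead posits the $\al_i$ as polynomials in the invariants and reduces one instance of (\ref{s9iv}) modulo (\ref{taurec}) and $K_j=k_j$. That is legitimate, but be aware that one scalar relation cannot fix five coefficients up to scale: the determination only materializes after you insert a bounded-degree ansatz in $k_1,k_2$ and equate coefficients of the tau-monomials, or, as you propose as the fallback, simply verify the identity with the stated $\al_i$, which is airtight. The paper's five-copy/kernel route buys simultaneous discovery of the coefficients and of their orbit-invariance from pure linear algebra; yours buys a cleaner one-shot verification at the price of presupposing the answer or a finite ansatz. One caveat on your closing remark: the hoped-for conceptual theta-function derivation of the Somos-9 relation is not actually carried out in the paper --- only the Somos-8 relation for the even/odd index subsequences is explained that way, via contraction and the Toda connection.
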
 
\begin{proof}
The proof of \emph{(1)} is by a direct substitution of \eqref{pivtau} in \eqref{pivmap}, while \emph{(2)} is
derived by implementing the method from \cite{hones6}, which involves computing determinants of matrices
with entries that are of degree two in tau functions. 
More precisely, for \emph{(2)} one should write five copies of the Somos-9 relation as a 
matrix equation $\mathrm{M}_n\,{\boldsymbol{\alpha}}=\mathbf{0}$, that is 
$$ 
\left( \begin{array}{ccccc} 
\tau_{n+9}\tau_n & \tau_{n+8}\tau_{n+1} & \tau_{n+7}\tau_{n+2} & \tau_{n+6}\tau_{n+3} & \tau_{n+5}\tau_{n+4} \\ 
\tau_{n+10}\tau_{n+1} & \tau_{n+9}\tau_{n+2} & \tau_{n+8}\tau_{n+3} & \tau_{n+7}\tau_{n+4} & \tau_{n+6}\tau_{n+5} \\ 
\tau_{n+11}\tau_{n+2} & \tau_{n+10}\tau_{n+3} & \tau_{n+9}\tau_{n+4} & \tau_{n+8}\tau_{n+5} & \tau_{n+7}\tau_{n+6} \\ 
\tau_{n+12}\tau_{n+3} & \tau_{n+11}\tau_{n+4} & \tau_{n+10}\tau_{n+5} & \tau_{n+9}\tau_{n+6} & \tau_{n+8}\tau_{n+7} \\ 
\tau_{n+13}\tau_{n+4} & \tau_{n+12}\tau_{n+5} & \tau_{n+11}\tau_{n+6} & \tau_{n+10}\tau_{n+7} & \tau_{n+9}\tau_{n+8} 
\end{array} 
\right) \, 
\left( \begin{array}{c} \al_1 \\ \al_2 \\ \al_3 \\ \al_4 \\ \al_5
\end{array} 
\right) = \left( \begin{array}{c} 0 \\ 0 \\ 0 \\ 0 \\ 0
\end{array} 
\right)
$$ 
and then verify that there is a non-zero vector 
of coefficients ${\boldsymbol{\alpha}}$, lying in the kernel of $\mathrm{M}_n$, that is independent of $n$. By iterating \emph{(1)}, all of the entries of the matrix $\mathrm{M}_n$ 
can be written in terms of a fixed set of 7 initial tau functions; so say for $n=0$ one can take $\tau_j$ for $0\leq j\leq 6$ as initial data, verify directly that $\det ( \mathrm{M}_0)=0$, then find a vector ${\boldsymbol{\alpha}}$ in the kernel of this matrix, and check that it is invariant under a shift of index $j\to j+1$ applied to all of these initial $\tau_j$, hence each of the 
components $\al_1,\ldots,\al_5$ is constant along an orbit.   
The latter calculations all require extensive use of computer
algebra (see Appendix B). Note that, in contrast to \emph{(1)}, there is not a strict equivalence between solutions of (\ref{pivmap}) and solutions of the Somos-9 relation \emph{(2)}, because one cannot choose 9 initial data arbitrarily; rather, the 
latter relation is only satisfied for particular sequences $(\tau_n)$, specified by 7 initial tau functions, with the coefficients $\al_j$ being fixed by these and the parameters $a,b,\nu$. 
\end{proof}
The recursion defined by (\ref{pivmap}) requires 4 initial values, while (\ref{taurec}) requires 7, and the
discrepancy between the two is described by the three-parameter group $(\C^*)^3$ of gauge transformations, with
action given by
\begin{equation}\label{gauge}
  \tau_n\mapsto A_\pm \, B^n\, \tau_n\;, \qquad A_+,A_-,B\in\C^*\;,
\end{equation}
corresponding to the freedom to rescale even/odd terms by a different factor $A_\pm$, and apply a rescaling $B^n$
that is exponential in $n$ to all terms. Any 4 non-zero initial values $w_0,\dots,w_3$ of (\ref{pivmap}) allow a
corresponding set of non-zero initial data to be determined for (\ref{taurec}), up to this gauge freedom; for example, we may
 take as corresponding initial data $\tau_n=1$ for $n=0,1,2$ and $\tau_3=w_0$, $\tau_4=w_0w_1$,
$\tau_5=w_0^2w_1w_2$ and $\tau_6=w_0^2w_1^2w_2w_3$, which are polynomials (in fact, monomials) 
in $w_0,w_1,w_2,w_3$. Notice that \eqref{taurec} can also be solved rationally for
$\tau_{n}$ in terms of $\tau_{n+1},\dots,\tau_{n+7}$ so that the sequence 
$(\tau_n)$ is actually defined for all
$n\in\Z$.

\begin{exa}\label{experitau} In the case $\nu=3$, $a=5$, $b=7$, taking the initial values $\tau_j=1$ for $0\leqslant j\leqslant 6$ in (\ref{taurec}) yields the sequence of tau functions beginning 
\beq\label{rattauseq} 
1,1,1,1,1,1,1,-30,-743,10541,127318, 5807789,628430947,-188231024119,52465590084328, \ldots, 
\eeq 
which consists of integers, and the terms after the initial 1s have prime factorizations 
given by 
$-2\cdot3\cdot 5,
-743,
83\cdot 127, 
2\cdot 63659,
13\cdot 446753,
19\cdot 1579\cdot 20947,
-59\cdot61837\cdot51593,
2^3\cdot967\cdot6782004923$, etc. These correspond to the prime factors appearing in the numerators and denominators in the particular sequence of rational values of $w_n$ illustrated in (\ref{rational_orbit}) above. Due to a reversing symmetry of the recurrence (\ref{taurec}), this sequence extends backwards to $n<0$ in such a way that 
the property 
$\tau_{6-n}=\tau_n$ holds for all $n\in\Z$, since the 7 initial data have this symmetry. 
Furthermore, these tau functions also satisfy the Somos-9 recurrence 
\beq\label{s9ex} 
 28\, \tau_{n+9}\tau_{n} -239\, \tau_{n+8}\tau_{n+1} -5115\, \tau_{n+7}\tau_{n+2}
  +136125\, \tau_{n+6}\tau_{n+3} -762300\, \tau_{n+5}\tau_{n+4}=0,
\eeq  
which corresponds to (\ref{s9iv}) with the  coefficients $\al_j$ being fixed (up to overall rescaling) by the specified choices of $\nu,a,b$, together with the fact that the first integrals take the  values 
$K_1=28$, $K_2=109$. For any solution of  (\ref{taurec}), the subsequences consisting of even/odd index terms, that is $\hat{\tau}_n=\tau_{2n}$ or $\tau_{2n+1}$, respectively, also satisfy a Somos-8 relation, of the form 
\begin{equation}\label{s8iv}
 \hat{\al}_1\,  \hat{\tau}_{n+8} \hat{\tau}_{n} +\hat{\al}_2\,  \hat{\tau}_{n+7} \hat{\tau}_{n+1} + \hat{\al}_3\,  \hat{\tau}_{n+6} \hat{\tau}_{n+2}
  +  \hat{\al}_4\,   \hat{\tau}_{n+5}  \hat{\tau}_{n+3} +  \hat{\al}_5\,   \hat{\tau}_{n+4}^2=0\;. 
\end{equation} 
For the particular integer sequence above,  up to overall scale 
the 
coefficients are given by
$$
\begin{array}{c} 
 \hat{\al}_1=195848, \qquad
 \hat{\al}_2=-61660241775, \qquad 
  \hat{\al}_3 = 13236763233189375, \\  
  \hat{\al}_4 = -8064076031989579800, \qquad
  \hat{\al}_5 = -3603810041796109733\;.
\end{array} 
$$ 
The relation (\ref{s8iv}) can be regarded as an ordinary difference reduction of a constraint 
for a tau function defined on a multidimensional lattice, which arises from a  
Hermite-Pad\'e approximation problem (cf.\ equation (2.10) in \cite{ds}). 
An explanation for why this Somos-8 relation must hold will be provided in Section \ref{sec:continuous}, via the connection with the Toda lattice and the results in \cite{contfrac}.
\end{exa} 

\smallskip

The following proposition shows that the recurrence \eqref{taurec} is a Laurentification of (P.iv). In particular, this explains why the tau functions in 
the preceding example are all integers.

\begin{propn}\label{laurentpiv} 
  The recurrence (\ref{taurec}) has the Laurent property. More precisely, for all $n\in\Z$,
  \begin{equation*}
    \tau_n \in \Z[a,b,\nu,\tau_0,\tau_1,\tau_2^{\pm 1},\tau_3^{\pm 1},\tau_4^{\pm 1},\tau_5,\tau_6]\;.
  \end{equation*} 
\end{propn}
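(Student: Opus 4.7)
The plan is to combine an explicit initial computation with an inductive step in the style of the Fomin--Zelevinsky caterpillar lemma, adapted to higher-order recurrences as in \cite{hhkq}. First I would solve the recurrence \eqref{taurec} for $\tau_{n+7}$ in the form
$$\tau_{n+7}=\frac{-\mathcal{N}_n(\tau_n,\ldots,\tau_{n+6};a,b,\nu)}{\tau_{n+2}^{2}\tau_{n+3}^{3}\tau_{n+4}^{2}},$$
where $\mathcal{N}_n$ is the polynomial formed by the remaining eight monomial terms in \eqref{taurec}. The analogous resolution for $\tau_n$ in terms of $\tau_{n+1},\ldots,\tau_{n+7}$ (using the term $\tau_{n+5}^{2}\tau_{n+4}^{3}\tau_{n+3}^{2}\tau_{n}$) allows backward iteration, so the sequence $(\tau_n)_{n\in\Z}$ is well defined as an element of the Laurent polynomial ring $\Z[a,b,\nu][\tau_0^{\pm 1},\ldots,\tau_6^{\pm 1}]$.

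Second, working in computer algebra, I would compute $\tau_7,\tau_8,\ldots,\tau_{13}$ (and by the reversing symmetry $\tau_{6-n}=\tau_n$ observed in Example \ref{experitau}, also $\tau_{-1},\ldots,\tau_{-7}$) as explicit rational functions and verify by cancellation that each lies in the target ring $R=\Z[a,b,\nu,\tau_0,\tau_1,\tau_2^{\pm 1},\tau_3^{\pm 1},\tau_4^{\pm 1},\tau_5,\tau_6]$. This establishes a base window of seven consecutive Laurent polynomials of the required shape, from which the symmetric base of seven on the other side also follows.

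Third, for the inductive step I would establish that if $\tau_k,\ldots,\tau_{k+6}\in R$ for some $k\geqslant 1$, then $\tau_{k+7}\in R$. Writing the relation for $\tau_{k+7}$, one sees that the only primes in $R$ that could potentially appear in its denominator are $\tau_{k+2},\tau_{k+3},\tau_{k+4}$, which are themselves allowed denominators once $k\geqslant 0$ only if they are among $\tau_2,\tau_3,\tau_4$. The content of the induction is therefore a coprimality check: the numerator $\mathcal{N}_k$, when reduced modulo each of $\tau_{k+2}$, $\tau_{k+3}$, $\tau_{k+4}$, must vanish to the requisite order. This can be verified directly from \eqref{taurec}, since modulo (say) $\tau_{k+3}$ only the two monomials $\tau_{k+5}^{4}\tau_{k+2}^{4}$ and $\tau_{k+5}^{2}\tau_{k+4}^{4}\tau_{k+1}^{2}$ survive on the left-hand side, and one checks inductively using the previous instance of \eqref{taurec} that their sum is divisible by the appropriate power of $\tau_{k+3}$. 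The analogous checks modulo $\tau_{k+2}$ and $\tau_{k+4}$ proceed by the same method.

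The main obstacle will be making the inductive coprimality step uniform in $k$: a direct verification modulo each of the three pivots produces sprawling polynomial identities, and the natural way to close the induction is probably via the Hankel determinant representation to be established in Section \ref{par:hankel}. In that setting each $\tau_n$ is (up to the gauge \eqref{gauge}) a determinant of an explicit matrix of moments that are themselves polynomial in $\tau_0,\tau_1,\tau_5,\tau_6$ and Laurent-polynomial in $\tau_2,\tau_3,\tau_4$, whence the claimed ring membership can be read off directly from the determinantal formula. I would therefore expect the cleanest route to combine the explicit computer-algebra verification of the base window with the Hankel representation for the inductive extension, which is exactly the strategy suggested by the remark preceding the proposition.
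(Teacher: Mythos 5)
Your primary route --- the caterpillar-style induction with coprimality checks modulo $\tau_{k+2},\tau_{k+3},\tau_{k+4}$ --- is precisely the one the paper declines to pursue: it notes that a direct application of Theorem 2 of \cite{hhkq} works in principle but ``quickly gets out of hand,'' and indeed your own sketch of the check modulo $\tau_{k+3}$ is already off (three monomials survive, not two: you omit the cross term $2\tau_{k+5}^{3}\tau_{k+4}^{2}\tau_{k+2}^{2}\tau_{k+1}$, and the surviving sum is $\tau_{k+5}^{2}(\tau_{k+5}\tau_{k+2}^{2}+\tau_{k+4}^{2}\tau_{k+1})^{2}$). You do correctly guess at the end that the Hankel determinant representation of Section \ref{par:hankel} is the intended vehicle, and that is what the paper uses. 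But your description of how that representation closes the argument --- that the moments lie in the target ring so ``the claimed ring membership can be read off directly from the determinantal formula'' --- skips the one step that makes the proof work, and as stated it fails.

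The difficulty is that the identification $\tau_n\propto\Delta_{n-3}$ involves gauge prefactors $\tau_{2k+1}=\tau_1(\tau_3/\tau_1)^k\Delta_{2k-2}$ and $\tau_{2k+2}=\tfrac{\tau_1\tau_2}{\tau_3}(\tau_3/\tau_1)^{k+1}\Delta_{2k-1}$, which carry \emph{negative} powers of $\tau_1$ for large $k$, while $\tau_1$ is not invertible in the target ring $R$. Knowing only that the Hankel entries lie in $\mathcal R$ (Laurent in $\tau_2,\tau_3,\tau_4$, polynomial in the rest) therefore does not suffice. The paper's actual argument tracks $\tau_1$ carefully: the coefficients $\hat\gamma,\hat\delta$ of the moment recursion \eqref{g2rec} lie in $\tau_1^{-1}\mathcal R$ but multiply only the quadratic-in-$s$ terms, while $s_1,s_2,s_3\in\tau_1\mathcal R$; an induction on \eqref{g2rec} then gives $s_j\in\tau_1\mathcal R$ for all $j$, so each $k\times k$ Hankel determinant acquires a factor $\tau_1^{k}$ (one from each row), and these powers cancel exactly against the $\tau_1^{1-k}$ in the gauge prefactors. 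That $\tau_1$-divisibility bookkeeping is the content of the proof and is absent from your proposal; without it neither your determinantal shortcut nor your base-window-plus-induction scheme yields membership in $R$.
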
 

In principle, for $n\geqslant0$ the proof of the proposition is  a direct application of Theorem 2 in
\cite{hhkq}, and the same method of proof applies for $n<0$, because the  map defined by (\ref{taurec}) is birational. 
However, the formal verification to be done by this method quickly  gets out of hand, since the
rational functions obtained as the formulae for the first few iterations of the map defined by \eqref{taurec} soon become
too complicated for the checks to be carried out by a simple computer algebra program. 
A general proof of Proposition~\ref{laurentpiv} that does not require any computer algebra will
be given in Section \ref{par:hankel}. 

\begin{remark} As mentioned in the introduction, through the relation (\ref{fifth}), the special case $a=0$ of the (P.iv) map 
is closely linked to a difference equation appearing in the work of Svinin, and to a reduction 
of the lattice KdV equation.
 Thus, when the parameter $a=0$, the result of Proposition \ref{laurentpiv} is subtly related to the fact that  the order 7, degree 6 recurrence 
\beq\label{alberec} 
\begin{array}{ccl}
\tau_{n+7}\tau_{n+4}^2 \tau_{n+3}\tau_{n+2}\tau_{n+1} 
+\tau_{n+6}\tau_{n+5} \tau_{n+4}\tau_{n+3}^2\tau_{n}
+\tau_{n+6}^2\tau_{n+3}^2 \tau_{n+2}\tau_{n+1} 
+\tau_{n+6}\tau_{n+5} \tau_{n+4}^2\tau_{n+1}^2 
&& 
\\
+\tau_{n+6}\tau_{n+5}^2 \tau_{n+2}^2\tau_{n+1}
-\al \tau_{n+5}\tau_{n+4}^2 \tau_{n+3}^2\tau_{n+2} 
+\be \tau_{n+6}\tau_{n+5}\tau_{n+4}\tau_{n+3} \tau_{n+2}\tau_{n+1}
=0 
\end{array} 
\eeq
has the Laurent property, in the sense that it 
generates Laurent polynomials in $\tau_0,\ldots,\tau_6$ with coefficients in 
$\Z[\al,\be]$ (the case  $N=4$ of Proposition 2.3 in
\cite{hkq}). 
The point is that (\ref{pivtau}) coincides with the tau function substitution found for maps 
related with lattice KdV reductions in \cite{hkq}, and the relation (\ref{alberec}) holds for 
solutions of   (\ref{taurec}) obtained by setting 
$a=0$, $\nu=\be$, and taking an orbit for which the value of the first integral $K_1$ is 
fixed to be $k_1=-\al$.  
\end{remark}  
\begin{exa} Upon taking $a=0$, $b=-17$, $\nu=-11$ and making a specific choice of 7 integer initial values for (\ref{taurec}), with the 3 central values fixed to be 1, we generate an integer sequence that begins as follows:
$$ 
3,2,1,1,1,4,5,699,-25626,453024,-112570254, 23354432973,61327997061471, 
-35520663450983076, \ldots .
$$ 
Then we see that this sequence also satisfies the relation (\ref{alberec}) with 
$\be=-11$ and $\al=-k_1=327$. Also, since three of the coefficients in (\ref{s9iv}) vanish when $a=0$, 
we see that the Somos-9 relation for this sequence takes the shorter (three-term) Gale-Robinson form 
$$ 
\tau_{n+9}\tau_n +327\,\tau_{n+8}\tau_{n+1}+3850083\,\tau_{n+5}\tau_{n+4}=0, 
$$ 
which is a reduction of the discrete Hirota equation, and is in agreement with the $N=4$ case of 
Theorem 1.1 in \cite{hkq}.
\end{exa} 
\smallskip

The Laurent property for (\ref{taurec}), together with the formula (\ref{pivtau}), immediately implies that the
generic orbit $(w_n)$ of (P.iv) is well-defined, as stated in the following corollary.

\begin{cor} \label{generics}
  For generic non-zero initial values $(w_0,w_1,w_2,w_3)\in(\C^*)^4$, the orbit $(w_n)_{n\in\Z}$ exists, with
  $w_n\in \Pro^1=\C\cup\{\infty\}$.
\end{cor}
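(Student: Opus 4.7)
The plan is to leverage Proposition~\ref{laurentpiv} together with the tau function substitution (\ref{pivtau}). Given any $(w_0,w_1,w_2,w_3)\in(\C^*)^4$, I would first make the monomial choice of initial tau values $\tau_0=\tau_1=\tau_2=1$, $\tau_3=w_0$, $\tau_4=w_0w_1$, $\tau_5=w_0^2w_1w_2$, $\tau_6=w_0^2w_1^2w_2w_3$ (as mentioned just before Example~\ref{experitau}). These are manufactured to satisfy (\ref{pivtau}) for $n=0,1,2,3$, reproducing the prescribed initial values of the (P.iv) orbit, and moreover the three values $\tau_2,\tau_3,\tau_4$ are nowhere vanishing on the torus $(\C^*)^4$.

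Next, by Proposition~\ref{laurentpiv}, for every $n\in\Z$ the tau function $\tau_n$ lies in the ring $\Z[a,b,\nu,\tau_0,\tau_1,\tau_2^{\pm 1},\tau_3^{\pm 1},\tau_4^{\pm 1},\tau_5,\tau_6]$. Substituting the above monomial expressions for the initial tau values (which are units precisely where they need to be inverted) produces $\tau_n$ as a Laurent polynomial in $w_0,w_1,w_2,w_3$ with coefficients in $\Z[a,b,\nu]$. Consequently, $\tau_n\in\C$ is well-defined for all $n\in\Z$ and for every choice of initial data in $(\C^*)^4$.

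Finally, substituting back into (\ref{pivtau}) gives $w_n=\tau_n\tau_{n+3}/(\tau_{n+1}\tau_{n+2})$ as a ratio of two well-defined complex numbers, hence an element of $\Pro^1=\C\cup\{\infty\}$ unless both numerator and denominator vanish simultaneously. The locus in $(\C^*)^4$ where this occurs is contained in the zero set of the product of two non-identically-zero Laurent polynomials, hence is a proper Zariski-closed subvariety of $(\C^*)^4$ (non-emptiness of the orbit in $\C$ for a particular choice, such as the rational example giving (\ref{rational_orbit}), confirms that these polynomials are not identically zero). Therefore, for generic initial values $(w_0,w_1,w_2,w_3)\in(\C^*)^4$, the entire sequence $(w_n)_{n\in\Z}$ is well-defined in $\Pro^1$.

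There is no real obstacle here: the content is entirely contained in Proposition~\ref{laurentpiv}, and the only thing to check is that the exceptional $0/0$ locus is thin, which follows from the existence of at least one non-singular orbit. The same argument applies to both $n\geqslant 0$ and $n<0$ since the Laurent statement of Proposition~\ref{laurentpiv} holds for all $n\in\Z$.
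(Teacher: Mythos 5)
Your argument is correct and follows essentially the same route as the paper: both reduce the claim to the fact (established in Section \ref{par:hankel}) that the tau functions, with a suitable monomial choice of initial $\tau_j$, are regular functions of $(w_0,w_1,w_2,w_3)$ on $(\C^*)^4$, then observe that for each fixed $n$ the indeterminacy locus of $w_n=\tau_n\tau_{n+3}/(\tau_{n+1}\tau_{n+2})$ is contained in a proper Zariski-closed subset, and conclude by intersecting over all $n\in\Z$. The only point worth making explicit in your last step is that this is a countable intersection of dense Zariski-open sets, hence residual and dense but not necessarily open — precisely the caveat the paper records in the remark following the corollary.
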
 
\begin{proof}
As we shall see in Section \ref{par:hankel}, the initial tau functions can be chosen so that all $\tau_n$ are
polynomials in the initial data for (\ref{pivmap}). For a fixed index $n$ it is then clear from (\ref{pivtau}) that $w_n$ is an
indeterminate element of $\Pro^1$ only when at least two out of three successive tau functions in (\ref{pivtau})
vanish, i.e.\ belong to a certain proper Zariski closed subset of the space of (non-zero) initial data for
(\ref{pivmap}). Considering this condition for all $n$ yields a subset of this space of initial data, which is the
intersection of a countable family of Zariski open subsets. Such an intersection is a residual, hence dense, subset
so that for generic initial data, the orbit is well-defined.
\end{proof} 

Note that (P.iv) was originally defined as a birational affine map in $\C^4$, but the above corollary allows the existence of certain orbits defined in 
$(\Pro^1)^4$.
Notice also that this corollary does not state that the subset of initial data for which the orbit exists is open. This
stronger statement will follow from algebraic integrability, without use of the tau functions (see Section~\ref{sec:adi}).

\smallskip

As an initial foray into the geometry of the map, defined by (\ref{pivmap}), we now consider the singularity
pattern in more detail, by taking three non-zero initial values $w_0,w_1,w_2\in\C^*$ followed by a fourth
value proportional to a small parameter $\eps\in\C$, and consider the behaviour of the solution in the limit $\eps\to
0$. To reformulate this in terms of tau functions, we set
$$
  \tau_0=Z\;, \;\tau_1=\tau_2=\tau_3=1,\,\tau_4=X\;,\;\tau_5=Y\;,\;\tau_6=\eps\;,\; \quad XYZ\neq 0\;, 
$$
where three adjacent values have been set equal to 1 by a choice of gauge. This gives, using \eqref{pivtau}, four
non-zero initial values
\begin{equation}\label{wvals}
  w_0=Z\;,\quad w_1=X\;,\quad w_2=\frac{Y}{X}\;,\quad w_3=\frac{\eps}{XY}\;,
\end{equation}
for the map (\ref{pivmap}), such that the fourth value $w_3\to 0$ as $\eps\to 0$. Upon substituting these values in \eqref{pivmap}
we find as subsequent values 
$$ 
  w_4=C_4\, \eps^{-1}+O(1)\;, \quad w_5=C_5\,\eps^{-1}+O(1)\;,\quad  w_6 =C_6\,\eps+O(\eps^2)\;, \quad w_7=C_7+O(\eps)\;, 
$$
for certain coefficients $C_j$ which are rational functions of $X,Y,Z$. Notice that the leading order behaviours of $w_3,w_4,w_5,w_6$ are $\eps,\eps^{-1},\eps^{-1},\eps$, respectively, with terms of $O(1)$ on either side, which  corresponds to the
singularity pattern (\ref{eq:p-adic_norms}) obtained above by the $p$-adic method. Now if we substitute the initial values (\ref{wvals})
into $K_1=k_1$,\; $K_2=k_2$, where $K_1$ and $K_2$ are the invariants (\ref{pivh1}) and (\ref{pivh2}), and take the
limit $\eps\to 0$, then (after clearing denominators) we get two polynomial relations between $X,Y,Z$, which define
an affine algebraic curve. Upon taking resultants with respect to $Z$, this yields a single relation between $X$
and $Y$, namely 
\begin{align}\label{eq:curve_non_wei}
  & \big(aY^2 - (\nu k_1 +k_2)Y - a k_1\big) X^4 + \big((a\nu - k_1) Y^2 + (a^2 - b k_1) Y + k_1^2\big) X^3 \\ 
  &+\big(2aY^3 + (ab - \nu k_1) Y^2 - 2ak_1 Y\big) X^2 + Y^2\big((a\nu - k_1) Y + a^2\big) X + aY^4=0\;. \nonumber
\end{align} 
This plane curve is birationally equivalent to the Weierstrass quintic (\ref{wg2}), which can be seen from the
transformation
\begin{equation}\label{eq:curve_to_wei}
  x=\frac{X(aY-k_1X)}{Y(aX^2 - k_1X + aY)}\;, 
\end{equation}
together with a corresponding formula for $y=y(X,Y)$, which is rather unwieldy and so is omitted. 
For generic values of $k_1,k_2$, the curve \eqref{eq:curve_non_wei} is a smooth hyperelliptic curve of genus 2. As
already said, in its Weierstrass form \eqref{wg2}, this family of curves will play an important role in all that
follows.

\section{S-fractions on hyperelliptic curves and Volterra maps}\label{sec:volterra} 
\setcounter{equation}{0}

In this section, for a fixed integer $g>0$, we introduce an affine space of triplets of polynomials, reminiscent of
the phase space of the Mumford system \cite{tata2,vanhaecke} and use Stieltjes continued fractions (S-fractions) to
construct a series of birational automorphisms of the affine space, indexed by $g$, which we describe in several
ways. When $g=1$ we recover several known integrable maps, and for $g=2$ we recover the map (P.iv), which was 
been the primary motivation for this study, while the maps for $g>2$ appear to be new. We also give solutions in
terms of Hankel determinants of the iterates of these maps, i.e., of the corresponding recursion relations.


\smallskip

For a fixed $g>0$ we consider the affine space
\begin{equation} \label{Mg}
  M_g:=\left\{(\cP(x),\cQ(x),\cR(x))\in\C[x]^3  ~\Bigl|~
  \begin{array}{ll}
    \deg\cP(x) \leqslant g\;,&\cP(0)=1\\
    \deg\cQ(x) \leqslant g\;,&\cQ(0)=2\\
    \deg\cR(x) \leqslant g+1\;,&\cR(0)=0
  \end{array}
 \right\}\;.
\end{equation}
It is clear that $M_g$ is an affine space of dimension $3g+1$: writing
\begin{equation}\label{PQform}
  \cP(x)=1+\sum_{i=1}^g\p ix^i\;,  \qquad
  \cQ(x)=2+\sum_{i=1}^g\q ix^i\;,  \qquad
  \cR(x)=\sum_{i=1}^{g+1}\r ix^i\;,
\end{equation}
the coefficients $\p1,\dots,\p g,\q1,\dots,\q g,\r1,\dots,\r{g+1}$ provide a natural system of linear coordinates
on $M_g$.  We will often write an element $(\cP,\cQ,\cR)$ of $M_g$ as a traceless $2\times 2$ matrix
\begin{equation}\label{eq:def_lax}
  \lax(x) := \left(\begin{array}{lr} 
    \cP(x) & \cR(x)\\ 
    \cQ(x) & -\cP(x) 
  \end{array}\right)\;,
\end{equation}
which will later serve  as a Lax operator, and think of $M_g$ as an affine space of matrices (Lax operators). It is
then natural to consider the map $\mu$  
defined by
\begin{equation}\label{eq:momentum_map}
  \begin{array}{lcrcl}
    \mu&:&M_g&\to&\C[x]\\
    & & \lax(x)=\left(\begin{array}{lr}
  \cP(x)&\cR(x)\\
  \cQ(x)&-\cP(x)
  \end{array}\right)&\mapsto&-\det \lax(x)=\cP(x)^2+\cQ(x)\cR(x)\;.
  \end{array}
\end{equation}
%
%
In view of the degree constraints on the entries of $\lax(x)$, the polynomial $-\det \lax(x)$ has degree at most
$2g+1$ and its constant term is 1; moreover, every such polynomial is contained in the image of $\mu$. In the $g=2$
case these curves are precisely the ones 
encountered in Section \ref{sec:singularity} in the
singularity analysis of (P.iv), see \eqref{wg2}, \eqref{eq:curve_non_wei} and \eqref{eq:curve_to_wei}, which in
part motivates the choice of constraints on the polynomials $\cP,\cQ$ and $\cR$. 
(When $M_g$ is endowed with a Poisson structure, as in Section \ref{sec:adi}, 
then $\mu$ can be viewed a momentum map.)

\smallskip

Throughout this section, $g>0$ is fixed. In each of the following 
subsections, the results and phenomena being discussed will  be specialized and illustrated for the cases of $g=1$ and
$g=2$, when $M_g$ has dimension 4 and 7, respectively. 

\subsection{Stieltjes continued fractions}\label{par:stieltjes} 
We start from a hyperelliptic curve $\Gamma_f$, defined by an odd Weierstrass equation 
\begin{equation}\label{weier}
  \Gamma_f: y^2 = f(x)\;, \quad\hbox{with}\quad f(x):=1+\sum_{j=1}^{2g+1} c_jx^j\in\C[x]\;.
\end{equation}
When $f$ has degree $2g+1$ and has no multiple roots, $\Gamma_f$ is non-singular and its genus is $g$, which
explains the notation used. Let $(\cP,\cQ,\cR)$ be any point in $\mu^{-1}(f)$, the fiber of $\mu$ over~$f$, so that
$f(x)=\cP^2(x)+\cQ(x)\cR(x)$, and the spectral curve  $\Gamma_f$ is 
the characteristic equation $\det ( \lax(x)-y\mathbf{1})=0$. 
We consider on $\Gamma_f$ the rational function, given by
\begin{equation}\label{func}
  F := \frac{y+\cP(x)}{\cQ(x)}=\frac{\cR(x)}{y-\cP(x)}\;.
\end{equation}

In preparation for constructing the Stieltjes continued fraction of $F$, we show in the following lemma how the
triplet $(\cP,\cQ,\cR)$ leads to another triplet $(\tilde\cP,\tilde\cQ,\tilde\cR)$ in the same fiber of $\mu$,
under the assumption that $(\cP,\cQ,\cR)$ is \emph{regular}, meaning that 
\begin{equation} \label{generic}
\frac{2\cP(x)-\cQ(x)+\cR(x)}{x}\big\vert_{x=0}= 2 p_1 - q_1 + r_1 \neq 0\;.
\end{equation}
\begin{lem}\label{lma:stieltjes}
  Given a regular triplet $(\cP,\cQ,\cR)$ in $\mu^{-1}(f)$, there exists a unique
  $w\in\C^*$ and a unique triplet $(\tilde\cP,\tilde\cQ,\tilde\cR)$ in $\mu^{-1}(f)$ such that
  \begin{equation}\label{eq:trans_in_lemma}
    \frac{y+\cP(x)}{\cQ(x)}=1-\frac{w\,x}{\frac{y+\tilde\cP(x)}{\tilde\cQ(x)}}\;.
  \end{equation}
  The two triplets are related by
  \begin{equation}\label{PQrec2}
  \tilde\cP(x)=\cQ(x)-\cP(x)\;,\quad\tilde\cQ(x)=\frac{2\cP(x)-\cQ(x)+\cR(x)}{-w x}\;,\quad
  \tilde\cR(x)=-w x\cQ(x)\;,
  \end{equation}
  and 
  \begin{equation}\label{weq}
      w = - \frac{2\,p_1-q_1+r_1}{2} =-\frac{\tilde{r}_1}2\;.
  \end{equation}
\end{lem}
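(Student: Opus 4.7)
The plan is to derive the formulas \eqref{PQrec2}--\eqref{weq} directly by clearing the denominators in the defining identity \eqref{eq:trans_in_lemma} and matching coefficients as polynomials in $y$, using the relation $y^2=f(x)=\cP^2+\cQ\cR$ holding on $\Gamma_f$, and then to check that the regularity condition is precisely what guarantees existence of $(\tilde\cP,\tilde\cQ,\tilde\cR)\in\mu^{-1}(f)$ and $w\in\C^*$.

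First, rewrite \eqref{eq:trans_in_lemma} as
\begin{equation*}
(\cQ-y-\cP)(y+\tilde\cP)=wx\,\tilde\cQ\,\cQ,
\end{equation*}
expand the left-hand side, and substitute $y^2=\cP^2+\cQ\cR$. The resulting expression is an element of $\C[x,y]$, in which $y$ occurs to at most the first power; the point $(x,y)\in\Gamma_f$ being generic, both the coefficient of $y$ and the $y$-free part must vanish separately. The coefficient of $y$ gives $\cQ-\tilde\cP-\cP=0$, i.e.\ $\tilde\cP=\cQ-\cP$. Substituting this back in the $y$-free part and factoring an overall $\cQ$ out, I obtain $\cQ-2\cP-\cR=wx\,\tilde\cQ$, which is the formula for $\tilde\cQ$.

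Next I would determine $\tilde\cR$ from either (i) the dual representation $F=\cR/(y-\cP)$ applied to $(\tilde\cP,\tilde\cQ,\tilde\cR)$, or (ii) the demand that the new triplet still lies in $\mu^{-1}(f)$. Either route yields $\tilde\cQ\,\tilde\cR=f-\tilde\cP^2=\cQ(2\cP-\cQ+\cR)=-wx\,\cQ\,\tilde\cQ$, hence $\tilde\cR=-wx\,\cQ$. A direct check $\tilde\cP^2+\tilde\cQ\tilde\cR=(\cQ-\cP)^2+\cQ(2\cP-\cQ+\cR)=\cP^2+\cQ\cR=f$ confirms membership in $\mu^{-1}(f)$.

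Finally, I would verify the combinatorial constraints defining $M_g$, and this is where regularity enters. Since $\cP(0)=1$, $\cQ(0)=2$, $\cR(0)=0$, the polynomial $2\cP-\cQ+\cR$ vanishes at $x=0$, so $\tilde\cQ=(2\cP-\cQ+\cR)/(-wx)$ is automatically a polynomial, of degree at most $g$. Requiring $\tilde\cQ(0)=2$ forces $w=-(2p_1-q_1+r_1)/2$, which is nonzero by the regularity hypothesis \eqref{generic}; conversely, without regularity we would have $w=0$ and $\tilde\cQ$ would not be a well-defined element of $\C[x]$ making the right-hand side of \eqref{eq:trans_in_lemma} finite. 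The remaining constraints $\tilde\cP(0)=1$, $\deg\tilde\cP\leqslant g$, $\tilde\cR(0)=0$, $\deg\tilde\cR\leqslant g+1$ follow immediately from the explicit formulas, and the identity $w=-\tilde r_1/2$ is read off from $\tilde\cR=-wx\cQ=-wx(2+q_1x+\cdots)$.

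The computation is essentially routine linear algebra in $y$; the only point that requires care is tracking how each of the three normalizations $\cP(0)=1$, $\cQ(0)=2$, $\cR(0)=0$ is used, and confirming that the regularity condition is both necessary and sufficient for the construction to produce a valid triplet in $M_g\cap\mu^{-1}(f)$ with $w\neq 0$. Uniqueness is automatic, since each step above determines the new quantity uniquely from the preceding ones.
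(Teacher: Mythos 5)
Your proposal is correct and follows essentially the same route as the paper's proof: clear denominators in \eqref{eq:trans_in_lemma}, reduce modulo $y^2=f(x)$, split into the coefficient of $y$ (giving $\tilde\cP=\cQ-\cP$) and the $y$-free part (giving $wx\tilde\cQ=\cQ-2\cP-\cR$ after using $f=\cP^2+\cQ\cR$), then use the normalizations $\cP(0)=1$, $\cQ(0)=2$, $\cR(0)=0$ together with regularity to fix $w$ uniquely and read off $\tilde\cR=-wx\cQ$ from $f-\tilde\cP^2=\tilde\cQ\tilde\cR$. The only cosmetic difference is that you obtain $\tilde\cR$ by imposing membership in $\mu^{-1}(f)$ whereas the paper defines it by the third formula and then checks divisibility; these are equivalent.
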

\begin{proof}
We will constructively show that we can achieve \eqref{eq:trans_in_lemma} with $w\in\C^*$ and
$(\tilde\cP,\tilde\cQ,\tilde\cR)\in\mu^{-1}(f)$ uniquely determined.  Clearing the denominators in
\eqref{eq:trans_in_lemma} and using $y^2=f(x)$ we get
\begin{equation}\label{eq:clearing}
  y(\cP(x)+\tilde\cP(x)-\cQ(x))+f(x)+\cP(x)\tilde\cP(x)-\tilde\cP(x)\cQ(x)+w x\cQ(x)\tilde\cQ(x)=0\;,
\end{equation}
an equality which holds in the field of fractions of $\C[x,y]/(y^2-f(x))$, so that the coefficients of $y$ and of
$y^0$ in~\eqref{eq:clearing} must be zero. The vanishing of the former coefficient gives the first equation in
\eqref{PQrec2} and guarantees $\tilde\cP(0)=1$ and $\deg\tilde\cP\leqslant g$. The vanishing of the $y^0$
coefficient gives
\begin{equation}\label{nextQ}
  -w\,\tilde\cQ(x)=\frac{f(x)+(\cP(x)-\cQ(x))\tilde\cP(x)}{x\cQ(x)}
  =\frac{f(x)-(\cP(x)-\cQ(x))^2}{x\cQ(x)}=\frac{2\cP(x)-\cQ(x)+\cR(x)}{x}\;,
\end{equation}
where we have used in the last step that $f(x)=\cP^2(x)+\cQ(x)\cR(x)$. Notice that in view of the values of the
constant terms in \eqref{PQform}, the last numerator in \eqref{nextQ} vanishes for $x=0$ and is of degree at most
$g+1$. Also, as the triplet $(\cP,\cQ,\cR)$ is assumed to be regular, the polynomial $(2\cP-\cQ+\cR)/x$ does not
vanish at $x=0$, hence we can (uniquely) choose $w\in\C^*$ such that $\tilde\cQ(0)=2$. This gives the first
equality in \eqref{weq} and the second equation in \eqref{PQrec2}. The first equality in \eqref{nextQ} also shows
that $f(x)-\tilde\cP^2(x)$ is divisible by $\tilde\cQ(x)$, with quotient $-wx\cQ(x)$; thus, if we take the third
equality in \eqref{PQrec2} to define $\tilde\cR$, then $\deg\tilde\cR = g+1$, $\tilde\cR(0)=0$ and
$f(x)=\tilde\cP^2(x)+\tilde\cQ(x)\tilde\cR(x)$, completing the proof that $(\tilde\cP,\tilde\cQ,\tilde\cR)$ belongs
to $\mu^{-1}(f)$. Notice that the third equality in \eqref{PQrec2} implies the alternative formula for $w$ in
\eqref{weq}, since $\cQ(0)=2$.
\end{proof}
Applying the lemma to all regular points of the fiber $\mu^{-1}(f)$ yields a rational map of the fiber to itself,
given by $(\cP,\cQ,\cR)\mapsto (\tilde\cP,\tilde\cQ,\tilde\cR)$ with $(\tilde\cP,\tilde\cQ,\tilde\cR)$ given by
\eqref{PQrec2}. Since the latter can also be solved rationally for $(\cP,\cQ,\cR)$ in terms of
$(\tilde\cP,\tilde\cQ,\tilde\cR)$ by using the second expression for $w$ in \eqref{weq}, this rational map is
actually a birational automorphism of the fiber. Iterating this map starting from a triplet
$(\cP_0,\cQ_0,\cR_0)=(\cP,\cQ,\cR)\in\mu^{-1}(f)$, we get an infinite sequence $(\cP_n,\cQ_n,\cR_n)_{n\in\Z}$ of triplets
as well as an infinite sequence $(w_n)_{n\in\Z}$ in $\C^*$, such that $(\cP_{n+1},\cQ_{n+1},\cR_{n+1})$
and~$w_{n+1}$ are related to $(\cP_n,\cQ_n,\cR_n)$ as dictated by the lemma:
\begin{equation}\label{PQrecn}
  \cP_{n+1}(x)=\cQ_n(x)-\cP_n(x)\;,\quad\cQ_{n+1}(x)=\frac{2\cP_n(x)-\cQ_n(x)+\cR_n(x)}{-w_{n+1} x}\;,\quad
  \cR_{n+1}(x)=-w_{n+1}x\cQ_n(x)\;.
\end{equation}
Writing, as in \eqref{PQform}, $\cP_n(x)=1+\sum_{i=1}^g\p{n,i}x^i$, and similarly for $\cQ_n(x)$ and $\cR_n(x)$,
the value of $w_{n+1}$ is given, according to \eqref{weq}, by
\begin{equation}\label{eq:wn}
  w_{n+1}=-\frac{2\p{n,1}-\q{n,1}+\r{n,1}}2=-\frac{\r{n+1,1}}2\;,
\end{equation}
for all $n\in\Z$.

\goodbreak

It is clear that $(\cP_n,\cQ_n,\cR_n)$ is obtained from $(\cP_0,\cQ_0,\cR_0)$ by repeating  the map $n$ times (or, 
when $n<0$, repeating the inverse of the map $-n$ times). As pointed out in the lemma, the starting triplet
$(\cP_0,\cQ_0,\cR_0)$ must be regular in order for $(\cP_1,\cQ_1,\cR_1)$ and $w_1\in\C^*$ to exist. But nothing
guarantees that $(\cP_1,\cQ_1,\cR_1)$ will also be regular, and in general it need not be so; assuming
$(\cP_1,\cQ_1,\cR_1)$ to be a regular triplet puts an open linear condition on the coefficients of
$(\cP_1,\cQ_1,\cR_1)$, namely that $2\p{1,1} - \q{1,1} + \r{1,1} \neq 0$, which amounts to an open polynomial
condition on $(\cP_0,\cQ_0,\cR_0)$, and for the existence of every extra term of the sequence such an extra
condition is to be added to the triplet $(\cP_0,\cQ_0,\cR_0)$. However, since this amounts to a countable number of
open conditions on the latter, this means that when $(\cP_0,\cQ_0,\cR_0)$ is \emph{generic}, in the sense that it
belongs to a residual subset of the fiber $\mu^{-1}(f)$, the sequence of triplets of polynomials
$(\cP_n,\cQ_n,\cR_n)\in\mu^{-1}(f)$ and the sequence of constants $w_n\in\C^*$, both indexed by $n\in\Z$, exist.
For  generic $(\cP_0,\cQ_0,\cR_0)$, iterating \eqref{eq:trans_in_lemma} gives
\begin{equation}\label{eq:F1}
  F_0:= \frac{y+\cP_0(x)}{\cQ_0(x)}=1-\frac{w_1x}{\frac{y+\cP_1(x)}{\cQ_1(x)}}=\cdots=
  1-\cfrac{w_1x}{ 1 -\cfrac {w_2x}{ 1-\cfrac{w_3x}{1- \cdots} } } \;,
\end{equation}
yielding the \emph{Stieltjes continued fraction}, also called \emph{$S$-fraction}, of $F_0$. 
Similarly, each triple $(\cP_n,\cQ_n,\cR_n)$, $n\in\Z$, is associated with a rational function $F_n$, 
with a corresponding S-fraction obtained by shifting each of the indices in (\ref{eq:F1}), which for $n>0$ appears 
on the $n$th line below the top.  
\begin{exa}
Suppose that $g=1$. Then the entries of the triplets $(\cP,\cQ,\cR)$ and sequences
$(\cP_n,\cQ_n,\cR_n)_{n\in\Z}$ in $M_g=M_1$ take the form
\begin{equation}
  \begin{array}{rcl}
    \cP(x)&=&1+\p1x\;,\\
    \cQ(x)&=&2+\q1x\;,\\
    \cR(x)&=&\r1x+\r2x^2\;,
  \end{array}
  \qquad\hbox{and}\qquad
    \begin{array}{rcl}
    \cP_n(x)&=&1+\p{n,1}x\;,\\
    \cQ_n(x)&=&2+\q{n,1}x\;,\\
    \cR_n(x)&=&\r{n,1}x+\r{n,2}x^2\;.
  \end{array}
\end{equation}
The birational automorphism \eqref{PQrec2}, constructed in Lemma \ref{lma:stieltjes}, and its iterates are given by
\begin{equation}\label{eq:exa_rec_g=1}
  \begin{array}{rcl}
    \tilde p_1&=&\q1-\p1\;,\\
    \tilde q_1&=&-r_2/w\;,\\
    \tilde r_1&=&-2w\;,\\
    \tilde r_2&=&-wq_1\;,
  \end{array}
  \qquad\hbox{and}\qquad
  \begin{array}{rcl}
    \p{n+1,1}&=&\q{n,1}-\p{n,1}\;,\\
    \q{n+1,1}&=&-\r{n,2}/w_{n+1}\;,\\
    \r{n+1,1}&=&-2w_{n+1}\;,\\
    \r{n+1,2}&=&-w_{n+1}\q{n,1}\;,
  \end{array}
\end{equation}
where $w = - \frac{2\,p_1-q_1+r_1}{2}$ and $w_{n+1}= - \frac{2\,\p{n,1}-\q{n,1}+\r{n,1}}{2} $ for $n\in\Z$.
\end{exa}

\begin{exa}
Suppose now that $g=2$. The entries of the triplets $(\cP,\cQ,\cR)$ and sequences
$(\cP_n,\cQ_n,\cR_n)_{n\in\Z}$ in $M_2$ now take the form
\begin{equation}
  \begin{array}{rcl}
    \cP(x)&=&1+\p1x+\p2x^2\;,\\
    \cQ(x)&=&2+\q1x+\q2x^2\;,\\
    \cR(x)&=&\r1x+\r2x^2+\r3x^3\;,
  \end{array}
  \qquad\hbox{and}\qquad
    \begin{array}{rcl}
    \cP_n(x)&=&1+\p{n,1}x+\p{n,2}x^2\;,\\
    \cQ_n(x)&=&2+\q{n,1}x+\q{n,2}x^2\;,\\
    \cR_n(x)&=&\r{n,1}x+\r{n,2}x^2+\r{n,3}x^3\;.
  \end{array}
\end{equation}
From the construction in Lemma \ref{lma:stieltjes},
the birational automorphism \eqref{PQrec2}  and its iterates are given by
\eqref{eq:exa_rec_g=1}, with the expression for $\tilde q_1$ modified to 
$$ 
\tilde q_1 = -(2p_2-q_2+r_2)/w, \quad \mathrm{and} \quad 
q_{n+1,1} = -(2p_{n,2}-q_{n,2}+r_{n,2})/w_{n+1},
$$ 
further supplemented with the following formulae:
\begin{equation}\label{eq:exa_rec_g=2}
  \begin{array}{rcl}
    \tilde p_2&=&\q2-\p2\;,\\
    \tilde q_2&=&-r_3/w\;,\\
    \tilde r_3&=&-w\q2\;,
  \end{array}
  \qquad\hbox{and}\qquad
  \begin{array}{rcl}
    \p{n+1,2}&=&\q{n,2}-\p{n,2}\;,\\
    \q{n+1,2}&=&-\r{n,3}/w_{n+1}\;,\\
    \r{n+1,3}&=&-w_{n+1}\q{n,2}\;,
  \end{array}
\end{equation}
where, as in the genus 1 case, $w = - \frac{2\,p_1-q_1+r_1}{2}$ and $w_{n+1}= - \frac{2\,\p{n,1}-\q{n,1}+\r{n,1}}{2} $
for $n\in\Z$.
\end{exa}

\goodbreak

\subsection{Lax equation and invariants}
In Section \ref{par:stieltjes} 
we defined  a birational automorphism of the fiber $\mu^{-1}(f)\subset M_g$,
where $f=f(x)$ is any polynomial of degree at most~$2g+1$, satisfying $f(0)=1$. This map, given by \eqref{PQrec2},
is not just defined on $\mu^{-1}(f)$, but is also as it stands a well-defined birational automorphism of~$M_g$. In view
of its relation to the Volterra lattice (see Section \ref{sec:continuous}), we call it the \emph{Volterra map},
denoted $\cV_g$; explicitly, 
$$\cV_g: \qquad (\cP,\cQ,\cR) \mapsto (\tilde\cP,\tilde\cQ,\tilde\cR),$$ 
where the entries of the latter
are given by \eqref{PQrec2}; also, we can write $\cV_g(\cP_n,\cQ_n,\cR_n)=(\cP_{n+1},\cQ_{n+1},\cR_{n+1})$ for all $n\in\Z$. 
For a fixed initial triplet $(\cP_0,\cQ_0,\cR_0)$, the entire sequence of triplets $(\cP_n,\cQ_n,\cR_n)_{n\in\Z}$ in $M_g$ is called 
the \textit{orbit} of $\cV_g$ through $(\cP_0,\cQ_0,\cR_0)$. We also refer to a sequence of triplets that satisfies the recursion relations 
\eqref{PQrecn} for all $n$ as a \textit{solution}. 
The equations~\eqref{PQrec2} for the Volterra map, as well as the recursion relations \eqref{PQrecn} for its
iterates, are easily rewritten as discrete Lax equations; this fact has many important consequences which will be
worked out in what follows.
\begin{propn} \label{laxGamma}
The Volterra map $\cV_g$ can be written in the compact form
\begin{equation}\label{dlax} 
\cV_g: \qquad 
  \lax(x)\mma(x) = \mma(x)\widetilde\lax(x)\;, 
\end{equation}
where $\lax(x)$ is given by \eqref{eq:def_lax}, $\widetilde\lax(x)$ is $\lax(x)$ with $\cP,\,\cQ,\,\cR$ replaced by
$\tilde\cP,\,\tilde\cQ,\,\tilde\cR$, and $ \mma(x) := \left(\begin{array}{cc} 1 & -w x \\ 1 & 0\end{array}\right),$
with $w$ given by \eqref{weq}. As a consequence, the $2g+1$ polynomial functions $H_1,\dots,H_{2g+1}$ on $M_g$,
defined by
  \begin{equation}\label{eq:Ham_def}
    \cP(x)^2+\cQ(x)\cR(x)=1+\sum_{i=1}^{2g+1}H_ix^i
  \end{equation}
are invariants of the Volterra map, i.e. $\tilde H_i=H_i$ for $i=1,\dots,2g+1.$
\end{propn}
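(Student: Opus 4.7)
The plan is to verify the Lax equation by direct matrix multiplication and then extract the invariance of the $H_i$ as an immediate consequence of taking determinants.

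First I would compute both sides of \eqref{dlax} as $2\times 2$ polynomial matrices. Expanding gives
\begin{equation*}
  \lax(x)\mma(x)=\begin{pmatrix} \cP+\cR & -wx\,\cP \\ \cQ-\cP & -wx\,\cQ\end{pmatrix},\qquad
  \mma(x)\widetilde\lax(x)=\begin{pmatrix} \tilde\cP-wx\,\tilde\cQ & \tilde\cR+wx\,\tilde\cP \\ \tilde\cP & \tilde\cR\end{pmatrix}.
\end{equation*}
Equality of the four entries yields exactly the four identities
$\tilde\cP=\cQ-\cP$, $\tilde\cR=-wx\,\cQ$, $\tilde\cP-wx\,\tilde\cQ=\cP+\cR$, and $-wx\,\tilde\cP-\tilde\cR=wx\,\cP$. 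The first two are the first and third equations of \eqref{PQrec2}, the third (using $\tilde\cP=\cQ-\cP$) gives $wx\,\tilde\cQ=-(2\cP-\cQ+\cR)$, which is the middle equation of \eqref{PQrec2}, and the last equation is then automatically satisfied. Thus the discrete Lax equation \eqref{dlax} is merely a compact repackaging of the recursion already proved in Lemma \ref{lma:stieltjes}; no new content has to be established beyond that bookkeeping.

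Next I would take determinants of both sides of \eqref{dlax}. Since $\det\mma(x)=wx$ is a nonzero polynomial (as $w\in\C^*$ by Lemma~\ref{lma:stieltjes}), we may cancel it in the field of rational functions $\C(x)$ to conclude
\begin{equation*}
  \det\widetilde\lax(x)=\det\lax(x),\qquad\text{i.e.,}\qquad \tilde\cP(x)^2+\tilde\cQ(x)\tilde\cR(x)=\cP(x)^2+\cQ(x)\cR(x).
\end{equation*}
Because both sides are polynomials, this equality in $\C(x)$ is in fact an equality in $\C[x]$. Comparing coefficients of $x^i$ for $i=1,\dots,2g+1$ in \eqref{eq:Ham_def} then yields $\tilde H_i=H_i$, as desired. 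The constant term on each side equals $1$ by construction of $M_g$, so no information is lost by stripping it off.

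There is no real obstacle here: once \eqref{PQrec2} is in hand, the only thing to check is the matrix identity, and then invariance of $\det\lax(x)$ follows by the standard intertwining/isospectrality argument. The mildly delicate point I would be careful about is that $\det\mma(x)=wx$ vanishes at $x=0$, so one should divide in $\C(x)$ rather than in $\C[x]$; but since both $\det\lax(x)$ and $\det\widetilde\lax(x)$ lie in $\C[x]$, the conclusion in $\C[x]$ is automatic.
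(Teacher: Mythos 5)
Your proposal is correct and follows essentially the same route as the paper: a direct check that the four entries of the matrix identity \eqref{dlax} reproduce exactly the recursion \eqref{PQrec2}, followed by taking determinants to conclude $\tilde H_i=H_i$. Your extra care about $\det\mma(x)=wx$ vanishing at $x=0$ (cancelling in $\C(x)$ and then noting both determinants are polynomials) is a slightly more scrupulous version of the paper's brief remark that conjugation by $\mma(x)$ preserves the spectrum, but it is the same argument.
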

\begin{proof}
  It is easily checked by direct computation that \eqref{PQrec2} and \eqref{dlax} are the same set of
  equations. Since the latter says that $\widetilde\lax(x)$ is obtained from $\lax(x)$ by conjugation with
  $\mma(x)$, the spectrum of $\lax(x)$ is preserved, hence also all coefficients of the determinant of $\lax(x)$,
  i.e., the coefficients $H_i$ of $\mu(\lax(x))=\cP(x)^2+\cQ(x)\cR(x)$.
\end{proof}
%
%
Upon iterating the Volterra map, as discussed in Section \ref{par:stieltjes}, starting from a generic triplet $\lax_0(x)$
of $M_g$  we get a sequence of triplets $$\lax_n(x):= \left(\begin{array}{cr} \cP_n(x) & \cR_n(x)\\ \cQ_n(x) & -\cP_n(x)
\end{array}\right)$$ of $M_g$.  According to \eqref{dlax}, a discrete Lax equation for this sequence is
given by
\begin{equation}\label{dlaxn} 
  \lax_n(x)\mma_n(x) = \mma_n(x)\lax_{n+1}(x)\;, 
\end{equation}
where 
\begin{equation}\label{LM} 
  \mma_n(x) := \left(\begin{array}{cc}
    1 & -w_{n+1} x \\ 
    1 & 0 
  \end{array}\right)\;,\qquad\hbox{with}\qquad w_{n+1}=-\frac{2\p{n,1}-\q{n,1}+\r{n,1}}2=-\frac{\r{n+1,1}}2\;.
\end{equation}

\begin{exa}
When $g=1$, respectively when $g=2$, the invariants $H_i$ can be computed from \eqref{eq:Ham_def}:
\begin{equation}\label{eq:hams_g=1}
\begin{array}{rcl}
  H_1&=&2(p_1+r_1)\;,\\
  H_2&=&p_1^2+q_1r_1+2r_2\;,\\
  H_3&=&q_1r_2\;,
\end{array}
\qquad\qquad
\begin{array}{rcl}
  H_1&=&2(p_1+r_1)\;,\\
  H_2&=&2p_2+p_1^2+q_1r_1+2r_2\;,\\
  H_3&=&2p_1p_2+2r_3+q_1r_2+q_2r_1\;,\\
  H_4&=&p_2^2+q_1r_3+q_2r_2\;,\\
  H_5&=&q_2r_3\;.
\end{array}
\end{equation}
The formulae on the left, which correspond to $g=1$, can be obtained from the first three formulae on the right by
setting $\p2=\q2=\r3=0$ in them. 
\end{exa}


\subsection{The Volterra map and its reductions}\label{par:rec_from_volterra}
The invariants $H_i$ 
can be used to reduce the 
Volterra map to the submanifolds obtained by
fixing the values of some of these invariants. Here we will use this to express the Volterra map in terms of the
variables $w_i$ which we introduced when constructing the S-fraction \eqref{eq:F1}.

\smallskip

We start from the linear coordinates $\p1,\dots,\p g,\q1,\dots,\q g,\r1,\dots,\r{g+1}$ of $M_g$, which are identified 
with $\p{0,1},\dots,\p{0,g},\q{0,1},\dots,\q{0,g},\r{0,1},\dots,\r{0,g+1}$.  The latter functions are used to
define recursively $\p{n,1},\dots,\p{n,g},$ $\q{n,1},\dots,\q{n,g},\r{n,1},\dots,\r{n,g+1}$, as well as $w_n$, for
all $n\in\Z$.  Recall that this is done using \eqref{PQrecn} and \eqref{eq:wn}.

\smallskip

In a first step, we will use a birational map to replace our linear coordinates for $M_g$ by $\p{0,1},\dots,\p{0,g}$
and some of its iterates $\p{n,1},\dots\p{n,g}$. To do this, we fix the value of the invariant $H_1=2(p_1+r_1)$ to
an arbitrary constant~$c_1$. It means that we consider the hyperplane $H_1=c_1$ of $M_g$, which denote by
$M_g^{c_1}$. On it, we can take $\p1,\dots,\p g,\q1,\dots,\q g,\r2,\dots,\r{g+1}$ as linear coordinates (we left
out $r_1$).  The invariance of $H_1$ implies that $2(\p{n,1}+\r{n,1})=c_1$ which, combined with $r_{n,1}=-2w_n$
(see \eqref{eq:wn}), leads to
\begin{equation}\label{eq:wn_to_p}
  w_n=\frac12\left(\p{n,1}-\frac{c_1}2\right)\;
\end{equation}
for all $n\in\Z$.  Using this and the first and last equations in \eqref{PQrecn}, we can express the above variables
in terms of $\p{0,1},\dots,\p{0,g}$ and their iterates:
\begin{equation}\label{eq:toPn}
  \q{n,k}=\p{n+1,k}+\p{n,k}\;,\quad\hbox{and}\quad
  \r{n,k+1}=-w_{n}\q{n-1,k}=\frac12\left(\frac{c_1}2-\p{n,1}\right)(\p{n,k}+\p{n-1,k})\;,
\end{equation}
where $k=1,\dots,g$. Taking $n=0$ we get 
\begin{equation}\label{eq:toP}
  \q{0,k}=\p{1,k}+\p{0,k}\;,\quad\hbox{and}\quad
  \r{0,k+1}=\frac12\left(\frac{c_1}2-\p{0,1}\right)(\p{0,k}+\p{-1,k})\;,
\end{equation}
and we have expressed all coordinates of $M_g^{c_1}$ in terms of the $3g$ coefficients of the polynomials
$\cP_{-1},\cP_0$ and~$\cP_1$. It is clear that the \eqref{eq:toP} can be solved rationally for $\p{1,k}$ and
$\p{-1,k}$ so that \eqref{eq:toP} defines a birational morphism from $M_g^{c_1}$ to the space of triplets
$(\cP_{-1},\cP_0,\cP_1)$. For later use, we also express the iterates of the Volterra map as a recursion relation
on the polynomials $\cP_n$. To do this, we apply~\eqref{PQrecn} several times to get
\begin{align}\label{eq:P_rec_P}
  \cP_{n+2}&=\cQ_{n+1}(x)-\cP_{n+1}(x)=\frac{2\cP_n(x)-\cQ_n(x)+\cR_n(x)}{-w_{n+1}x}-\cP_{n+1}(x)\nonumber\\
    &=-\cP_{n+1}(x)+\frac1{w_{n+1}x}(\cP_{n+1}(x)-\cP_n(x))+\frac{w_{n}}{w_{n+1}}(\cP_n(x)+\cP_{n-1}(x))\;.
\end{align}

We will now go one step further and show how the above coordinates $\p{0,k},\p{1,k}$ and $\p{-1,k}$ can be
expressed birationally in terms of $\p{0,1}$ and some of its iterates $\p{n,1}$. To do this, we do a further
reduction, namely we also fix the value of each one of the invariants $H_2,\dots,H_g$ to an arbitrary constant
$c_2,\dots,c_g$ and consider the subvariety $\cap_{i=1}^g(H_i=c_i)$ of $M_g$, which we denote by $M_g^{c}$, so $c$
stands now for $(c_1,\dots,c_g)$. Notice that this subvariety may be singular, but that does not affect the
reduction or the recursion relations. Using \eqref{eq:toPn}, we get the following formula for the invariants $H_i$
in terms of $\cP_{n},\cP_{n+1}$ and $\cP_{n-1}$, valid for any $n\in\Z$:
\begin{equation*}
  1+\sum_{k=1}^{2g+1}H_ix^i=\cP_n^2(x)+\cQ_n(x)\cR_n(x)=\cP_n^2(x)-w_{n}x(\cP_n(x)+\cP_{n+1}(x))(\cP_n(x)+\cP_{n-1}(x))\;,
\end{equation*}
with $w_n$ given by \eqref{eq:wn_to_p}. Upon comparing the coefficient of $x^k$ on both sides, for $k=1$ we recover 
\eqref{eq:wn_to_p}, while for $k=2,\dots,g$ we  recursively obtain  $\p{n,k}$ 
 in terms of $\p{0,1}$ and its iterates, via the following formulae:
\begin{align}\label{eq:ck}
  c_k&=2\p{n,k}+\sum_{i=1}^{k-1}\p{n,i}\p{n,k-i}-2w_{n}(2\p{n,k-1}+\p{n-1,k-1}+\p{n+1,k-1})\nonumber\\
    &\;\quad\quad-w_{n}\sum_{i=1}^{k-2}(\p{n,i}+\p{n+1,i})(\p{n,k-1-i}+\p{n-1,k-1-i})\;.
\end{align}
Indeed,  aside from the linear term in $\p{n,k}$, the above equation contains only the variables $\p{n,i}$ and $\p{n\pm1,i}$ with
$1\leqslant i<k$. For $k=2$ one gets
\begin{equation*}
  c_2=2\p{n,2}+\p{n,1}^2-2w_{n}(2\p{n,1}+p_{n-1,1}+\p{n+1,1})\;,
\end{equation*}
from which it is clear that $\p{n,2}$ depends (polynomially) only on $\p{n-1,1},\p{n,1}$ and $\p{n+1,1}$ (see
\eqref{eq:wn_to_p} for the formula for $w_{n}$). An easy recursion on $k$ using \eqref{eq:ck} shows that $\p{n,k}$
depends on $\p{n-k+1,1},\dots,p_{n+k-1,1}$ only. Taking $n=-1$, $n=0$ and $n=-1$ we get that the coefficients of
$\cP_{-1},\cP_0$ and $\cP_1$ depend only on $\p{-g,1},\dots,p_{g,1}$. Conversely, it is obvious from
\eqref{eq:P_rec_P} that the coefficients of $\cP_{2}$, and hence of all $\cP_n$ with $n\geqslant2$, are rational
functions of the coefficients of $\cP_{-1},\cP_0$ and $\cP_1$. This applies in particular to $\p{n,1}$ with
$n\geqslant2$. Using the inverse 
recursion, which yields a formula similar to \eqref{eq:P_rec_P} expressing
$\cP_{n-2}$ in terms of $\cP_{n+1},\cP_n$ and $\cP_{n-1}$ one obtains similarly that all $\p{n,1}$ with $n\in\Z$
are rational functions of the coefficients of $\cP_{-1},\cP_0$ and $\cP_1$. The upshot is that we have a birational
map between $M_g^c$ and $\C^{2g+1}$, equipped with the coordinates $\p{-g,1},\dots,\p{g,1}$. In view of
\eqref{eq:wn_to_p}, which we now write as
\begin{equation}\label{eq:p_to_w}
  \p{n,1}=2w_n+\frac{c_1}2\;,  
\end{equation}
it amounts to a birational map between $M_g^c$ and $\C^{2g+1}_w$ where the latter denotes $\C^{2g+1}$, equipped
with the coordinates $w_{-g},\dots,w_g$.

\smallskip

Hence we can use the birational map between $M_g^c$ and $\C^{2g+1}_w$ to write the Volterra map on $M_g$, restricted
to $M_g^{c}$, as a birational automorphism of $\C^{2g+1}_w$. Since we already gave in \eqref{eq:P_rec_P} the
Volterra map and its iterates in terms of the variables $\cP_i$, we set  $n=0$ therein, which gives the Volterra map itself,
and take the leading terms of both sides of \eqref{eq:P_rec_P}:
\begin{equation}\label{eq:rec_gen}
  w_{1}(\p{2,g}+\p{1,g})=w_{0}(\p{0,g}+\p{-1,g})\;.
\end{equation}
In view of the dependence of $\p{-1,g},\dots,\p{2,g}$ on the variables $w_i$, \eqref{eq:rec_gen} gives an equation for 
$w_{g+1}$, which appears linearly in it, and the birational automorphism
$(w_{-g},w_{1-g},\dots,w_g)\mapsto(w_{1-g},\dots,w_g,w_{g+1})$ is the Volterra map on~$\C_w^{2g+1}$. Explicit
expressions for it will be given in the examples below.

\smallskip

However, we can do a further reduction, restricting the Volterra map to the subvariety $H_k=c_k$ for some~$k$ with
$g<k\leqslant 2g+1$. The relation
\begin{equation*}
  H_k(w_{-g},\dots,w_0,\dots,w_g)=c_k
\end{equation*}
defines $w_g$ as a rational function of $w_{-g},\dots,w_{g-1}$ because by inspection $w_g$ appears linearly in it
(the same applies to $w_{-g}$). As we will see in the examples below, we can therefore take any of the
invariants~$H_k$, with $g<k\leqslant 2g+1$ which will give a birational automorphism which is an incarnation of the
Volterra map $\cV_g$ on $M_g^c\cap(H_k=c_k)$; precisely it is conjugate, via the above birational map,
to the Volterra map, restricted to $M_g^c\cap(H_k=c_k)$, where values of $c=(c_1,\dots,c_g)$ and $c_k$ are
arbitrary.

\begin{exa}\label{exa:g=1_rec}
We first consider the case $g=1$. In this case, we only need to consider $k=1$ in \eqref{eq:toP}, which combined
with \eqref{eq:p_to_w} yields the following birational map between $M_g^c=M_g^{c_1}$ and $\C^3_w$:
\begin{align}\label{eq:pg1}
  \p{1}&=\p{0,1}=2w_0+c_1/2\;,\nonumber\\
  \q{1}&=\q{0,1}=\p{1,1}+\p{0,1}=2(w_0+w_1)+c_1\;,\nonumber\\
  \r{1}&=\r{0,1}=c_1/2-\p{0,1}\;,\\
  \r{2}&=\r{0,2}=-w_0(\p{0,1}+p_{-1,1})=-2w_0(w_0+w_{-1}+c_1/2)\;.\nonumber
\end{align}
In terms of the polynomials $\cP,\cQ$ and $\cR$ this can also be written as
\begin{align}\label{eq:pqr_n_g=1}
  \cP(x)&=1+(2w_{0}+{c_1}/2)x\;,\nonumber\\
  \cQ(x)&=2+2(w_{1}+w_{0}+{c_1}/2)x\;,\\
  \cR(x)&=-2w_{0}x(1+(w_{0}+w_{-1}+{c_1}/2)x)\;.\nonumber
\end{align}
For $g=1$ the formula \eqref{eq:rec_gen} takes the form $w_{1}(\p{2,1}+\p{1,1})=w_{0}(\p{0,1}+\p{-1,1})$, and can
be expressed immediately in terms of the quantities $w_{-1},\dots,w_2$ since $\p{n,1}=2w_{n}+c_1/2$, for all $n$:
\begin{equation}\label{eq:rec_g1}
  w_1(2w_{2}+2w_1+c_1)=w_{0}(2w_{0}+2w_{-1}+c_1)\;.
\end{equation}
It defines the Volterra map $(w_{-1},w_0,w_1)\mapsto(w_0,w_1,w_2)$ on $\C^3_w$, being the same as equation (2) in
\cite{svinin3}, where Svinin used continued fraction expansions to construct particular solutions;  
equation \eqref{eq:rec_g1} also appears in \cite{hones5}.  Substituting~\eqref{eq:pqr_n_g=1} in
$\cP^2(x)+\cQ(x)\cR(x)=1+H_1x+H_2x^2+H_3x^3$ we get the following formulas for the invariants $H_2$ and $H_3$ of
the Volterra map in terms of the variables $w_{-1},w_0$ and $w_1$:
\begin{align}
  H_2&=-4w_{0}(w_{1}+w_{0}+w_{-1}+{c_1}/2)+{c_1^2}/4\;,\label{eq:H2_g1}\\
  H_3&=-4w_{0}(w_{1}+w_{0}+{c_1}/2)(w_{0}+w_{-1}+{c_1}/2)\;.\label{eq:H3_g1}
\end{align}
We now fix $c_3$ and consider the Volterra map on the subvariety $H_3=c_3$ of $M_g^c$. According to
\eqref{eq:H3_g1} we get
\begin{equation}\label{eq:g=1_red_H3}
  4w_{0}(w_{1}+w_{0}+{c_1}/2)(w_{0}+w_{-1}+{c_1}/2)+c_3=0\;,
\end{equation}
which defines a $2D$ map $(w_{-1},w_0)\mapsto (w_0,w_1)$, where $w_1$ is computed from \eqref{eq:g=1_red_H3}. It
has $H_2$ (from which~$w_{1}$ is eliminated using \eqref{eq:g=1_red_H3}) as invariant, given by 
\beq\label{biquadH2}
  H_2=\frac{c_3}{w_0+w_{-1}+c_1/2}-4w_0w_{-1}+\frac{c_1^2}4\;.
\eeq 
We observe that $H_2$ is a ratio of symmetric biquadratics in $w_0$ and $w_{-1}$ and it can be checked that \eqref{eq:g=1_red_H3} is a symmetric QRT map \cite{qrt} 
(and it is of type (III) in the classification of \cite{rcgo}).
We next fix $c_2$ and consider the Volterra map on the subvariety $H_2=c_2$ of $M_1^c$. From \eqref{eq:H2_g1} we now
get
\begin{equation}\label{eq:g=1_red_H2}
  4w_{0}(w_{1}+w_{0}+w_{-1}+{c_1}/2)-{c_1^2}/4+c_2=0\;. 
\end{equation}
It defines a $2D$ map $(w_{-1},w_0)\mapsto (w_0,w_1)$ which is an additive QRT map (type (I) in \cite{rcgo}) 
with $H_3$  as invariant,
which (after using
\eqref{eq:g=1_red_H2}) to eliminate $w_{1}$) takes the form 
\beq\label{biquadH3}
  H_3=\left(w_0+w_{-1}+\frac{c_1}2\right)\left(4w_0w_{-1}+c_2-\frac{c_1^2}4\right)\;.
\eeq 

To finish the $g=1$ example we will present a slightly more involved reduction, leading to a map which is closely
related to Somos-5. To do this, we first compare two different ways of writing of the genus 1 curve $y^2=f(x)$, which lead
to an alternative generating set of invariants of the recursion. If we write
\beq\label{weicub} 
  y^2=1+c_1x+c_2x^2+c_3x^3=(1-c_1'x)(1-c_1'x+4c_2'x^2)-4c_3'x^3\;,
\eeq 
then the constants  $c_i$ and $c_i'$ are related by
\begin{equation}\label{eq:c_to_c'}
  \renewcommand*{\arraystretch}{1.5}
\begin{array}{rcl}
  c_1&=&-2c'_1\;,\\ 
  c_2&=&4c_2'+{c_1'}^2\;,\\
  c_3&=&-4(c'_1c'_2+c'_3)\;,
\end{array}
\qquad\qquad
\begin{array}{rcl}
  c'_1&=&-c_1/2\;,\\ 
  c'_2&=&\frac14\left(c_2-\frac{c_1^2}4\right)\;,\\
  c'_3&=&\frac{c_1}8\left(c_2-\frac{c_1^2}4\right)-\frac{c_3}4\;.
\end{array}
\end{equation}
Next, if we write the recursion relations \eqref{eq:g=1_red_H3} and \eqref{eq:g=1_red_H2} in terms of the constants $c_i'$
using \eqref{eq:c_to_c'}, we get respectively
\begin{gather}
  w_{0}\left(w_{1}+w_{0}-{c'_1}\right)\left(w_{0}+w_{-1}-{c'_1}\right)=c'_1c'_2+c'_3\;,\label{eq:rec_1}\\
  w_{0}\left(w_{1}+w_{0}+w_{-1}-{c'_1}\right)+c'_2=0\label{eq:rec_2}\;.
\end{gather}
Notice that $c_1'$ now appears linearly in \eqref{eq:rec_2}, so we can easily eliminate $c'_1$ between
\eqref{eq:rec_1} and \eqref{eq:rec_2}, which yields the following simple relation
\begin{equation}\label{eq:rec_for_somos}
  w_{1}w_{-1}=c_2'+\frac{c_3'}{w_0}\;
\end{equation}
on the generic level surface $(H_2'=c_2')\cap(H_3'=c_3')$, which is also birational with $\C^2$. It defines a $2D$
map $(w_{-1},w_0)\mapsto (w_0,w_1)$ which is a multiplicative QRT map with $c_2'$ and $c_3'$ as parameters 
(being of type (II) in \cite{rcgo}). To
get an invariant for this map, we eliminate $w_{1}$ between \eqref{eq:rec_1} and \eqref{eq:rec_2}, to get
\beq\label{biquadc} 
c_1'w_{0}w_{-1}=(w_{0}+w_{-1})w_{0}w_{-1}+c_2'(w_{0}+w_{-1})+c_3'.
\eeq 
It leads upon division by $w_0w_{-1}$ to the
following explicit formula for the invariant:
\beq\label{s5curve} 
  H_1'=  w_{0}+w_{-1}+c_2'\left(\frac1{w_{0}}+\frac1{w_{-1}}\right)+\frac{c_3'}{w_{0}w_{-1}}\;.
\eeq
Also, the tau function substitution
\begin{equation*}
  w_{n}=\frac{\tau_{n}\tau_{n+3}}{\tau_{n+1}\tau_{n+2}}
\end{equation*}
in \eqref{eq:rec_for_somos}, which we now write in the form of the recursion relation
$w_{n+1}w_{n-1}=c_2'+{c_3'}/{w_n}$, yields the general form of the   Somos-5 recursion relation, namely 
\beq\label{s5}
  \tau_{n+5}\tau_n =c_2'\tau_{n+4}\tau_{n+1}+c_3'\tau_{n+3}\tau_{n+2}\;,\qquad n\in\Z\;.
\eeq
In our previous work \cite{hones5} we showed how to solve the initial value problem for \eqref{s5} explicitly in
terms of the 
 sigma function, but in Section \ref{par:hankel} below we will show how it can also be
solved in terms of Hankel determinants, using the S-fraction \eqref{eq:F1}.

Note that, apart from (\ref{eq:rec_for_somos}), the maps (\ref{eq:g=1_red_H3}) and   (\ref{eq:g=1_red_H2}) are also examples of symmetric QRT maps  \cite{qrt}, 
and  the orbits 
of all three maps can be identified by restricting to particular level sets of their invariants, which is a common feature of families of these maps \cite{ir2}. To see how the orbits of these different 2D maps coincide, it is necessary  to identify the parameters and values of the invariants 
(\ref{biquadH2}), (\ref{biquadH3}) and (\ref{s5curve}) in an appropriate way, from which it can be seen that (on fixed level sets) $H_2$, $H_3$ and $H_1'$ define an identical biquadratic curve in the 
$(w_{-1},w_0)$ plane, namely  (\ref{biquadc}), whose coefficients can be rewritten in terms of $c_1,c_2,c_3$ using (\ref{eq:c_to_c'}). Moreover, this curve is birationally equivalent 
to the Weierstrass cubic (\ref{weicub}). 
\end{exa}

\begin{exa}\label{exa:g=2_rec} %
Using \eqref{eq:rec_gen} when $g=2$,  on $M_2^c$ (which is birational
to $\C^5$) the formula for the Volterra map takes 
the form
\begin{equation}\label{eq:rec_genus2}
  w_{1}(\p{2,2}+\p{1,2})=w_{0}(\p{0,2}+\p{-1,2})\;,
\end{equation}
so we need to express $\p{n,2}$ for $n=-1,\dots,2$ in terms of the variables $w_i$. To do this, we use
\eqref{eq:ck}, keeping in mind that $\p{n,1}=2w_{n}+c_1/2$ for all $n$:
\begin{align}\label{eq:p2_g=2}
  2\p{n,2}&=-\p{n,1}^2+2w_{n}(\p{n+1,1}+2\p{n,1}+\p{n-1,1})+c_2\nonumber\\
                 &=-\left(2w_{n}+\frac{c_1}2\right)^2+4w_{n}(w_{n+1}+2w_{n}+w_{n-1}+c_1)+c_2\nonumber\\
                 &=4w_{n}\left(w_{n+1}+w_{n}+w_{n-1}+\frac{c_1}2\right)+c_2-\frac{c_1^2}4\;.
\end{align}
Substituted in \eqref{eq:rec_genus2} and slightly reordering the terms, we get the following symmetric relation:
\begin{align}\label{eq:rec_genus2_1}
  &\quad w_1\left(2w_{2}(w_{3}+w_{2})+2w_1(w_1+w_{0})+4w_{2}w_1+(w_{2}+w_1)c_1+c_2-\frac{c_1^2}4\right)
  \nonumber\\
 =\ &w_{0}\left(2w_{0}(w_{1}+w_{0})+2w_{-1}(w_{-1}+w_{-2})+4w_{0}w_{-1}+(w_{0}+w_{-1})c_1+c_2-\frac{c_1^2}4\right)\;.
\end{align}
This defines a $5D$ map $(w_{-2},w_{-1},w_0,w_1,w_2)\mapsto (w_{-1},w_0,w_1,w_2,w_3)$, where $w_3$ is computed from
\eqref{eq:rec_genus2_1}.  Using the first equation in \eqref{PQrec2}, the above formulae for $\p{n,1}$ and
$\p{n,2}$ lead at once to the following expressions for the coefficients of $\cQ_n$:
\begin{align}\label{eq:q_g=2}
  \q{n,1}&=2w_{n+1}+2w_{n}+c_1\;,\nonumber\\
  \q{n,2}&=2(w_{n}w_{n-1}+w_{n+1}w_{n+2})+2(w_{n+1}+w_{n})^2+(w_{n+1}+w_{n})c_1+c_2-\frac{c_1^2}4\;.
\end{align}
The formulae for $\r{n,k}$ then follow from $\r{n,1}=-2w_{n}$ and $\r{n,k}=-w_{n}\q{n-1,k-1}$ for $k>1$, by applying 
the third equation in \eqref{PQrec2}. With these formulae we can express the invariants $H_3,\dots,H_5$ in
terms of the variables $w_i$. We write this out for $H_3$, in order to find a recursion relation of order 4. According to
\eqref{eq:hams_g=1}, and using the above expressions for $\r{n,k}$,
\begin{align*}
  H_3&=2p_1p_2+2r_3+q_1r_2+q_2r_1=\p{0,1}\p{0,2}+2\r{0,3}+\q{0,1}\r{0,2}+\q{0,2}\r{0,1}\\
          &=2\p{0,1}\p{0,2}-w_{0}(\q{0,1}\q{-1,1}+2\q{-1,2}+2\q{0,2})\;,
\end{align*}
which can be written completely in terms of the variables $w_i$ using \eqref{eq:p2_g=2}, \eqref{eq:q_g=2} and the
fundamental formula $\p{n,1}=2w_{n}+c_1/2$. After some simplification, we get on the hypersurface $H_3=c_3$,
which is birational to $\C^4$,
\begin{align*}
  &w_{2} w_{1} w_{0} +w_{0} w_{-1} w_{-2}+2w_{0}^2( w_{1} +w_{-1})+
    w_{0}^3+w_{0} (w_{1}^2+ w_{1} w_{-1} +w_{-1}^2)  \\ 
    &\quad+\frac{c_1}2 w_{0} (w_{1} +w_{0}+ w_{-1})+\frac12\left(c_2-\frac{c_1^2}4\right) w_{0}+
      \frac{c_3}4-\frac{c_1c_2}8+\frac{c_1^3}{32}=0\;.
\end{align*} 
This is exactly the equation \eqref{pivmap} defining the $4D$ map (P.iv), after setting 
$n=-2$  
and
\beq\label{g2nuab}
  \nu=\frac{c_1}2\;,\quad a=\frac{c_3}4-\frac{c_1c_2}8+\frac{c_1^3}{32}\;,\quad b=\frac12\left(c_2-\frac{c_1^2}4\right)\;.
\eeq
The invariants $H_4$ and $H_5$ yield the invariants for (P.iv), given in \eqref{pivh1} and \eqref{pivh2}.
\end{exa}

\subsection{Hankel determinant solutions}\label{par:hankel} \label{sec:hankel} 

The function $F_0$ in \eqref{eq:F1} that defines the S-fraction admits a series expansion in $x$ around $(0,1)\in\Gamma_f$
that we shall use to give explicit solutions to the recurrence relation defined by the Volterra map. For a generic
point $(\cP,\cQ,\cR)$ in $\mu^{-1}(f)$, with $f$ as in \eqref{weier}, we introduce new variables
$s_1,s_2,\dots$ by writing
\begin{equation}\label{eq:F1_bis}
  1-\cfrac{w_1x}{ 1 -\cfrac {w_2x}{ 1-\cfrac{w_3x}{1- \cdots} } }=1-\sum_{j=1}^\infty s_j x^j =1-S(x)\;,
\end{equation}
where the latter equality defines the power series $S(x)$, which can be regarded as a generating function for the moments $s_j$. 
The moments can be defined from the integral 
$$ 
s_j = \frac{1}{2\pi \ri}\oint \frac{(1-F_0)}{x^{j+1}}\,\rd x , 
$$
for a sufficiently small contour around the point $(0,1)$ on $\Gamma_f$, and this leads to a 
linear functional (defined on polynomials in $x^{-1}$), and the connection with the classical theory of orthogonal polynomials \cite{shohat}, 
but we shall not pursue this further here.   
In view of \eqref{eq:F1}, $F_0=1-S(x)$ in the sense that
the latter is a Taylor series expansion of the rational function $F_0$ at $(0,1)$.  It was shown by Stieltjes
\cite{stieltjes} that the variables $w_i$ can be expressed as Hankel determinants of the
variables~$s_i$. Precisely, he showed that
\begin{equation}\label{hankelform} 
w_n = \frac{\Delta_{n-3}\Delta_n}{\Delta_{n-2}\Delta_{n-1}} \qquad \mathrm{for}\,\,n\geqslant1\;, 
\end{equation}
where $\Delta_{2k-1}=\det (s_{i+j-1})_{i,j=1,\ldots, k}$ and  $\Delta_{2k}=\det (s_{i+j})_{i,j=1,\ldots, k}$, for
$k\geqslant1$, that is,
\begin{equation}
\Delta_{2k-1}=
\left| \begin{array}{cccc} 
s_1     & s_2 & \cdots & s_{k}   \\ 
s_2     &         & \iddots &        \vdots       \\
\vdots & \iddots &          &           \vdots    \\        
s_{k}     &  \cdots & \cdots & s_{2k-1}  
\end{array} 
\right|
\quad\hbox{and}\quad
\Delta_{2k}=
\left| \begin{array}{cccc} 
s_2    & s_3 & \cdots & s_{k+1}   \\ 
s_3     &         & \iddots &        \vdots       \\
\vdots & \iddots &          &           \vdots    \\        
s_{k+1}     &  \cdots & \cdots & s_{2k}  
\end{array} 
\right| \;.
\end{equation}  
Also, by definition, $\Delta_{-2}=\Delta_{-1}=\Delta_0=1$. For example, $w_1=s_1$, $w_2=s_2/s_1$,
$w_3=(s_1s_3-s_2^2)/(s_1s_2)$, and so on. It is clear from \eqref{eq:F1_bis} that, conversely, $s_i$ can be expressed
as a polynomial in $w_1,\dots,w_i$, for example $s_1=w_1$, $s_2=w_1w_2$, $s_3=w_1w_2(w_2+w_3)$, and so on. 
(While these expressions can be found by expanding geometric series, a systematic method using continuants 
is presented in Section \ref{par:birat_trans}.) 
\begin{thm}\label{hankel} 
The terms $w_n$, $n\geqslant1$, of the recurrence sequence defined by the Volterra map on $\mu^{-1}(f)$ can be
written as
\begin{equation*}
  w_n = \frac{\Delta_{n-3}\Delta_n}{\Delta_{n-2}\Delta_{n-1}}\;,
\end{equation*} 
where the entries of the Hankel matrices satisfy the recursion relation 
\begin{equation}\label{sjrec}
  s_j = \sum_{k=1}^g\big(\p{k}-\q{k}\big)s_{j-k}+\sum_{i=1}^{j-1}s_is_{j-i}+
  \frac12\sum_{k=1}^g\q{k}\sum_{i=1}^{j-k-1}s_is_{j-k-i}, \quad j\geqslant g+2\;.
\end{equation}
The initial values $s_1,s_2,\ldots,s_{g+1}$ and the coefficients for the recursion \eqref{sjrec} are provided by a generic triple $(\cP,\cQ,\cR)\in\mu^{-1}(f)$.
\end{thm}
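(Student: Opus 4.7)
The Hankel-determinant expressions for $w_n$ in the statement are the classical S-fraction formula of Stieltjes already recalled in \eqref{hankelform}: the coefficients $s_j$ defined by $1-F_0 = S(x) = \sum_{j\geqslant1} s_j x^j$ are precisely the moments whose continued fraction is \eqref{eq:F1_bis}, so once the S-fraction expansion of $F_0$ is known to exist (which, by Section \ref{par:stieltjes}, holds for generic $(\cP,\cQ,\cR)\in\mu^{-1}(f)$), the Hankel formula for $w_n$ requires no further argument. The substance of the theorem is the recursion \eqref{sjrec}, and my plan is to deduce it directly from the algebraic equation of the spectral curve $\Gamma_f$.

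First I would rewrite the defining relation $F_0 = (y+\cP(x))/\cQ(x) = 1-S(x)$ in the form $y = \bigl(1-S(x)\bigr)\cQ(x) - \cP(x)$, square both sides, and use $y^2 = f(x) = \cP(x)^2+\cQ(x)\cR(x)$. The $\cP(x)^2$ terms cancel, and dividing through by $\cQ(x)$ yields the purely polynomial identity
\begin{equation*}
  (1-S(x))^2\,\cQ(x) - 2(1-S(x))\,\cP(x) = \cR(x),
\end{equation*}
which I would rearrange as
\begin{equation*}
  S(x)^2\,\cQ(x) - 2S(x)\bigl(\cQ(x)-\cP(x)\bigr) = \cR(x) - \cQ(x) + 2\cP(x).
\end{equation*}
The degree constraints built into \eqref{Mg} force the right-hand side to be a polynomial of degree at most $g+1$ (with zero constant term, since $\cR(0)=0$ and $\cQ(0)=2\cP(0)=2$), so its coefficient of $x^j$ vanishes for every $j\geqslant g+2$. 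The non-vanishing first $g+1$ coefficients of the same identity instead determine the initial data $s_1,\dots,s_{g+1}$ in terms of the coefficients of $\cP,\cQ,\cR$, matching the last sentence of the theorem.

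Finally, I would extract the coefficient of $x^j$ from the left-hand side, using $\cQ(x)=2+\sum_{k=1}^g \q{k}x^k$ and $\cQ(x)-\cP(x)=1+\sum_{k=1}^g (\q{k}-\p{k})x^k$. The term $S^2\cQ$ contributes $2\sum_{i=1}^{j-1} s_i s_{j-i}+\sum_{k=1}^g \q{k}\sum_{i=1}^{j-k-1} s_i s_{j-k-i}$, whereas $2S(\cQ-\cP)$ contributes $2s_j+2\sum_{k=1}^g(\q{k}-\p{k})s_{j-k}$; subtracting the second from the first, setting the result to zero and solving for $s_j$ reproduces \eqref{sjrec} exactly. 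No real obstacle is anticipated: the argument is a single polynomial identity followed by coefficient comparison, and the only care required is in the bookkeeping of the truncation range $j\geqslant g+2$ and the factors of $2$ arising from the constant terms $\cP(0)=1$, $\cQ(0)=2$.
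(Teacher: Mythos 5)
Your proposal is correct and follows essentially the same route as the paper: both derive the quadratic identity $S^2\cQ-2S(\cQ-\cP)=\cR-\cQ+2\cP$ (the paper's equation \eqref{eq:quad_for_S} up to an overall factor of $-2$) by substituting $y=(1-S)\cQ-\cP$ into $y^2=\cP^2+\cQ\cR$, then read off the initial values from the coefficients of $x,\dots,x^{g+1}$ and the recursion \eqref{sjrec} from the vanishing of the coefficients of $x^j$ for $j\geqslant g+2$. The Hankel-determinant formula itself is, in both treatments, an immediate appeal to Stieltjes' classical result \eqref{hankelform}.
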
 
\begin{proof}
In order to prove the recursion formula \eqref{sjrec}, we will derive a quadratic formula for $S(x)=1-F_0$,
introduced in \eqref{eq:F1_bis}. We use (\ref{eq:F1}) to write 
\begin{equation}\label{eq:G_1} 
  y=-\cP(x)+F_0\cQ(x)=-\cP(x)+(1-S(x))\cQ(x)\;,
\end{equation}
which we substitute in
\begin{equation}\label{eq:curve_1}
  y^2=f(x)=\cP(x)^2+\cQ(x)\cR(x)
\end{equation}
to get the following quadratic equation for $S(x)$:
\begin{equation}\label{eq:quad_for_S}
  \big(\cQ(x)-\cP(x)\big)\, S(x)-\frac{1}{2}\cQ(x)\,S^2(x)+\cP(x) -\frac{1}{2}\cQ(x) +\frac{1}{2} \cR(x)=0\;.
\end{equation}
Substituting the power series for $S(x)$ into the quadratic, as well as the polynomials  $\cP,\cQ,\cR$, the coefficients 
of $x^j$ for $1\leqslant j\leqslant g+1$ allow the $g+1$ initial values $s_1,\ldots,s_{g+1}$ to be determined from 
these polynomials. Then, upon 
taking the coefficient of $x^j$ for $j\geqslant g+2$, the recursion relation (\ref{sjrec}) is obtained directly. Observe 
that the number of initial values plus independent coefficients appearing linearly in the recursion is $g+1+2g=3g+1=\,$dim$\,M_g$.
\end{proof}
As we will see in the examples, it is often more practical not to fix the curve $y^2=f(x)$, i.e.\ not to fix the
values of all invariants, but only fix some of them and take $w_0,\dots,w_g$ as extra initial conditions.
\begin{exa}\label{genus1hankel}
We specialize the above results to $g=1$, fixing arbitrary constants $c_1$ and $c_2$ and taking the Volterra map
on the surface $H_1=c_1,\;H_2=c_2$, as in one of the reductions considered in Example \ref{exa:g=1_rec}. For the
recursion \eqref{sjrec} we can take $p_1,q_1$ as initial conditions, since given $p_1$ and $q_1$, specifying the
values of $H_1$ and $H_2$ is equivalent to specifying the values of $r_1$ and $r_2$ (see the explicit formulas for
$H_1$ and $H_2$ in the left column of \eqref{eq:hams_g=1}). It follows from \eqref{eq:pg1} that
\begin{equation*}
  p_1-q_1=-2w_1-\frac{c_1}2\;,\quad\hbox{and}\quad
  q_1=2\left(w_1+w_0+\frac{c_1}2\right)=-2w_2+\frac1{2w_1}\left(\frac{c_1^2}4-c_2\right)\;,
\end{equation*}
where we obtained the last equality by using the recursion relation \eqref{eq:g=1_red_H2}, shifted by 1, to replace
$w_0$ by $w_2$. Substituted in \eqref{sjrec}, we get the following recursive formula for $s_j$ ($j\geqslant3$) in
terms of $w_1,w_2$, which we can take as initial conditions, instead of $p_1$ and $q_1$:
\begin{equation}\label{eq:rec_s_g=1}
  s_j=\left(-2w_1-\frac{c_1}2\right)s_{j-1}+
  \sum_{i=1}^{j-1}s_is_{j-i}+\left(\frac1{4w_1}\left(\frac{c_1^2}4-c_2\right)-w_2\right)\sum_{i=1}^{j-2}s_is_{j-1-i}\;.
\end{equation}
Then specifying the two initial values $w_1,w_2$ fixes the  initial conditions $s_1,s_2$ for the above, as $s_1=w_1$ and $s_2=w_1w_2$.
Notice that $s_j$ is a polynomial of degree $j$ in $w_1,w_2$, with $w_1|s_j$ for all $j$. 

\smallskip

As a concrete example, consider the curve $y^2=1-10x+29x^2-24x^3$, with initial conditions $s_1=1$, $s_2=2$ (or,
equivalently, $w_1=1$, $w_2=2$). Since $c_1=-10$ and $c_2=29$, the recursion \eqref{eq:rec_s_g=1} becomes
$$ 
  s_j = 3s_{j-1}+ \sum_{i=1}^{j-1}s_is_{j-i}-3\sum_{i=1}^{j-2}s_is_{j-i-1}, \quad j\geqslant 3, 
$$
which generates the sequence 
$$ 
(s_j)_{j\geqslant 1}: \quad 1,2,7,27, 109, 456, 1969, 8746, 39825, 185266, \ldots, 
$$ 
producing  
$
\Delta_1=1$, 
$\Delta_2=2$, 
$$ 
\Delta_3=\left| \begin{array}{cc} 1 & 2 \\ 2 & 7
\end{array}\right|=3,\,  
\Delta_4=\left| \begin{array}{cc} 2 & 7 \\ 7 & 27
\end{array}\right|=5,\,
\Delta_5=\left| \begin{array}{ccc} 1 & 2  & 7 \\ 2 & 7 & 27 \\ 7 & 27 & 109
\end{array}\right|=11,\,  
\Delta_6=\left| \begin{array}{ccc} 2 & 7 & 27 \\ 7 & 27 & 109 \\ 27 & 109 & 456
\end{array}\right|=37,\ldots, 
$$
which extends symmetrically to all $n\in\Z$ to produce the original Somos-5 sequence \cite{oeis}, 
$$ 
\ldots, 3,2, 
1,1,1,1,1,2,3,5,11,37,83,274,1217,6161,22833,\ldots, 
$$ 
generated by the bilinear recurrence 
\beq\label{s5orig}
\Delta_{n+5}\Delta_n=\Delta_{n+4}\Delta_{n+1}+\Delta_{n+3}\Delta_{n+2}. 
\eeq
It is a particular case of \eqref{s5} with $c_2'=c_3'=1$, where the latter are computed from $c_1=-10,$ $c_2=29$,
$c_3=-24$, using \eqref{eq:c_to_c'}. 
\end{exa}

\begin{remark}
Note that Hankel determinant formulae for Somos-5 were previously obtained in the work 
of Chang, Hu and Xin, using a bilinear B\"acklund transformation for Somos-4. 
We will return to the connection with Somos-4 in Section \ref{sec:continuous}, 
but for now we just point out that the Hankel determinant expressions found in  \cite{chang} are more complicated than the above, 
because two different moment sequences are required for the terms with even/odd indices.  
In particular, for the original Somos-5 sequence, there 
are two sequences of moments, namely $(\bar{s}_j)_{j\geqslant 0}: \, 1,-1,4,-8,25,-65,197,-571,1753,-5351,16746,-52626, \ldots$, and 
$(\hat{s}_j)_{j\geqslant 0}:\, 
2,-1,3,-1,12,2,61,39,352,374,2210,3162, \ldots$, which are defined by 
$$ 
\bar{s}_0=1, \,\bar{s}_1=-1,\, \bar{s}_{j+1}=-\bar{s}_j+2\bar{s}_{j-1}+\sum_{i=0}^{j-1} \bar{s}_i\bar{s}_{j-i-1},  \quad\mathrm{and}\quad
\hat{s}_0=2, \,\hat{s}_1=-1, \,\hat{s}_{j+1}=\hat{s}_j+\sum_{i=0}^{j-1} \hat{s}_i\hat{s}_{j-i-1},  
$$
respectively,  where the recursions hold for $j\geq 1$,  
and (with the indexing convention of Theorem 1.2 in \cite{chang}) the terms of 
the Somos-5 sequence \cite{oeis} are then given by 
$\rS_0=1,\rS_1=1,\rS_2=\bar{s}_0=1,\rS_3=\hat{s}_0=2$, and 
$$ 
\rS_4=\left| \begin{array}{cc} 1 & -1 \\ -1 & 4
\end{array}\right|=3,\,  
\rS_5=\left| \begin{array}{cc} 2 & -1 \\ -1 & 3
\end{array}\right|=5,\,
\rS_6=\left| \begin{array}{ccc} 1 & -1  & 4 \\ -1 & 4 & -8 \\ 4 & -8 & 25
\end{array}\right|=11,\,  
\rS_7=\left| \begin{array}{ccc} 2 & -1 & 3 \\ -1 & 3 & -1 \\ 3 & -1 & 12
\end{array}\right|=37,\ldots, 
$$
which are not related to the determinants in Example \ref{genus1hankel} in a 
straightforward manner.
\end{remark}

\begin{exa}\label{genus2hankel}
We now specialize the above results to $g=2$, thereby continuing Example \ref{exa:g=2_rec}. From (\ref{g2nuab}) 
it is clear that fixing the values $c_1,c_2,c_3$ is equivalent to specifying the parameters 
$a,b$ and $\nu$, 
which we fix, since these are the coefficients 
in the recurrence relation (P.iv), and we can take $w_0,w_1,w_2,w_3$ (or $w_1,w_2,w_3,w_4$) as initial data for the latter. 
We now have 
$$ 
s_1=w_1, \quad s_2=w_1w_2, \quad s_3 = w_1w_2(w_2+w_3),
$$
providing the 3 
initial values for the
recursion \eqref{sjrec}, which takes the form
\begin{equation}\label{g2rec}
   s_j = \hat\alpha\,s_{j-1}+\hat\beta\, s_{j-2}+\sum_{i=1}^{j-1}s_is_{j-i} +\hat\gamma \, \sum_{i=1}^{j-2}s_is_{j-i-1} +
   \hat\delta\,\sum_{i=1}^{j-3}s_is_{j-i-2}\;, \quad j\geqslant 4\;.
\end{equation}
While $s_1,s_2,s_3$ are determined by $w_1,w_2,w_3$ only, $w_0$ and $a,b,\nu$ are required to find 
the coefficients $\hat\al,\dots,\hat\delta$, which are computed using $\p{1}=2w_0+\nu$ (recall that $\nu=c_1/2$) and
\eqref{eq:p2_g=2}, \eqref{eq:q_g=2} for $n=1$, to give 
\begin{align}\label{abgam}
  \hat{\alpha} &=\p{1}-\q{1}= -2w_1-\nu\;,\nonumber\\
  \hat{\beta} &=\p{2}-\q{2}= -2w_1(w_2+w_1+w_0+\nu)-b\;,\nonumber\\
  \hat{\gamma}&=\frac12\q{1}= w_1+w_0+\nu\;, \\ 
  \hat{\delta}&= \frac12\q{2}=w_{-1}w_0+w_1w_2+(w_0+w_1)^2+\nu(w_0+w_1)+b\nonumber\\
  &=-\Big(w_2w_3+w_1w_2+w_2^2+w_0w_2+\nu w_2+\frac{a}{w_1}\Big)\;,\nonumber 
\end{align}
where, in the last equality, we have used the recurrence relation \eqref{pivmap} to replace $w_{-1}$ by $w_3$. If desired, one can apply \eqref{pivmap} once again to 
replace $w_0$ by $w_4$ in the above expressions, but we have not done this.

As a particular numerical example, 
we take the rational orbit (\ref{rational_orbit}) of (P.iv) considered in Section 
\ref{sec:singularity}. Upon fixing $w_0=w_1=w_2=w_3=1$ and $\nu=3$, $a=5$, 
$b=7$, 
we see that for $j\geq 4$  
the recursion (\ref{g2rec})  becomes 
$$
s_j = -5\,s_{j-1}-19 \, s_{j-2} 
+\sum_{i=1}^{j-1}s_is_{j-i}
+5 \, \sum_{i=1}^{j-2}s_is_{j-i-1}
-12 \,\sum_{i=1}^{j-3}s_is_{j-i-2}\;, 
$$
and the three initial values $s_1=s_2=1$, $s_3=2$ lead to the following moment %
sequence and Hankel determinants: 
$$(s_j)_{j\geqslant 1}: \, \,\,
1,1,2, -26,45,11,-116,553,1151,-26727,108897,-169157,-310959,3004412,-4722005,\ldots,$$
\begin{align*} 
\Delta_1=&1, \,
\Delta_2=1, \,
\Delta_3=\left| \begin{array}{cc} 1 & 1 \\ 1 & 2
\end{array}\right|=1,  \, 
\Delta_4=\left| \begin{array}{cc} 1 & 2 \\ 2 & -26
\end{array}\right|=-30, \, 
\Delta_5=\left| \begin{array}{ccc} 1 &1 & 2   \\ 1&  2 &  -26 \\ 2 & -26 & 45
\end{array}\right|=-743,\\  
\Delta_6= &\left| \begin{array}{ccc} 1 &  2 & -26 \\ 2 & -26 & 45 \\ -26 & 45 & 11
\end{array}\right|=10541, \qquad
\Delta_7=\left| \begin{array}{cccc} 1 &1 & 2 & -26   \\ 1&  2 &  -26 & 45 \\
 2 & -26 & 45 & 11 \\ 
-26 & 45 & 11 &-116 
\end{array}\right|=127318,\,\dots
.
\end{align*}
This reproduces the sequence of tau functions in Example \ref{experitau}, if we identify 
$\Delta_{n-3}=\tau_{n}$ in (\ref{rattauseq}).

\end{exa}

The preceding explicit form of the recursion for the entries of the Hankel determinants when $g=2$ yields a 
simple proof of the Laurent property for (\ref{taurec}). 

\begin{prfr} 
For $n\geqslant 1$ we have 
$$ 
  w_n = \frac{\tau_n\tau_{n+3}}{\tau_{n+1}\tau_{n+2}}=\frac{\Delta_{n-3}\Delta_n}{\Delta_{n-2}\Delta_{n-1}}\;, 
$$
where $\tau_n$ satisfies (\ref{taurec}). Hence the tau functions are given by Hankel determinants, up to a shift of
index and a gauge transformation of the form (\ref{gauge}), with a different scaling for even/odd $n$. Comparing
with the values $\Delta_{-2}=\Delta_{-1}=\Delta_0=1$, we see that the relation between the two sequences is
\beq\label{tauhankel}
  \tau_{2k+1}=\tau_1\, \left(\frac{\tau_3}{\tau_1}\right)^k\, \Delta_{2k-2}, \qquad
  \tau_{2k+2}=\frac{\tau_1\tau_2}{\tau_3}\, \left(\frac{\tau_3}{\tau_1}\right)^{k+1}\, \Delta_{2k-1}
\eeq
for $k\geqslant 0$.  Recall that $\cal R$ denotes the ring formed of Laurent polynomials in $\tau_2,\tau_3,\tau_4$
and polynomials in $\tau_0,\tau_1,\tau_5,\tau_6$ with coefficients in $\Z[a,b,\nu]$. Upon rewriting the formulae
(\ref{abgam}) for the coefficients in (\ref{g2rec}) in terms of the 7 initial tau functions, we see that
$\hat{\al}$, $\hat{\be}\in{\cal R}$, but (due to the presence of terms involving $w_0$ and $1/w_1$), $\hat{\gam}$
and $\hat{\delta}$ both have $\tau_1$ appearing in the denominator, so instead $\hat{\gam}$,
$\hat{\delta}\in\tau_1^{-1}{\cal R}$. However, the three initial values are
$$
  s_1=\frac{\tau_1\tau_4}{\tau_2\tau_3}, \quad 
  s_2= \frac{\tau_1\tau_5}{\tau_3^2}, \quad 
  s_3 = \tau_1\left(\frac{\tau_2\tau_5^2}{\tau_3^3\tau_4}+\frac{\tau_6}{\tau_3\tau_4}\right), 
$$ 
so $s_j\in\tau_1{\cal R}$ for $j=1,2,3$. Then by induction, since $\hat{\gam}$ and $\hat{\delta}$ appear in front
of terms of degree 2 in $s_i$ in the recursion (\ref{g2rec}), it follows that $s_j\in\tau_1{\cal R}$ for all
$j\geqslant 1$.  Then since $\Delta_{2k-1}$ and $\Delta_{2k}$ are $k\times k$ determinants, a factor of $\tau_1$
can be taken out of each row (or column), so they are each given by an overall factor of $\tau_1^k$ times an
element of ${\cal R}$. Thus the powers of $\tau_1$ exactly cancel in (\ref{tauhankel}), and hence $\tau_n\in{\cal
  R} $ for all $n\geqslant 1$.
\end{prfr}  

\begin{remark} {\bf Hankel determinant formula for negative indices:} 
As was previously noted, the Laurent property for negative indices $n$ follows automatically from the birationality and reversing symmetry of  (\ref{taurec}), but it
can also be shown directly from an appropriate extension of (\ref{tauhankel}) to $k<0$. In fact, for any $g$ there is a version of the Hankel determinant 
formula (\ref{hankelform})  which is valid for $n\leqslant 0$. Indeed, the Volterra map arises from the S-fraction expansion  (\ref {eq:F1}) of the function $F_0$, 
based on the power series $S(x)$, with $F_0=1-S$, as in (\ref{eq:F1_bis}), but more precisely this is the expansion around the  
point $(0,1)$ on the curve $\Gamma_f$ given by (\ref{weier}), with $x$ being a local parameter. 
The inverse Volterra map arises from another 
S-fraction,  associated with a power series $S^*(x)$, corresponding to the expansion of the same function $F_0$ around the  
point $(0,-1)\in\Gamma_f$, that is 
\begin{equation}\label{eq:F1_neg}
 F_0= \cfrac{w_0x}{ 1 -\cfrac {w_{-1}x}{ 1-\cfrac{w_{-2} x}{1- \cdots} } }=\sum_{j=1}^\infty s_j^* x^j =:S^*(x)\;. 
\end{equation}
Then the extension of (\ref{hankelform}) to non-positive values of the index is 
\begin{equation}\label{hankelneg} 
w_n = \frac{\Delta^*_{-n-2}\Delta^*_{-n+1}}{\Delta^*_{-n}\Delta^*_{-n-1}} \qquad \mathrm{for}\,\,n\leqslant0\;, 
\end{equation}
where $\Delta^*_{2k-1}=\det (s^*_{i+j-1})_{i,j=1,\ldots, k}$ and $\Delta^*_{2k}=\det (s^*_{i+j})_{i,j=1,\ldots, k}$
for $k\geqslant1$, with $\Delta^*_{-2}=\Delta^*_{-1}=\Delta^*_{0}=1$.  Mutatis mutandis, this is proved in the same
way as Theorem \ref{hankel}, and the moments $s_j^*$ satisfy another recursion of the form (\ref{sjrec}). The two
sets of Hankel determinants combine to produce a single sequence of tau functions $(\tau_n)_{n\in\Z}$, consistently
defined by taking $\tau_n=\Delta_{n-3}$ for $n\geqslant 1$, and $\tau_n=\Delta^*_{-n+1}$ for $n\leqslant 3$.
\end{remark}

\subsection{The birational map $M_g\to\C^{3g+1}_w$}\label{par:birat_trans}

Elaborating further on the S-fraction of $F=\frac{y+\cP(x)}{\cQ(x)}$, we construct a birational map between $M_g$
and $\C^{3g+1}_w$, where the latter stands for $\C^{3g+1}$ equipped with $w_1,\dots,w_{3g+1}$ as affine 
coordinates. 
We may call it the
\emph{unreduced} birational map, in view of the birational map $M_g^c\to\C^{2g+1}_w$ which we obtained by reduction (fixing $c=(c_1,\ldots,c_g)$). This unreduced map turns out to be less convenient for deriving
the Volterra map in the $w_i$ coordinates, but it is nevertheless useful for obtaining abstract results 
in these coordinates.

We start from the equation \eqref{eq:quad_for_S}, which we can view as a linear relation for $\cP(x),\cQ(x)$ and $\cR(x)$. 
It amounts to an infinite linear system of equations for the coefficients 
$\p1,\dots,\p g,\q1,\dots,\q g,\r1,\dots,\r{g+1}$ of these polynomials, 
given in terms of the coefficients $s_j$  of the power series $S(x)$ defined in \eqref{eq:F1_bis}.
We show that its solution can be expressed rationally in terms of
$s_1,\dots,s_{3g+1}$, and hence in terms of $w_1,\dots,w_{3g+1}$, yielding the birational map. To do this, we
investigate the first $3g+1$ equations only, namely the ones corresponding to the coefficient of $x^j$ in \eqref{eq:quad_for_S}.
For convenience, we set  
$$\rho_k:=\sum_{i+j=k}s_is_j,$$ for $k\geqslant2$, so that  
  $S^2=\sum_{i+j=k}s_is_jx^k=\sum_{k=2}^\infty\rho_kx^k$, 
and then consider the terms of \eqref{eq:quad_for_S} at each order in $x$. 
At leading order, the constant term cancels, and the  coefficients of $x,x^2,\dots,x^{g+1}$ can be used to obtain $r_1,\dots,r_{g+1}$ in terms of $\p1,\dots,\p
g,\q1,\dots,\q g,$ and $s_j$ for $1\leqslant j\leqslant g+1$, so it is sufficient to show that we can solve the coefficients of $x^j$ 
with ${g+2}\leqslant j\leqslant{3g+1}$ for $\p1,\dots,\p g,\q1,\dots,\q g$ in terms of
$s_1,\dots,s_{3g+1}$. For each such $j$, the equation at order $x^j$ in  \eqref{eq:quad_for_S} is given by (\ref{sjrec}), 
which we can rewrite as 
\begin{equation}\label{eq:linear_system}
  \sum_{i=1}^{g} s_{j-i}(\p i-\q i)+\frac12\sum_{i=1}^{g}\rho_{j-i}\q i =s_j-\rho_j\;,\qquad \mathrm{for} \quad g+2\leqslant j\leqslant 3g+1\;.
\end{equation}
We view \eqref{eq:linear_system} as a linear system in the $2g$ variables $\p j-\q j$ and $q_j/2$, with
$i=1,\dots,g$. In matrix form, it is written as
\begin{equation*}
  \begin{pmatrix}
    s_{g+1}&\cdots&s_2&\rho_{g+1}&\cdots&\rho_2\\
    \vdots&\ddots&\vdots&\vdots&\ddots&\vdots\\
    s_{2g}&\cdots&s_{g+1}&\rho_{2g}&\cdots&\rho_{g+1}\\
    s_{2g+1}&\cdots&s_{g+2}&\rho_{2g+1}&\cdots&\rho_{g+2}\\
    \vdots&\ddots&\vdots&\vdots&\ddots&\vdots\\
    s_{3g}&\cdots&s_{2g+1}&\rho_{3g}&\cdots&\rho_{2g+1}
  \end{pmatrix}
  \begin{pmatrix}
    p_1-q_1\\
    \vdots\\
    p_g-q_g\\
    q_1/2\\
    \vdots\\
    q_g/2
  \end{pmatrix}
  =
  \begin{pmatrix}
    s_{g+2}-\rho_{g+2}\\
    \vdots\\
    s_{2g+1}-\rho_{2g+1}\\
    s_{2g+2}-\rho_{2g+2}\\
    \vdots\\
    s_{3g+1}-\rho_{3g+1}
  \end{pmatrix}
\end{equation*}  
so that the matrix which governs the linear system has the following $2\times 2$ block form, each block being a
Toeplitz matrix,
\begin{equation}\label{eq:block_matrix}
  \begin{pmatrix}
    T_g(s,g)&T_g(\rho,g)\\
    T_g(s,2g)&T_g(\rho,2g)\\
  \end{pmatrix}
  \quad\hbox{where}\quad
  T_g(\sigma,k):=
  \begin{pmatrix}
    \sigma_{k+1}&\sigma_k&\dots&\sigma_{k-g+2}\\
    \sigma_{k+2}&\sigma_{k+1}&\ddots\\
    \vdots&\ddots&\ddots&\sigma_k\\
    \sigma_{k+g}&&\sigma_{k+2}&\sigma_{k+1}
  \end{pmatrix}
\end{equation}
Our claim that $\p1,\dots,\p g,\q1,\dots,\q g$ can be solved rationally in terms of $s_1,\dots,s_{3g+1}$ then
follows from the fact that its determinant, which is a polynomial in $s_1,\dots,s_{3g+1}$, is non-zero. To show
that the determinant is non-zero it is sufficient to show it for one particular set of values of the $s_i$. We pick
$s_i=1$ for $i=g$ and $i=g+1$ and $s_i=0$ for all other values of $i$. Then $\rho_i=1$ for $i=2g$ and $i=2g+2$,
$\rho_i=2$ for $i=2g+1$ and $\rho_i=0$ for all other values of $i$. With this choice, $T_g(s,g)$ is upper
triangular, with all diagonal entries equal to $1$, $T_g(s,2g)$ is the zero matrix and $T_g(\rho,2g)$ is a
tridiagonal matrix, with all diagonal entries equal to $2$ and all superdiagonal and subdiagonal entries equal to
$1$. It is easily verified that its determinant is $g+1$, hence non-zero, showing our claim.

\begin{exa}
The simplest example is the case of $g=1$. Then $3g+1=4$. The birational map between $s_1,\dots,s_4$ and
$w_1,\dots,w_4$ is given by
\begin{equation}\label{eq:s_to_w_g=1}
  s_1=w_1\;,\quad s_2=w_1w_2\;,\quad s_3=w_1w_2(w_2+w_3)\;,\quad s_4=w_1w_2\left((w_2+w_3)^2+w_3w_4\right)\;.
\end{equation}
The linear system for $\p1,\q1,\r1,\r2$ then corresponds to the coefficients of $x^j$ in 
\eqref{eq:quad_for_S} with $j=1,\dots,4$. They are explicitly given by
\begin{align}\label{eq:lin_sys_g=1}
  &\q1-2\p1-2s_1-r_1=0\;,\nonumber\\
  &2\p1s_1-2\q1s_1-2s_2+2\rho_2-\r2=0\;,\nonumber\\
  &2\rho_3-2s_3+(\rho_2-2s_2)\q1+2s_2\p1=0\;,\\
  &2\rho_4-2s_4+(\rho_3-2s_3)\q1+2s_3\p1=0\;.\nonumber
\end{align}
As in the general case, we first solve the last $2g=2$ equations for $\p1,\q1$, which gives, upon writing each $\rho_k$
in terms of the $s_j$,
\begin{align*}
  p_1&=2\frac{-3s_1^2s_2^2-2s_2^3+2(s_1s_2+s_1^3-s_3)s_3+(2s_2-s_1^2)s_4}{s_1(2s_2^2-s_1s_3)}
  =-w_1-w_2+w_3+\frac{w_3(w_1-w_4)(w_1-2w_2)}{w_1(w_2-w_3)}\;,\\
  q_1&=2\frac{s_2s_4-s_3^2-s_2^3}{s_1(2s_2^2-s_1s_3)}=-2w_2\left(1+\frac{w_3(w_1-w_4)}{w_1(w_2-w_3)}\right)\;,
\end{align*}
where we have used \eqref{eq:s_to_w_g=1} to write the formulae in terms of $w_1,\dots,w_4$. The first two equations
can now be solved for $r_1$ and $r_2$, leading to the following expressions:
\begin{equation*} 
  r_1=-2w_3\left(1+\frac{(w_1-w_2)(w_1-w_4)}{w_1(w_2-w_3)}\right)\;,\qquad
  r_2=2w_1w_3\left(1+\frac{w_1-w_4}{w_2-w_3}\right)\;.
\end{equation*}
Note also that in this case the unreduced version of the Volterra map is the birational map of $\C^{4}_w$ defined by 
\begin{align}\label{wmapg1}
&  
(w_{1},w_2,w_{3},w_{4})\mapsto (w_2,w_{3},w_{4},w_{5})\;,\nonumber\\
& \frac{w_{5}w_{4}+w_{4}^2-w_{3}^2-w_{3}w_2}{w_{4}-w_{3}} 
+  \frac{w_{1}w_{2}+w_{2}^2-w_{3}^2-w_{3}w_{4}}{w_{3}-w_2} =0\;,
\end{align} 
which (after replacing each $w_i\to w_{i-2}$) is the relation obtained by eliminating $c_1$ from (\ref{eq:rec_g1}). 
\end{exa}

For any $g$, we can describe an explicit algorithmic procedure for obtaining the exact expressions for the birational map between $s_1,\ldots,s_{3g+1}$ 
and $w_1,\ldots,w_{3g+1}$. One way round, this is given by the Hankel determinant formula (\ref{hankelform}), but to describe the inverse map 
more explicitly we must recall 
how this relates to the 
approximation problem originally considered by Stieltjes \cite{stieltjes}, 
as well as 
other classical results on convergents of continued fractions in terms of continuants (for a concise review of the latter, see \cite{duv}). 

The convergents  of the S-fraction (\ref{eq:F1}) are the sequence of rational functions of $x$ obtained by truncating the continued fraction at some finite line $n$, 
which approximate the series for $F_0=1-S$ exactly up to and including the coefficient of $x^n$, that is 
\beq\label{sapprox}
 {\cal F}_n(x):= 
  1-\cfrac{w_1x}{ 1 -\cfrac {w_2x}{ 1-\cfrac{\cdots}{1- w_nx} } }=1-S(x) +\mathcal O(x^{n+1}), 
\eeq 
Thus, for the first few convergents  we have 
$$ 
{\cal F}_0=1, \quad {\cal F}_1=1-w_1x, \quad {\cal F}_2=\frac{1-(w_1+w_2)x}{1-w_2x}, 
\quad {\cal F}_3=\frac{1-(w_1+w_2+w_3)x+w_1w_3x^2}{1-(w_2+w_3)x}, 
$$ 
and so on. By the usual correspondence between convergents and $2\times2$ matrices, we see that the 
monodromy product over the conjugation matrices  (\ref{LM}) appearing in the discrete Lax equation is given 
by 
\beq\label{monod}  {\boldsymbol{\Phi}}_{n+1}:=
\mma_1(x)  \mma_2(x)\cdots  \mma_{n+1}(x)=\left( \begin{array}{cc} {\cal K}_{n+1}(w_1,\ldots, w_n; x)  &  
-w_nx\,{\cal K}_{n}(w_1,\ldots, w_{n-1}; x)
\\
{\cal K}_{n}(w_2,\ldots, w_{n}; x) & -w_nx\,{\cal K}_{n-1}(w_2,\ldots, w_{n-1}; x) 
\end{array} \right) , 
\eeq 
with the $n$th convergent being the ratio of the entries in the first column, that is 
$$ 
{\cal F}_n(x)= \frac{{\cal K}_{n+1}(w_1,\ldots, w_n; x)}{{\cal K}_{n}(w_2,\ldots, w_{n}; x)}, 
$$ 
where the polynomial ${\cal K}_n$ is a continuant of size $n$, that is 
$$ 
{\cal K}_{n}(w_1,\ldots, w_{n-1}; x):= \left|\begin{array}{rrrrr} 
{1} & {-1} & {0} & {\dots} & {0} \\ 
{ -w_1 x} & {1} & {-1} & {\ddots} & {\vdots} \\ 
	{0}  & {-w_2 x} & {1} & {\ddots} & {0} \\ 
{\vdots} & & {\ddots} & {\ddots} & {-1} \\ 
	{0} & {\dots} & {0} & { -w_{n-1}x}  & {1} 
\end{array}  
\right| 
.
$$ 
From (\ref{monod}), the continuants are generated recursively from the linear relation ${\boldsymbol{\Phi}}_{n+1}= {\boldsymbol{\Phi}}_{n}\mma_{n+1}$, 
starting from ${\boldsymbol{\Phi}}_{0}={\boldsymbol{1}}$ (the identity matrix). 
Hence, by replacing the series for $F_0=1-S$ in (\ref{sapprox}) by its truncation at the $n$th term,  then multiplying by ${\cal K}_{n}(w_2,\ldots, w_{n}; x)$ on both sides, 
we find the relation 
$$ 
\Big(1-\sum_{j=1}^n s_j x^j \Big){\cal K}_{n}(w_2,\ldots, w_{n}; x) = {\cal K}_{n+1}(w_1,\ldots, w_n; x) +{\cal O} (x^{n+1}), 
$$ 
and comparing the coefficients of $x^j$ for $j=1,\ldots,n$  allows $s_1,s_2,\ldots,s_n$ to be calculated recursively as polynomial expressions in 
$w_1,\ldots,w_n$.   For instance, when $n=4$ the expressions \eqref{eq:s_to_w_g=1} are obtained from the 
numerator and denominator of 
$$ 
{\cal F}_4 = \frac{1-(w_1+w_2+w_3+w_4)x+(w_1w_3+w_1w_4+w_2w_4)x^2}{1-(w_2+w_3+w_4)x+w_2w_4x^2}. 
$$ 
Observe that these expressions are universal, in the sense that they depend only on the structure of the S-fraction, so that 
each $s_n$ is always given by the same polynomial in $w_j$ for $1\leqslant j\leqslant n$, independent of $g$. 

For $g\geqslant 2$, 
the corresponding formulae for the coefficients of $\cP(x),\cQ(x), \cR(x)$ in terms of 
$w_1,\ldots,w_{3g+1}$ become very complicated, and we do not have a compact way to write them. 
However, we will not need the explicit form of these formulae in what follows, even when dealing with the
example of genus 2.

\section{Discrete integrability}\label{sec:adi}
\setcounter{equation}{0}

We show in this section that the map (P.iv) is a discrete a.c.i.\ system, in the sense of Definition
\ref{def:adi}. To do this, we will first show that the affine space $M_g$ ($g\geqslant1$) of \eqref{Mg} which was constructed in
the previous section has the same integrability properties as the odd and even Mumford systems (see
\cite{tata2,vanhaecke}): the invariants $H_i$ of the Volterra map are in involution with respect to a large family
of compatible Poisson structures and their generic level sets are affine parts of Jacobi varieties. We then show
that the Volterra map is a Poisson map with respect to these Poisson structures and that it is a translation on the
latter complex tori.  In Appendix A (Section \ref{par:even_mumford}) we further show the precise relation between
the Mumford-like system, introduced here, and the even Mumford system.

\subsection{Compatible Poisson structures for the Mumford-like system}
We first introduce a $g+1$-dimensional family of compatible Poisson structures on $M_g$. The family is parametrized
by the $g+1$-dimensional vector space of polynomials $\phi\in\C[x]$ of degree at most $g+1$, vanishing at $0$, so
$\phi(0)=0$. For such a polynomial $\phi$, the corresponding Poisson structure is most naturally written by viewing
$\cP,\,\cQ$ and $\cR$ as generating functions, and is defined by
\begin{align}\label{eq:mumford_poisson}
  \pb{\cP(x),\cP(y)}^\phi&=  \pb{\cQ(x),\cQ(y)}^\phi=0\;,\ \pb{\cR(x),\cR(y)}^\phi=\phi(y)\cR(x)-\phi(x)\cR(y)\;,
  \nonumber\\
  \pb{\cP(x),\cQ(y)}^\phi&=\frac{\phi(x)y\cQ(y)-\phi(y)x\cQ(x)} {x-y}\;,\ 
  \pb{\cP(x),\cR(y)}^\phi=-y\frac{\phi(x)\cR(y)-\phi(y)\cR(x)} {x-y}\;,\\
  \pb{\cQ(x),\cR(y)}^\phi&=2y\frac{\phi(x)\cP(y)-\phi(y)\cP(x)} {x-y}-\phi(y)\cQ(x)\;. \nonumber
\end{align}
Of course,  one needs to verify that the above definition is 
coherent,  in the sense that the right-hand side of each of these
formulae is indeed a polynomial in $x$ and $y$, and also that the right-hand side does not contain any monomials
$x^iy^j$ which are absent in the left-hand side. For example, $\pb{\cP(x),\cQ(y)}^\phi$ is a polynomial in $x,y$
with only non-zero coefficients of $x^iy^j$ when $1\leqslant i,j\leqslant g$, while $\phi(x)y\cQ(y)-\phi(y)x\cQ(x)$
is clearly divisible by $x-y$ and the quotient has only non-zero coefficients of $x^iy^j$ in the same range. The
same argument applies to the other formulae. Moreover, we need to verify that $\PB^\phi$ is a Poisson bracket,
i.e., that it satisfies the Jacobi identity. This follows easily from the formulae \eqref{eq:mumford_poisson}. Let
us show for example that
\begin{equation}\label{eq:jacobi_to_check}
  \pb{{\pb{\cQ(x),\cQ(y)}^\phi},\cR(z)}^\phi+\pb{\pb{\cQ(y),\cR(z)}^\phi,\cQ(x)}^\phi+
  \pb{\pb{\cR(z),\cQ(x)}^\phi,\cQ(y)}^\phi=0\;.
\end{equation}
The first term in \eqref{eq:jacobi_to_check} is zero since $\pb{\cQ(x),\cQ(y)}^\phi=0$. Also, by direct computation
using \eqref{eq:mumford_poisson},
\begin{align*}
  \pb{\pb{\cQ(y),\cR(z)}^\phi,\cQ(x)}^\phi
  &=\pb{2z\frac{\phi(y)\cP(z)-\phi(z)\cP(y)} {y-z}-\bcancel{\phi(z)\cQ(y)},\cQ(x)}^\phi\\
  &=\frac{2z\phi(y)}{y-z} \frac{\phi(z)x\cQ(x)-{\phi(x)z\cQ(z)}} {z-x}
    -\frac{2z\phi(z)}{y-z} \frac{\phi(y)x\cQ(x)-\phi(x)y\cQ(y)} {y-x}\\
    &=-\frac{2xz\phi(y)\phi(z)}{(z-x)(x-y)}\cQ(x)-\frac{2yz\phi(x)\phi(z)}{(y-z)(x-y)}\cQ(y)-
    \frac{2z^2\phi(x)\phi(y)}{(y-z)(z-x)}\cQ(z)\;.
\end{align*}
By skew-symmetry, $\pb{\pb{\cR(z),\cQ(x)}^\phi,\cQ(y)}^\phi=-\pb{\pb{\cQ(x),\cR(z)}^\phi,\cQ(y)}^\phi$, so that
\begin{equation*}
  \pb{\pb{\cR(z),\cQ(x)}^\phi,\cQ(y)}^\phi
    =\frac{2yz\phi(x)\phi(z)}{(y-z)(x-y)}\cQ(y)+\frac{2xz\phi(y)\phi(z)}{(z-x)(x-y)}\cQ(x)+
    \frac{2z^2\phi(x)\phi(y)}{(y-z)(z-x)}\cQ(z)\;.
\end{equation*}  
Summing up, we get indeed zero, which shows \eqref{eq:jacobi_to_check}.

Notice that since $\PB^\phi$ is a Poisson bracket for any $\phi\in\C[x]$ with $\deg\phi\leqslant g+1$ and
$\phi(0)=0$, the Poisson brackets $\PB^\phi$ are compatible, simply because $\PB^\phi+\PB^{\phi'} =
\PB^{\phi+\phi'}$ for any such polynomials $\phi,\phi'$.

We show in the next proposition that the Volterra map $\cV_g$ is a Poisson map with respect to each one of these
Poisson brackets. Recall from \eqref{PQrec2} and \eqref{weq} that $\cV_g$ is given by
\begin{equation}\label{eq:map_in_PQR}
  \tilde\cP(x)=\cQ(x)-\cP(x)\;,\qquad\tilde\cQ(x)=\frac{2\cP(x)-\cQ(x)+\cR(x)}{-w x}\;,\qquad
  \tilde\cR(x)=-w x\cQ(x)\;,
\end{equation}
where $w=- \frac{2\,p_1-q_1+r_1}{2} $.
\begin{propn}\label{prp:volterra_is_poisson}
  For any $\phi=\sum_{i=1}^{g+1}\phi_ix^i$, the Volterra map $\cV_g:M_g\to M_g$ is a (birational) Poisson map with
  respect to $\PB^\phi$.
\end{propn}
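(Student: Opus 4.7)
The plan is to reduce the Poisson-property to a direct computation on generators. Since the algebra of regular functions on $M_g$ is generated by the coefficients of $\cP(x), \cQ(x), \cR(x)$, it suffices to verify
\[
\pb{\cV_g^* X(x),\, \cV_g^* Y(y)}^\phi \;=\; \cV_g^*\pb{X(x),\, Y(y)}^\phi
\]
for each unordered pair $(X,Y) \in \{\cP,\cQ,\cR\}^2$. Moreover, by linearity of $\PB^\phi$ in $\phi$, I may further restrict to the monomials $\phi(x)=x^k$ with $1 \leqslant k \leqslant g+1$ and reassemble at the end by superposition.

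From \eqref{eq:map_in_PQR}, the pullbacks $\tilde\cP = \cQ - \cP$ and $\tilde\cR = -wx\,\cQ$ are \emph{polynomial} in the coordinates, since $w = -(2p_1-q_1+r_1)/2$ is linear. Hence the three ``easy'' brackets involving only $\tilde\cP(x), \tilde\cR(x), \tilde\cP(y), \tilde\cR(y)$ expand via Leibniz into a finite combination of the brackets listed in \eqref{eq:mumford_poisson}, and each must be compared against the pullback of the corresponding right-hand side of \eqref{eq:mumford_poisson}. In these three cases the verification is mechanical; for instance, $\pb{\tilde\cP(x), \tilde\cP(y)}^\phi$ already collapses to $0$ using $\pb{\cP(x),\cP(y)}^\phi=\pb{\cQ(x),\cQ(y)}^\phi=0$ and the skew-symmetric structure of $\pb{\cP(x),\cQ(y)}^\phi$.

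The three brackets involving $\tilde\cQ$ are harder because of the factor $1/w$ in its definition. I would sidestep direct differentiation of $1/w$ by using the identity $-wx\,\tilde\cQ(x) = 2\cP(x) - \cQ(x) + \cR(x)$ and applying Leibniz to obtain, for any function $F$ on $M_g$,
\[
-wx\,\pb{\tilde\cQ(x),\, F}^\phi \;=\; \pb{2\cP(x)-\cQ(x)+\cR(x),\, F}^\phi \;+\; x\,\tilde\cQ(x)\,\pb{w, F}^\phi.
\]
Taking $F = \tilde\cP(y)$ or $\tilde\cR(y)$ reduces each identity to a polynomial identity after clearing the $wx$ factor; for $F = \tilde\cQ(y)$, the same trick is applied a second time in the $y$-variable, leaving a polynomial identity in $\C[M_g]$ after clearing $w^2xy$. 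Since $w$ depends only on $p_1, q_1, r_1$, the auxiliary brackets $\pb{w,\cdot}^\phi$ are elementary to read off from \eqref{eq:mumford_poisson}.

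The main obstacle will be the bookkeeping in the $\pb{\tilde\cQ(x), \tilde\cQ(y)}^\phi$ identity, where the double Leibniz expansion produces a central $\pb{w,w}^\phi$ term together with four mixed cross terms of the form $\pb{w,\cdot}^\phi\cdot\pb{\cdot,w}^\phi$. These should telescope against the pullback of the right-hand side of $\pb{\cQ(x),\cQ(y)}^\phi=0$ once the explicit formulae in \eqref{eq:mumford_poisson} are substituted, but the algebra is heavy enough to warrant computer-algebra verification. Once the identity is established for each monomial $\phi(x)=x^k$, linearity of $\PB^\phi$ in $\phi$ yields the claim for every polynomial $\phi$ of the stated form.
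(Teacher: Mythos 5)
Your proposal follows essentially the same route as the paper's proof: one verifies $\pb{\tilde\cS(x),\tilde\cT(y)}^\phi=\widetilde{\pb{\cS(x),\cT(y)}^\phi}$ for each pair $\cS,\cT\in\{\cP,\cQ,\cR\}$, handling the $w$-dependence of $\tilde\cQ$ and $\tilde\cR$ through the auxiliary brackets $\pb{w,\cdot}^\phi$, which the paper records explicitly in \eqref{eq:brackets_with_alpha} (in particular $\pb{w,\cP(y)}^\phi=\pb{w,\cQ(y)}^\phi$, so $\pb{\tilde\cP(x),w}^\phi=0$, which collapses most of your cross terms). Your reduction to monomial $\phi$ by linearity and your device of clearing the $wx$ denominators rather than differentiating $1/w$ are only organizational variations on the same computation, and the plan is correct.
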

\begin{proof}
We need to check that $\cV_g$ preserves the Poisson bracket $\PB^\phi$. In formulae, this means that
$\pb{\tilde\cS(x),\tilde\cT(y)}^\phi=\widetilde{\pb{\cS(x),\cT(y)}^\phi}$, where $\cS$ and $\cT$ stand for any of
the polynomials $\cP,\cQ,\cR$. To do this, it helps 
to first use \eqref{eq:mumford_poisson} to derive  the
following formulae for the Poisson brackets of $w$ with the polynomials $\cP,\cQ,\cR$:
\begin{equation}\label{eq:brackets_with_alpha}
  \pb{w,\cP(y)}^\phi=\pb{w,\cQ(y)}^\phi=\frac{\phi_1}2\cQ(y)-\frac{\phi(y)}y\;,\quad
  \pb{w,\cR(y)}^\phi=\frac{\phi(y)}y-w\phi(y)-\frac{\phi_1}2(2\cP(y)+\cR(y))\;.
\end{equation}
To derive these formulae from \eqref{eq:mumford_poisson}, it suffices to use that $-2w$ is the linear term of
$2\cP(x)-\cQ(x)+\cR(x)$. For example,
\begin{align*}
  \pb{-2w,\cR(y)}^\phi
  &=\lim_{x\to0}\frac1x\pb{2\cP(x)-\cQ(x)+\cR(x),\cR(y)}^\phi\\
  &=\lim_{x\to0}\frac1x\left[-2y\frac{\phi(x)\cR(y)-\phi(y)\cR(x)}{x-y}-
    2y\frac{\phi(x)\cP(y)-\phi(y)\cP(x)}{x-y}+\phi(y)(\cQ(x)+\cR(x))-\phi(x)\cR(y)\right]\\
  &=\phi_1\cR(y)-\phi(y)r_1+2\phi_1\cP(y)+\phi(y)\lim_{x\to0}\frac{2y\cP(x)+(x-y)\cQ(x)}{x(x-y)}\\
  &=\phi_1\cR(y)-\phi(y)r_1+2\phi_1\cP(y)-2\frac{\phi(y)}y-(2p_1-q_1)\phi(y)\\
  &={\phi_1}(2\cP(y)+\cR(y))-2\frac{\phi(y)}y+2w\phi(y)\;,
\end{align*}
which proves the last formula in \eqref{eq:brackets_with_alpha}. It is now easy to prove that the Volterra map is a
Poisson map. For example, 
%
%
it is clear from \eqref{eq:brackets_with_alpha} that $\pb{\cP(x)-\cQ(x),w}^\phi=0$, and so
\begin{align*}
  \pb{\tilde\cP(x),\tilde\cR(y)}^\phi&=\pb{\cQ(x)-\cP(x),-w y\cQ(y)}^\phi=w y\pb{\cP(x),\cQ(y)}^\phi
  =w y\frac{\phi(x)y\cQ(y)-\phi(y)x\cQ(x)} {x-y}\\
  &=-y\frac{\phi(x)\tilde\cR(y)-\phi(y)\tilde\cR(x)}{x-y}=    \widetilde{\pb{\cP(x),\cR(y)}^\phi}\;.
\end{align*}
The other verifications are done in the same way. This shows that the Volterra map is a Poisson map.
\end{proof}

We show in the next proposition that the invariants of the Volterra map, as introduced in the previous section, are in
involution with respect to any of the Poisson brackets $\PB^\phi$, where we recall that $\deg\phi\leqslant g+1$ and
$\phi(0)=0$.
\begin{propn}\label{prp:H_in_involution}
  The $2g+1$ invariants $H_1,\dots,H_{2g+1}$ of the Volterra map, defined by
  \begin{equation}\label{eq:H_def}
    \cP(x)^2+\cQ(x)\cR(x)=1+\sum_{i=1}^{2g+1}H_ix^i\;.
  \end{equation}
  are in involution with respect to $\PB^\phi$.
\end{propn}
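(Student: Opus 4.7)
The strategy is to prove the single generating-function identity
\begin{equation*}
\{\Delta(x), \Delta(y)\}^\phi = 0, \qquad \text{where } \Delta(x) := \cP(x)^2 + \cQ(x)\cR(x) = -\det\lax(x).
\end{equation*}
By \eqref{eq:H_def}, the pairwise bracket $\{H_i, H_j\}^\phi$ is exactly the coefficient of $x^i y^j$ in $\{\Delta(x), \Delta(y)\}^\phi$, so the involution of all the $H_i$ is equivalent to this single polynomial identity. Moreover, both sides of the identity are $\C$-linear in $\phi$, so one may either fix a general $\phi$ and work uniformly, or reduce to checking the basis $\phi(x) \in \{x, x^2, \ldots, x^{g+1}\}$.

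The plan is to expand via the Leibniz rule. Since $\{\cP(x), \cP(y)\}^\phi = 0$, the $\{\cP(x)^2, \cP(y)^2\}^\phi$ term drops out, and
\begin{equation*}
\{\Delta(x), \Delta(y)\}^\phi = \{\cP(x)^2, \cQ(y)\cR(y)\}^\phi + \{\cQ(x)\cR(x), \cP(y)^2\}^\phi + \{\cQ(x)\cR(x), \cQ(y)\cR(y)\}^\phi.
\end{equation*}
Each summand I would compute using the five non-zero formulas in \eqref{eq:mumford_poisson}. After placing everything over the common denominator $x - y$ and clearing, the cancellation proceeds in two stages: first, the rational terms with a $\phi(x)\phi(y)$ factor combine to zero by antisymmetry in $x \leftrightarrow y$ (the first two summands above already contribute in a manifestly $(x,y)$-antisymmetric manner, namely $\frac{2[\phi(y)\cP(x) - \phi(x)\cP(y)]}{x-y}\bigl(y\cQ(y)\cR(x) - x\cQ(x)\cR(y)\bigr)$, which must be balanced against the analogous piece from $\{\cQ\cR, \cQ\cR\}^\phi$); then, the polynomial remainders coming from $\{\cR(x),\cR(y)\}^\phi = \phi(y)\cR(x) - \phi(x)\cR(y)$ and from the inhomogeneous $-\phi(y)\cQ(x)$ term in $\{\cQ(x),\cR(y)\}^\phi$ cancel against each other.

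The main obstacle is purely organisational: roughly a dozen rational terms in $x, y$ and the entries of $\lax$ must be tracked carefully, and the correct symmetric/antisymmetric grouping in $x \leftrightarrow y$ is needed to make the cancellation visible. A more conceptual route, which I would mention but perhaps not pursue in detail, is to recognise \eqref{eq:mumford_poisson} as a $\phi$-twist of the rational $r$-matrix Poisson structure on the $\mathfrak{sl}_2$-valued Lax operator $\lax(x)$; once the brackets are rewritten in the tensorial form $\{\lax(x) \stackrel{\otimes}{,} \lax(y)\}^\phi$, involution of the spectral invariants $H_i$ becomes the standard general consequence of the $r$-matrix formalism (cf.\ \cite{vanhaecke}), bypassing the direct expansion. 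Either route yields the generating identity and hence the proposition.
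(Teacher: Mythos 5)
Your proposal is correct and follows essentially the same route as the paper: the paper also proves the generating identity $\pb{\cP(x)^2+\cQ(x)\cR(x),\cP(y)^2+\cQ(y)\cR(y)}^\phi=0$ by a Leibniz expansion using the defining formulae \eqref{eq:mumford_poisson} and a direct term-by-term cancellation (it merely organises the expansion by first computing $\pb{\cS(x),\cP(y)^2+\cQ(y)\cR(y)}^\phi$ for $\cS\in\{\cP,\cQ,\cR\}$, which it reuses elsewhere). The $r$-matrix remark is a reasonable aside but is not the route taken in the paper.
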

\begin{proof}
%
%
We first compute from \eqref{eq:mumford_poisson}
\begin{align}\label{eq:vf1}
  \pb{\cP(x),\cP(y)^2+\cQ(y)\cR(y)}^\phi&= \cQ(y)\pb{\cP(x),\cR(y)}^\phi+\cR(y)\pb{\cP(x),\cQ(y)}^\phi\nonumber\\
  &=-y\cQ(y)\frac{\cancel{\phi(x)\cR(y)}-\phi(y)\cR(x)}{x-y}+\cR(y)\frac{\cancel{y\phi(x)\cQ(y)}-x\phi(y)\cQ(x)}{x-y}
  \nonumber\\
  &=\phi(y)\frac{y\cQ(y)\cR(x)-x\cQ(x)\cR(y)}{x-y}\;,
\end{align}
and similarly
\begin{align}
  \pb{\cQ(x),\cP(y)^2+\cQ(y)\cR(y)}^\phi&=2\phi(y)\frac{x\cP(y)\cQ(x)-y\cP(x)\cQ(y)}{x-y}-\phi(y)\cQ(x)\cQ(y)\;,
  \label{eq:vf2}\\
  \pb{\cR(x),\cP(y)^2+\cQ(y)\cR(y)}^\phi&=2x\phi(y)\frac{\cP(x)\cR(y)-\cP(y)\cR(x)}{x-y}+\phi(y)\cQ(y)\cR(x)\;.
  \label{eq:vf3}
\end{align}
These formulae imply
\begin{align*}
  \lefteqn{\pb{\cP^2(x)+\cQ(x)\cR(x),\cP^2(y)+\cQ(y)\cR(y)}^\phi}\\
  &= 2\cP(x) \pb{\cP(x),\cP^2(y)+\cQ(y)\cR(y)}^\phi
  +\cQ(x)\pb{\cR(x),\cP^2(y)+\cQ(y)\cR(y)}^\phi
  +\cR(x)\pb{\cQ(x),\cP^2(y)+\cQ(y)\cR(y)}^\phi\\
  &=2\cP(x)\phi(y)\frac{\cancel{y\cQ(y)\cR(x)}-\bcancel{x\cQ(x)\cR(y)}}{x-y}
    +\cQ(x)2x\phi(y)\frac{\bcancel{\cP(x)\cR(y)}-\xcancel{\cP(y)\cR(x)}}{x-y}+\cQ(x)\phi(y)\cQ(y)\cR(x)\\
  &\qquad +\cR(x)2\phi(y)\frac{\xcancel{x\cP(y)\cQ(x)}-\cancel{y\cP(x)\cQ(y)}}{x-y}-\cR(x)\phi(y)\cQ(x)\cQ(y)=0\;,
\end{align*}
so that the invariants $H_i$ are in involution, $\pb{H_i,H_j}^\phi=0$ for $1\leqslant i,j\leqslant 2g+1$.
\end{proof}
As before, we often fix the Hamiltonians $H_1,\dots,H_g$ to generic values so that we actually work on a 
(non-singular) subvariety of $M_g^c$, which is birational with $\C^{2g+1}$. We show in the following proposition
that $M_g^c$ is a bi-Hamiltonian manifold, i.e., that it is equipped with a pencil of compatible Poisson
structures.

\begin{propn}\label{prp:bi-Ham}
  The Hamiltonians $H_1,\dots,H_g$ are Casimirs of the Poisson structure $\PB^{\phi}$ if and only if $\phi$ is of
  the form $\phi=\phi_gx^g+\phi_{g+1}x^{g+1}$. In this case, 
  \begin{equation}\label{eq:Casimir_of_pencil}
    C_\phi:=\sum_{i=0}^g(-1)^i\phi_g^{g-i}\phi_{g+1}^iH_{2g+1-i}
  \end{equation}
  is also a Casimir function of $\PB^\phi$. 
\end{propn}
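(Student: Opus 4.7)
The plan is to convert each Casimir condition into an explicit condition on the coefficients of $\phi$, by reusing the formulas \eqref{eq:vf1}--\eqref{eq:vf3} obtained in the proof of Proposition \ref{prp:H_in_involution}. Setting $H(y):=\cP(y)^2+\cQ(y)\cR(y)=1+\sum_{i=1}^{2g+1}H_iy^i$, those formulas have the uniform structure
$$\pb{\cS(x),H(y)}^\phi=\phi(y)\,D_\cS(x,y)\;, \qquad \cS\in\{\cP,\cQ,\cR\}\;,$$
where each $D_\cS(x,y)$ is a polynomial in $y$ of degree at most $g$. Moreover, an immediate evaluation using $\cP(0)=1$, $\cQ(0)=2$ and $\cR(0)=0$ shows that $D_\cS(x,0)=0$ for all three choices of $\cS$. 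Hence writing $D_\cS(x,y)=\sum_{k=1}^{g}D_{\cS,k}(x)y^k$ and $\phi(y)=\sum_{j=1}^{g+1}\phi_jy^j$, one obtains
$$\pb{\cS(x),H_i}^\phi=[y^i]\phi(y)D_\cS(x,y)=\sum_{j=1}^{i}\phi_j\,D_{\cS,i-j}(x) \qquad (1\leqslant i\leqslant g)\;,$$
with the convention $D_{\cS,0}=0$.

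For the forward direction of the equivalence, assume $\phi=\phi_gx^g+\phi_{g+1}x^{g+1}$. Then for $i<g$ the above sum is empty (all non-zero $\phi_j$ have $j\geqslant g>i$), while for $i=g$ it equals $\phi_gD_{\cS,0}=0$; hence $H_1,\dots,H_g$ are all Casimirs. For the converse, suppose that $\phi_k\neq 0$ for some minimal $k$ with $1\leqslant k<g$. Then $\pb{\cP(x),H_{k+1}}^\phi=\phi_kD_{\cP,1}(x)$, and matching the coefficient of $y$ in the identity $(x-y)D_\cP(x,y)=y\cQ(y)\cR(x)-x\cQ(x)\cR(y)$ yields
$$D_{\cP,1}(x)=\frac{2\cR(x)}{x}-r_1\cQ(x)=\sum_{k=1}^{g}\bigl(2r_{k+1}-r_1q_k\bigr)x^k\;,$$
which is a non-zero polynomial in the linear coordinates on $M_g$. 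Thus $H_{k+1}$ is not a Casimir, and the converse is established.

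For the second assertion, we compute $\pb{\cS(x),C_\phi}^\phi$ directly when $\phi=\phi_gx^g+\phi_{g+1}x^{g+1}$. For any $j\geqslant g+1$, $[y^j]\phi(y)D_\cS(x,y)=\phi_gD_{\cS,j-g}(x)+\phi_{g+1}D_{\cS,j-g-1}(x)$; substituting $j=2g+1-i$ gives
$$\pb{\cS(x),C_\phi}^\phi=\sum_{i=0}^{g}(-1)^i\phi_g^{g-i+1}\phi_{g+1}^{i}D_{\cS,g+1-i}(x)+\sum_{i=0}^{g}(-1)^i\phi_g^{g-i}\phi_{g+1}^{i+1}D_{\cS,g-i}(x)\;.$$
Re-indexing with $k=g+1-i$ in the first sum and $k=g-i$ in the second, and discarding the boundary terms $k=g+1$ and $k=0$ using $D_{\cS,g+1}=0=D_{\cS,0}$, both sums range over $k=1,\dots,g$ with common weight $\phi_g^k\phi_{g+1}^{g+1-k}D_{\cS,k}(x)$, carrying respective signs $(-1)^{g+1-k}$ and $(-1)^{g-k}$, which are opposite. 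Hence the two sums cancel exactly, so $\pb{\cS(x),C_\phi}^\phi=0$ for each $\cS\in\{\cP,\cQ,\cR\}$, proving that $C_\phi$ is a Casimir of $\PB^\phi$.

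The main obstacle is the telescoping cancellation in the final paragraph, which becomes transparent only after careful tracking of indices and signs; everything else reduces to extracting coefficients from the explicit formulas of Proposition \ref{prp:H_in_involution} and exploiting the structural fact that $D_\cS(x,0)=0$.
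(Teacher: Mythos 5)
Your proof is correct and follows essentially the same route as the paper: both halves rest on the generating-function identities \eqref{eq:vf1}--\eqref{eq:vf3}, on the observation that the bracket divided by $\phi(y)$ vanishes at $y=0$ and has $y$-degree at most $g$, and, for $C_\phi$, on the same telescoping cancellation, which the paper packages instead as a residue computation via the factorization $\sum_{i=0}^g(-1)^i\phi_g^{g-i}\phi_{g+1}^iy^i\,\phi(y)=\phi_g^{g+1}y^g+(-1)^g\phi_{g+1}^{g+1}y^{2g+1}$. The one point where you go beyond the paper is the explicit check that $D_{\cP,1}(x)=2\cR(x)/x-r_1\cQ(x)$ is a non-zero polynomial in the coordinates, which the paper only asserts (``without being divisible by $y^2$''); this is a welcome, if minor, tightening of the converse direction.
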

\begin{proof}
Consider for $i=1,\dots,g$ the Hamiltonian vector field~$\rX_{H_i}$, which is given as the coefficient of $y^i$ in
$\Pb{\cP(y)^2+\cQ(y)\cR(y)}^\phi$, which we computed in \eqref{eq:vf1} -- \eqref{eq:vf3}. These Hamiltonians are
Casimir functions for $\PB^\phi$ if and only if their Hamiltonian vector fields are zero, which is in turn
equivalent to the fact that the right hand sides in \eqref{eq:vf1} -- \eqref{eq:vf3} are divisible by
$y^{g+1}$. Notice that the latter right hand sides, without the factor $\phi(y)$, are divisible by $y$ since
$\cP(0)=1,$ $\cQ(0)=2$ and $\cR(0)=0$, without being divisible by $y^2$. Therefore, this is equivalent to $\phi(y)$
being divisible by $y^g$, i.e., that $\phi$ is of the form $\phi=\phi_gx^g+\phi_{g+1}x^{g+1}$; since
$\deg\phi\leqslant g+1$, it follows that the Poisson structures which make $M_g^c$ into a bi-Hamiltonian manifold
are the restrictions to $M_g^c$ of the Poisson pencil $\PB^\phi$ with $\phi(x)=\phi_g x^g+\phi_{g+1}x^{g+1}$. It is
easily shown by direct computation that $C_\phi$, given by \eqref{eq:Casimir_of_pencil} is a Casimir of the Poisson
pencil. For example,
\begin{align*}
  \lefteqn{\pb{\cP(x),\sum_{i=0}^g(-1)^i\phi_g^{g-i}\phi_{g+1}^iH_{2g+1-i}}^\phi}\hskip1cm\\
  &=\Res_{y=0}\frac1{y^{2g+2}}\pb{\cP(x),H_{2g+1-i}y^{2g+1}}^\phi\sum_{i=0}^g(-1)^i\phi_g^{g-i}\phi_{g+1}^i\\
  &=\Res_{y=0}\frac1{y^{2g+2}}\pb{\cP(x),\cP(y)^2+\cQ(y)\cR(y)}^\phi\sum_{i=0}^g(-1)^i\phi_g^{g-i}\phi_{g+1}^iy^i\\
  &=\Res_{y=0}\frac1{y^{2g+2}}\left[\frac{\phi(x)y\cQ(y)-\phi(y)x\cQ(x)} {x-y}\cR(y)
       -y\cQ(y)  \frac{\phi(x)\cR(y)-\phi(y)\cR(x)} {x-y}\right]\sum_{i=0}^g(-1)^i\phi_g^{g-i}\phi_{g+1}^iy^i\\
  &=\Res_{y=0}\frac1{y^{2g+2}}\; \frac{y\cQ(y)\cR(x)-x\cQ(x)\cR(y)} {x-y}
    \sum_{i=0}^g(-1)^i\phi_g^{g-i}\phi_{g+1}^iy^i\phi(y)\\
  &=\Res_{y=0}\frac1{y^{2g+2}}\; \frac{y\cQ(y)\cR(x)-x\cQ(x)\cR(y)} {x-y}
    \left(\phi_g^{g+1}y^g+(-1)^g\phi_{g+1}^{g+1}y^{2g+1}\right)\\
  &=\phi_g^{g+1}\Res_{y=0}\frac1{y^{g+2}}\; \frac{y\cQ(y)\cR(x)-x\cQ(x)\cR(y)} {x-y}=0\;,
\end{align*}
where we have used in the last two equalities respectively that $\cR(y)$ is divisible by $y$ and that the
polynomial $\frac{y\cQ(y)\cR(x)-x\cQ(x)\cR(y)}{x-y}$ has degree at most $g$ in $y$.
\end{proof}

\begin{exa}\label{g1ex}
Continuing Example \ref{exa:g=1_rec}, we specialize the above results to $g=1$ and make them more explicit. Recall
that $M_1$ is the $4$-dimensional vector space of polynomials $(\cP,\cQ,\cR)$ with
\begin{equation}
  \cP(x)=1+p_1x\;,\quad  \cQ(x)=2+q_1x\;,\quad \cR(x)=r_1x+r_2x^2\;,
\end{equation}
and the invariants $H_1,H_2,H_3$ are given in \eqref{eq:hams_g=1}. The Poisson structures $\PB^\phi$ on $M_1$ are parametrized by
$\phi(x)=\phi_1x+\phi_2x^2$ and they all have $H_1$ as a Casimir function. The Poisson matrices of the basic
Poisson structures $\PB^x$ and $\PB^{x^2}$ (with the coordinates taken in the following order:
$p_1,\,q_1,\,r_1,\,r_2$) are easily determined from \eqref{eq:mumford_poisson} and are given by
\begin{equation}\label{eq:poisson_g=1}
  \PB^{x}=\begin{pmatrix}
    0&-q_1&0&r_2\\
    q_1&0&-q_1&0\\
    0&q_1&0&-r_2\\
    -r_2&0&r_2&0
  \end{pmatrix}\;, 
  \qquad
  \PB^{x^2}=\begin{pmatrix}
    0&2&0&-r_1\\
    -2&0&2&2q_1-p_1\\
    0&-2&0&r_1\\
    r_1&p_1-2q_1&-r_1&0
  \end{pmatrix}\;.
\end{equation}
For a generic $c_1\in\C$, the subvariety $M_1^c$ is defined by $H_1=c_1$. The pencil of Poisson structures
$\PB^\phi$ can be restricted to $M_1^c$ and the Casimir function $C_\phi$ on $(M_1^c,\PB^\phi)$, given by
\eqref{eq:Casimir_of_pencil}, takes the simple form $C_\phi=\phi_2H_2-\phi_1H_3$. In particular, $H_2$ and $H_3$
are Casimir functions of $\PB^{x^2}$ and $\PB^{x}$, respectively, and $\PB^x$ can be restricted to $H_3=c_3$ while
$\PB^{x^2}$ can be restricted to $H_2=c_2$ (for generic $c_2,c_3$). Since, by Proposition
\ref{prp:volterra_is_poisson}, the Volterra map is a Poisson map 
with respect to any such $\phi$ (of degree at
most $g+1$ with $\phi(0)=0$), it follows that 
\begin{enumerate}
\item[$\bullet$] \eqref{eq:rec_g1} defines a Poisson map with respect to the full Poisson pencil $\PB^\phi$;
\item[$\bullet$] \eqref{eq:g=1_red_H3} defines a Poisson map with respect to $\PB^{x}$;
\item[$\bullet$] \eqref{eq:g=1_red_H2} defines a Poisson map with respect to 
$\PB^{x^2}$.
\end{enumerate}
It follows from the Poisson matrices \eqref{eq:poisson_g=1} and from \eqref{eq:pqr_n_g=1} that the Poisson
structures $\PB^x$ and $\PB^{x^2}$ are respectively given by $\pb{w_{n-1},w_{n}}^x=-(w_{n-1}+w_{n}+c_1/2)/2$, and
$\pb{w_{n-1},w_{n}}^{x^2}=1/2$.

In Example \ref{exa:g=1_rec} we also considered the recursion relation on the surface
$H_2'=c_2',\,H_3'=c_3'$. None of the Poisson structures $\PB^\phi$ considered above can be restricted to these
surfaces, but a Nambu-Poisson structure with $H_2'$ and $H_3'$ as Casimirs can be so restricted. It leads to the
quadratic Poisson  bracket
\begin{equation}\label{eq:quadratic_Poisson}
  \pb{w_{n-1},w_n}=w_{n-1}w_n
\end{equation}
with respect to which \eqref{eq:rec_for_somos} is a Poisson map. 
The above bracket can also be derived via reduction of a presymplectic structure 
for the tau functions, by regarding \eqref{s5} as a mutation in a cluster algebra 
\cite{fordy_hone}.

To see how the quadratic bracket arises here, recall from \cite{takhtajan} that, 
for some fixed choice of $m$-form $\Om$, a Nambu-Poisson bracket of order $m$ is defined by 
$$ 
\{f_1,f_2,\ldots,f_m\} \, \Om = \rd f_1\wedge \rd f_2\wedge \cdots \wedge \rd f_m. 
$$   
In the case at hand (taking $m=4$),  
observe that 
the unreduced  version (\ref{wmapg1}) of the Volterra map $\varphi$ on $\C^{4}_w$
with coordinates $(w_{-1},w_0,w_1,w_2)$ preserves the rational volume form 
$$ 
\Om = 
\frac{w_0w_1}{w_0-w_1}\, \rd w_{-1}\wedge \rd w_0\wedge\rd w_1 \wedge \rd w_2, \qquad \varphi^*(\Om)=\Om. 
$$ 
Then the corresponding Nambu-Poisson bracket defines a Poisson bracket 
on $\C^{4}_w$ according to 
$$ 
\{f_1,f_2\} :=\{f_1,f_2,H_2',H_3'\}, 
$$ 
which by construction has  Casimirs $H_2'$ and $H_3'$,  and restricts to 
(\ref{eq:quadratic_Poisson}) on the surface $H_2'=c_2'$, $H_3'=c_3'$.  
The same Nambu-Poisson bracket also produces any member of the 
pencil $\PB^\phi$ for $g=1$: in particular, the Poisson structures 
$\PB^{x}$ and $\PB^{x^2}$ arise in this way, by taking  (up to scaling) 
$\{f_1,f_2,H_1,H_3\}$ and $\{f_1,f_2,H_1,H_2\}$, respectively. 
However, this construction does not extend to $g>1$ in a straightforward manner. 
\end{exa}
\begin{exa} 
Continuing Example \ref{exa:g=2_rec}, taking $\phi=x^3$ when $g=2$ 
we find that the Poisson brackets on $M_2$ (with coordinates $p_1,p_2,q_1,q_2,r_1,r_2,r_3$) 
take the form 
\begin{align*} 
&\{r_1,r_3\}^{x^3} = r_1, & \{r_2,r_3\}^{x^3} = r_2, \qquad\qquad & \{p_1,q_2\}^{x^3} = 2 = \{p_2,q_1\}^{x^3},\\ 
&\{p_2,q_2\}^{x^3} = q_1, &  \{p_2,r_3\}^{x^3} = -r_2, \quad\qquad &  \{p_1,r_3\}^{x^3} = -r_1=\{p_2,r_2\}^{x^3} ,  \\
&\{q_1,r_2\}^{x^3} = 2 = \{q_2,r_1\}^{x^3},  & \{q_1,r_3\}^{x^3} = 2p_1-q_1,\,\quad & \{q_2,r_2\}^{x^3} = 2p_1, \quad \{q_2,r_3\}^{x^3} = 2p_2-q_2, 
\end{align*}
where only the 
non-zero brackets are specified here.  By  restricting  this Poisson structure to $M_g^c\cap(H_3=c_3)$, we find that 
(up to an overall factor of $1/2$) it coincides with Poisson bracket (\ref{pbiv}) 
for the (P.iv) map that was derived from a discrete Lagrangian  
in \cite{gjtv2}, 
where the parameters $\nu, a,b$ are fixed according to (\ref{g2nuab})   
in terms of the values $c_1,c_2,c_3$ of the 3 Casimirs of $\PB^{x^3}$. 

\end{exa}
%

\subsection{The generic fibers of the momentum map}

We will now give an algebro-geometric description of the generic fibers of the momentum map $\mu:M_g\to\C[x]$,
which we recall is given by $\mu(\cP,\cQ,\cR)=\cP^2+\cQ\cR$. Precisely, we will describe the fiber over any
polynomial $f\in\C[x]$ of degree $2g+1$, without multiple roots, and satisfying $f(0)=1$. Notice that such 
polynomials are exactly those $f$  in the image of $\mu$ for which 
$y^2=f(x)$ defines a non-singular affine curve 
$\Gamma_f$ 
of genus $g$; we use $\bar\Gamma_f$ to denote the completion of the latter, 
which can be thought of as a compact Riemann surface.

\begin{propn}\label{prp:generic_fibers}
  Let $f$ be polynomial of degree $2g+1$, without multiple roots, such that $f(0)=1$. Denote by $\Gamma_f$ the
  non-singular affine curve of genus $g$, defined by $y^2=f(x)$, with 
$\bar\Gamma_f$ being its  completion.
Then $\mu^{-1}(f)$ is
  isomorphic to an affine part of the Jacobian variety of $\bar\Gamma_f$ minus three translates of the theta
  divisor,
  \begin{equation}\label{eq:fiber}
    \mu^{-1}(f)\cong \Jac(\bar\Gamma_f)\setminus (\Theta\cup\Theta_+\cup\Theta_-)\;.
  \end{equation}
  The $(-1)$-involution on $\Jac(\bar\Gamma_f)$ leaves $\Theta$ invariant and permutes $\Theta_+$ and
  $\Theta_-$.
\end{propn}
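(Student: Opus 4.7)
The plan is to construct an explicit birational morphism $\Psi: \mu^{-1}(f) \to \Jac(\bar\Gamma_f)$ whose image is the stated complement of $\Theta \cup \Theta_+ \cup \Theta_-$, with $\Theta$ the standard theta divisor (classes $[D-g\infty]$ with $\infty \in \mathrm{supp}(D)$) and $\Theta_\pm$ its translates by $[O_\pm - \infty]$, where $O_\pm := (0,\pm 1) \in \bar\Gamma_f$ are the two points above $x=0$. First I would show that on $\mu^{-1}(f)$ the degree bounds in \eqref{Mg} are saturated: since $\deg \cP^2 \leqslant 2g$ whereas $\deg f = 2g+1$, the top coefficient of $f$ must come from $\cQ\cR$, forcing $\deg \cQ = g$ and $\deg \cR = g+1$ exactly. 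The constraint $\cQ(0)=2$ then guarantees that every root $\alpha_i$ of $\cQ$ is nonzero, and reducing $\cP^2 + \cQ\cR = f$ modulo $\cQ$ gives $\cP(\alpha_i)^2 = f(\alpha_i)$, so each $P_i := (\alpha_i, \cP(\alpha_i))$ is a point of $\Gamma_f$. I then set $\Psi(\cP,\cQ,\cR) := [D - g\infty]$ with $D = \sum_{i=1}^g P_i$, interpreted scheme-theoretically when $\cQ$ has multiple roots.

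Next I would argue that $\Psi$ lands in the stated complement. By construction $\mathrm{supp}(D)$ consists of affine points with nonzero $x$-coordinate, so $\infty, O_+, O_- \notin \mathrm{supp}(D)$, hence $[D-g\infty]$ avoids each of the three translates. For bijectivity, I would invert $\Psi$ as follows: given $[\xi] \in \Jac(\bar\Gamma_f) \setminus (\Theta \cup \Theta_+ \cup \Theta_-)$, the fact that $[\xi] \notin \Theta$ gives a unique effective representative $D$ of degree $g$, and $[\xi] \notin \Theta_\pm$ forces the support of $D$ to avoid $O_\pm$; writing $D = \sum(\alpha_i, y_i)$ with all $\alpha_i \in \C^*$, reconstruct $\cQ(x) := 2\prod_{i=1}^g (1 - x/\alpha_i)$, let $\cP$ be the unique polynomial of degree at most $g$ determined by the $g+1$ interpolation conditions $\cP(0) = 1$ and $\cP(\alpha_i) = y_i$, and put $\cR := (f - \cP^2)/\cQ$. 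I would then verify that $\cR$ is a polynomial (because $f-\cP^2$ vanishes at each $\alpha_i$), that $\cR(0) = (f(0) - \cP(0)^2)/\cQ(0) = 0$ automatically, and that $\deg \cR \leqslant g+1$ from the degree of $f - \cP^2$; this gives the morphism inverse to $\Psi$.

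For the statement about the $(-1)$-involution, I would use that the hyperelliptic involution $\sigma:(x,y) \mapsto (x,-y)$ fixes $\infty$ and swaps $O_+$ with $O_-$. Since $\infty$ is a Weierstrass point, the relation $p + \sigma(p) \sim 2\infty$ valid for every $p \in \bar\Gamma_f$ yields $[D - g\infty] + [\sigma(D) - g\infty] = 0$ in $\Jac(\bar\Gamma_f)$, so the $(-1)$-involution coincides with $[D - g\infty] \mapsto [\sigma(D) - g\infty]$. The incidence conditions defining the three translates then transform as claimed: $\Theta$ is invariant because $\sigma(\infty)=\infty$, whereas $\Theta_+$ and $\Theta_-$ are exchanged because $\sigma(O_\pm)=O_\mp$.

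The main obstacle is the rigorous treatment of the degenerate stratum where $D$ has repeated points (so $\cQ$ has multiple roots) or where some $\alpha_i$ coincides with a root of $f$ (so $(\alpha_i,0)$ is a Weierstrass point). In such cases the naive interpolation recipe for $\cP$ must be replaced by one matching higher-order Taylor jets along $\bar\Gamma_f$, and $D$ must be treated as a length-$g$ subscheme rather than a set of points. A more conceptual route, which I would develop in parallel to sidestep these issues and to fit the result into a wider framework, is to identify $\mu^{-1}(f)$ with a moduli space of degree-$g$ line bundles on $\bar\Gamma_f$ trivialized at the three distinguished points $\infty, O_+, O_-$, with the three normalizations $\cP(0)=1$, $\cQ(0)=2$, $\cR(0)=0$ encoding exactly those trivializations; the three excluded translates of $\Theta$ then correspond precisely to the loci where one of these trivializations becomes obstructed.
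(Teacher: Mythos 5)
Your construction coincides with the paper's: the divisor map $(\cP,\cQ,\cR)\mapsto\left[\sum_i\bigl(x_i,\cP(x_i)\bigr)-g\infty\right]$ built from the roots of $\cQ$, the Lagrange-interpolation recipe for the inverse (with $\cQ(x)=2\prod(1-x/x_i)$, the extra condition $\cP(0)=1$, and $\cR=(f-\cP^2)/\cQ$), and the involution argument via $\imath(\infty)=\infty$ and $\imath(0,\pm1)=(0,\mp1)$ are exactly what the paper does, so there is no methodological divergence; the closing ``line bundles with trivializations'' remark is a reasonable alternative framing but is not needed.

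There is, however, one genuine gap in the forward direction. You deduce that the image avoids $\Theta\cup\Theta_+\cup\Theta_-$ from the observation that the support of $D$ contains none of $\infty,(0,1),(0,-1)$. That inference is invalid as stated: membership in $\Theta$ is a condition on the divisor \emph{class}, not on the support of one chosen representative, and $[D-g\infty]\in\Theta$ precisely when $D$ is special, i.e., when $D\geqslant P+\imath(P)$ up to linear equivalence. The same issue leaves the injectivity of $\Psi$ unproved, and it is also what licenses the ``unique effective representative'' you invoke when inverting. What is needed is that the divisor attached to a triplet is general, meaning $P_i\neq\imath(P_j)$ for $i\neq j$. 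For these divisors a conjugate pair can only occur at a Weierstrass point: $x_i=x_j$ forces $y_i=y_j=\cP(x_i)$, so $P_i=\imath(P_j)$ with $i\neq j$ requires $x_i$ to be a multiple root of $\cQ$ with $\cP(x_i)=0$, whence $x_i$ is a multiple root of $f=\cP^2+\cQ\cR$, contradicting square-freeness. This short lemma --- which is where the hypothesis that $f$ has no multiple roots actually acts on the fiber --- is supplied in the paper and missing from your argument; your final paragraph flags the degenerate stratum only as an interpolation nuisance for the inverse map, not as the missing step here. With that lemma inserted, the rest of your proof goes through as in the paper.
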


\begin{proof}
Let $(\cP,\cQ,\cR)\in\mu^{-1}(f)$, so that
  \begin{equation}\label{def:fiber}
    \cP^2(x)+\cQ(x)\cR(x)=f(x)\;,\quad\deg\cP,\deg\cQ,\deg\cR-1\leqslant g\;,
    \quad(\cP(0),\cQ(0),\cR(0))=(1,2,0)\;.
  \end{equation}
Since $f$ has degree $2g+1$, $\deg\cQ=g$ and $\deg\cR=g+1$; it also implies that $\bar\Gamma_f$ is obtained from
$\Gamma_f$ by adding a single point, which we denote by $\infty$. The hyperelliptic involution on $\bar\Gamma_f$ is
denoted by $\imath$; it fixes $\infty$ and sends $(x,y)\in\Gamma_f$ to $(x,-y)$. To $(\cP,\cQ,\cR)$ we associate a
divisor $\sum_{i=1}^g(x_i,y_i)-g\infty$ on $\bar\Gamma_f$ as follows: $x_1,x_2,\dots,x_g$ are the roots of $\cQ(x)$
and $y_i:=\cP(x_i)$ for $i=1,\dots,g$. It is indeed a divisor on $\bar\Gamma_f$ since for $j=1,\dots,g$,
\begin{equation*}
  y_j^2-f(x_j)=\cP^2(x_j)-(\cP^2(x_j)+\cQ(x_j)\cR(x_j))=0\;.
\end{equation*}
Of course, $(x_i,y_i)\neq\infty$ for all $i$. Notice that when $\cQ(x)$ has multiple roots, say $x_1=\dots=x_k$,
then $y_1=y_2=\dots=y_k$. We show by contradiction that if $x_i$ is a root of $\cQ(x)$ and $y_i=0$ (so that $x_i$
is also a root of $\cP(x)$), then $x_i$ is a simple root of $\cQ(x)$. Indeed, if $x_i$ is a multiple root of
$\cQ(x)$ and is also a root of $\cP(x)$, then $x_i$ is a multiple root of $f(x)=\cP(x)^2+\cQ(x)\cR(x)$, so that
$f$ is not square-free, contrary to the assumptions. The upshot is that the obtained divisors are of the form
$\sum_{i=1}^gP_i-g\infty$, where $P_i\in\Gamma_f$ for $i=1,\dots,g$ and $P_i\neq\imath(P_j)$ when $i\neq j$. It is
well-known that two such divisors are linearly equivalent if and only if they are the same; also, that none of
these divisors are equivalent to a divisor of the form $\sum_{i=1}^{g-1}Q_i-(g-1)\infty$, with $Q_i\in\bar\Gamma_f$
for $i=1,\dots,g-1$. Since $\Jac(\bar\Gamma_f)$ is the group of degree zero divisors on $\bar\Gamma_f$, modulo
linear equivalence, this shows that the map $\mu^{-1}(f)\to\Jac(\bar\Gamma_f)$, which associates to
$(\cP,\cQ,\cR)$ the divisor class $\left[\sum_{i=1}^gP_i-g\infty\right]$, is injective.
This map is of course not surjective. In order to determine the image, let
\begin{align*}
  \Theta&:=\left\{\left[\sum_{i=1}^{g-1}P_i-(g-1)\infty\right]\mid\forall i\ P_i\in\bar\Gamma_f\right\}\;,\\
  \Theta_+&:=\left\{\left[(0,1)+\sum_{i=1}^{g-1}P_i-g\infty\right]\mid\forall i\ P_i\in\bar\Gamma_f\right\}
    =\Theta+[(0,1)-\infty]\;,\\
  \Theta_-&:=\left\{\left[(0,-1)+\sum_{i=1}^{g-1}P_i-g\infty\right]\mid\forall i\ P_i\in\bar\Gamma_f\right\}
    =\Theta+[(0,-1)-\infty]\;.
\end{align*}
The first one is the theta divisor and the other two are translates of it. Notice that $\imath(\Theta)=\Theta$ and
$\imath(\Theta_+)=\Theta_-$, since $\imath(\infty)=\infty$.  As we already said, the image contains no point of the
form $\left[\sum_{i=1}^{g-1}P_i-(g-1)\infty\right]$, i.e., is disjoint from $\Theta$. Since $P_i=(x_i,y_i)$ where
$x_i$ is a root of $\cQ(x)$ and since $\cQ(0)=2$, it is clear that $x_i\neq0$, so that every $P_i=(x_i,y_i)$ is
different from $(0,1)$ and from $(0,-1)$; 
hence the image is also disjoint from $\Theta_+$ and
$\Theta_-$. Take now any point in $\Jac(\bar\Gamma_f)\setminus (\Theta\cup\Theta_+\cup\Theta_-)$. It can as above
be written uniquely as $\left[\sum_{i=1}^gP_i-g\infty\right]$ with $P_i\notin\{\infty,(1,0),(-1,0)\}$ and
$P_i\neq\imath(P_j)$ for all $i\neq j$. When all $P_i=(x_i,y_i)$ are different, there is a unique polynomial
$\cQ(x)$ whose roots are the $x_i$ and with $\cQ(0)=2$, and there is a unique polynomial $\cP(x)$ of degree $g$,
with $\cP(x_i)=y_i$ for $i=1,\dots,g$ and $\cP(0)=1$: setting $(x_0,y_0)=(0,1)$, they are given by
\begin{equation}\label{eq:PQ_from_xy}
  \cQ(x)=2\prod_{i=1}^g\left(1-\frac x{x_i}\right)\;,\qquad \cP(x)=\sum_{i=0}^gy_i\prod_{j\neq
    i}\frac{x-x_j}{x_i-x_j}\;.
\end{equation}
This also works in the limiting case when some of the $P_i$ are the same upon adding in the definition of $\cP(x)$
a tangency condition (see \cite[page 3.18]{tata2}), which assures that $f(x)-\cP^2(x)$ is divisible by
$\cQ(x)$. The quotient is a polynomial $\cR(x)$ of degree $g$ satisfying $f(x)=\cP(x)^2+\cQ(x)\cR(x)$; by
uniqueness, $\mu(\cP,\cQ,\cR)=\left[\sum_{i=1}^gP_i-g\infty\right]$, as required.
\end{proof}

\begin{exa}
We specialize Proposition \ref{prp:generic_fibers} to the case of $g=1$. Notice that in this case we should say
\emph{elliptic} rather than \emph{hyperelliptic}, and in this case the (hyper-) elliptic involution is not
unique. Another peculiarity about $g=1$ is the well-known fact that a complete non-singular genus one curve
(i.e., any compact elliptic Riemann surface) is isomorphic to its Jacobian, a fact that we will be able to
illustrate here. Let $f(x)=1+c_1x+c_2x^2+c_3x^3$ be an arbitrary polynomial of degree 3 with no multiple roots. We
investigate $\mu^{-1}(f)$, which is the affine curve defined by the following equations, which are found by
expressing that $\cP^2(x)+\cQ(x)\cR(x)=f(x)$:
\begin{align*}
  &2(p_1+r_1)=c_1\;,\\
  &p_1^2+q_1r_1+2r_2=c_2\;,\\
  &q_1r_2=c_3\;.
\end{align*}
The curve $\mu^{-1}(f)$ can be equivalently written as a plane algebraic curve by first eliminating $r_1$ from the
first two equations and then $r_2$ from the two remaining equations; writing $p_1$ and $q_1$ simply as $p$ and $q$,
the final equation takes the simple form
\begin{equation}\label{eq:fiber_g=1}
  \mu^{-1}(f)\ :\ \frac{pq}2(p-q)+c_1\frac{q^2}4-c_2\frac q2+c_3=0\;.
\end{equation}
It is easy to see that, thanks to the conditions on $f$, this curve is non-singular, just like $\Gamma_f$. In fact,
if we denote the left-hand side of \eqref{eq:fiber_g=1} by $F$ then a singular point $(q_0,p_0)$ of $\mu^{-1}(f)$
must satisfy
\begin{equation}\label{eq:g=1_non-singular}
  \frac{\partial F}{\partial p}(q_0,p_0)= \frac{q_0}2(q_0-2p_0)=0\;,\quad
  \frac{\partial F}{\partial q}(q_0,p_0)= \frac12(p_0^2-2p_0q_0+c_1q_0-c_2)=0\;.
\end{equation}
Since $q_0\neq0$ (as $F(q_0,p_0)=0$ and $c_3\neq0$), we get $q_0=2p_0$; substituted in $F(q_0,p_0)=0$ and in the
second equation of \eqref{eq:g=1_non-singular} we get
\begin{equation*}
  p_0^3-c_1p_0^2+c_2p_0-c_3=0\;,\quad 3p_0^2-2c_1p_0+c_2=0\;,
\end{equation*}
which can be written as $f(-p_0)=f'(-p_0)=0$. Since $f$ has no multiple roots these equations have no common solution,
which shows that $\mu^{-1}(f)$ is non-singular. 

To see that $\mu^{-1}(f)$ and $\Gamma_f$ are birationally isomorphic, it suffices to consider the following rational map: 
\begin{equation}\label{eq:birat_iso_g=1}
  q=-\frac2x\;,\ p=\frac{y-1}x\;,\quad\hbox{ with inverse } \quad
  x=-\frac2q\;,\ y=\frac{q-2p}q\;.
\end{equation}
Notice that, despite the appearance of $q$ in the denominator,  the inverse map 
in \ref{eq:birat_iso_g=1} is actually regular, 
because $q\neq0$ on $\mu^{-1}(f)$. The rational map and its
inverse extend (uniquely) to an isomorphism of the completions $\bar\Gamma_f$ and $\overline{\mu^{-1}(f)}$, which
can be thought of respectively as an elliptic curve and its Jacobian. It allows us to determine the number and
nature of the points at infinity of $\overline{\mu^{-1}(f)}$, i.e., the points needed to complete ${\mu^{-1}(f)}$
into $\overline{\mu^{-1}(f)}$; they are the points corresponding to  affine points $(x,y)$ for which the map is
not defined, to wit $(x,y)=(0,\pm1)$, and to the point at infinity $\infty$ of $\bar\Gamma_f$, making a total of three
points, as asserted by Proposition \ref{prp:generic_fibers}. More specifically, by using the map we can determine a
local parametrisation around these points from local parametrisations around the points $(0,\pm1)$ and
$\infty$. For the latter, we can take $(x,y)=(t,1\pm\frac{c_1t}2+\mathcal O(t^2))$ and
$(x,y)=(t^{-2},\sqrt{c_3}t^{-3}(1+\frac{c_2}{2c_3}t^2+\mathcal O(t^4))$ to obtain, again using the map, the
following local parametrisations at the three points at infinity of $\overline{\mu^{-1}(f)}$:
\begin{equation*}
  \infty_0:\left(-2t^2,\frac{\sqrt{c_3}}t\Big(1+\frac{c_2}{2c_3}t^2+\mathcal O(t^3)\Big)\right)\;,  \quad
  \infty_1:\left(-\frac2t,\frac{c_1}2+\mathcal O(t)\right)\;,\quad
  \infty_2:\left(-\frac2t,-\frac2t-\frac{c_1}2+\mathcal O(t)\right)\;.   
\end{equation*}
Again, using the map we can derive that the (hyper-) elliptic involution on $\Gamma_f$, which is
given by $(x,y)\mapsto (x,-y)$, is given on $\mu^{-1}(f)$ by $(q,p)\mapsto (q,q-p)$. It permutes the points
$\infty_1$ and $\infty_2$ while leaving $\infty_0$ fixed (together with the points $(2p,p)$ where $-p$ is a root of
$f$). The points $\infty_0$, $\infty_1$ and $\infty_2$ correspond to $\Theta\;,\Theta_+$ and $\Theta_-$, respectively.
\end{exa}

\begin{exa}
We also specialize Proposition \ref{prp:generic_fibers} to the case of $g=2$ and provide some extra information.
The polynomial $f$ is now of degree $5$, taking the form $f(x)=1+c_1x+c_2x^2+c_3x^3+c_4x^4+c_5x^5$,  
with 
no multiple roots, which is equivalent to the curve (\ref{wg2}) associated with the map (P.iv). 
In this case, the theta divisor and its translates are genus 2 curves, isomorphic to
$\bar\Gamma_f$. This general fact can also be seen here directly from the description that
$\Theta=\left[\bar\Gamma_f-\infty\right]$ and similarly for $\Theta_+=\left[\bar\Gamma_f+(0,1)-2\infty\right]$ and
$\Theta_-=\left[\bar\Gamma_f+(0,-1)-2\infty\right]$. We show that these curves in $\overline{\mu^{-1}(f)}$ meet
according to the intersection pattern in Figure~\ref{fig:divisor_at_infinity_g=2}.

\begin{figure}[h]
  \centering
  \begin{tikzpicture}[scale=1.7]
    \draw (1,2) circle (1);
    \draw (3,2) circle (1);
    \draw (2,1) circle (1);
    \draw (1,3) node [above left] {$\Theta_+$};
    \draw (3,3) node [above right] {$\Theta_-$};
    \draw (2,0) node [below] {$\Theta$};
    \draw (2,2) node [above left] {$0$};
    \fill (2,2) circle (0.07);
    \fill (3,1) circle (0.07);
    \fill (1,1) circle (0.07);
    \draw (3,1) node [below right] {$[(0,-1)-\infty]$};
    \draw (1,1) node [below left] {$[(0,1)-\infty]$};
  \end{tikzpicture}
  \caption{When $g=2$ the divisor at infinity of $\mu^{-1}(f)$ consist of three copies of the curve $y^2=f(x)$,
  intersecting according to the indicated pattern.}\label{fig:divisor_at_infinity_g=2}
\end{figure}
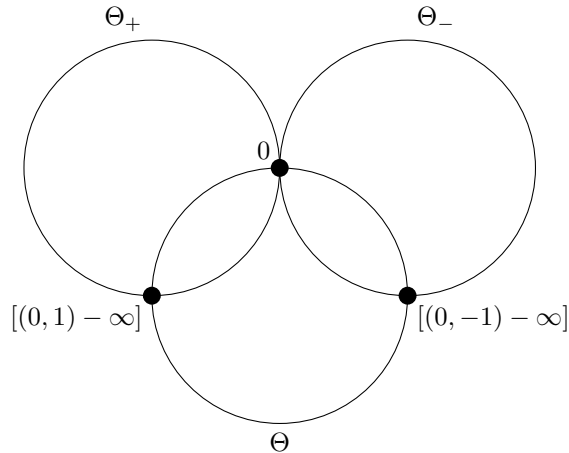

To do this, we first recall the general fact that two translates of the theta divisor (also called \emph{theta
curves}) intersect in two points which coincide if and only if the curves are tangent. Consider first a point
(divisor class) in $\Theta_+\cap\Theta_-$. It must be of the form $\left[P+(0,1)-2\infty\right]$ and of the form
$\left[Q+(0,-1)-2\infty\right]$, for some $P,Q\in\bar\Gamma_f$. In particular, the points $P$ and $Q$ must be such
that $P+(0,1)\sim Q+(0,-1)$; as we already recalled, a linear equivalence of such divisors amounts to equality, so
that $P=(0,-1)$ and $Q=(0,1)$ and there is a unique intersection point $[(0,1)+(0,-1)-2\infty]$ which is the origin
$O$ of $\Jac(\bar\Gamma_f)$, since $(0,1)+(0,-1)\sim 2\infty$. Consider next a point in $\Theta\cap\Theta_\pm$. It
must be both of the form $[P-\infty]$ and $[Q+(0,\pm1)-2\infty]$, for some $P,Q\in\bar\Gamma_f$. This leads us now
to the linear equivalence $P+\infty\sim Q+(0,\pm1)$, whose only solutions are $P=\infty,\,Q=(0,\mp1)$ and
$P=(0,\pm1),\,Q=\infty$; the first solution corresponds again to the origin $O$ while the other intersection point
is the point $[(0,\pm1)-\infty]$ (see Figure~\ref{fig:divisor_at_infinity_g=2}).
\end{exa}

\subsection{Discrete Liouville and algebraic integrability}
  
We are now ready to show that the Volterra map $\cV_g$ is Liouville integrable on $(M_g,\PB^\phi)$ when
$\phi\neq0$. Recall that a birational map $R$ on an algebraic Poisson manifold $(M,\PB)$ of dimension $n$ and
(Poisson) rank $2r$ is said to be \emph{Liouville integrable} when the following conditions are satisfied:
\begin{enumerate}
\item[(1)] $R$ is a Poisson map;
\item[(2)] $R$ has $n-r$ functionally independent invariants in involution.
\end{enumerate}
In the case at hand, $M=M_g$ so that $n=3g+1$, and $R=\cV_g$, which we already know to be a Poisson map
(Proposition \ref{prp:volterra_is_poisson}). Also, we already 
have $2g+1$ invariants in involution
(Proposition \ref{prp:H_in_involution}). So it will be sufficient to show, in the proof which follows, that the
rank of $\PB^\phi$ is $2g$ and that the invariants $H_1,\dots,H_{2g+1}$ are functionally independent.

\begin{propn}\label{prp:liouville_integrable}
  Let $\phi\in\C[x]$ be any non-zero polynomial of degree at most $g+1$, vanishing at $0$. Then the Volterra map
  $\cV_g$ is Liouville integrable on $(M_g,\PB^\phi)$.
\end{propn}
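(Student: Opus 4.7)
The plan is to verify both criteria for Liouville integrability that were recalled just before the proposition. By Propositions \ref{prp:volterra_is_poisson} and \ref{prp:H_in_involution}, the map $\cV_g$ is already known to be Poisson with respect to $\PB^\phi$ and to admit the $2g+1$ invariants $H_1,\dots,H_{2g+1}$ in involution. So the remaining tasks are (i) to check functional independence of these invariants on $M_g$, and (ii) to compute the rank of $\PB^\phi$ and show that it equals $2g$; once this is done, $n - r = (3g+1) - g = 2g+1$ matches the number of invariants, and the map is Liouville integrable.

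For (i), I would use Proposition \ref{prp:generic_fibers}: the generic fiber $\mu^{-1}(f)$ of the momentum map $\mu:M_g\to\C[x]$ is an open subset of the $g$-dimensional Jacobian $\Jac(\bar\Gamma_f)$. Since $\dim M_g = 3g+1$, the image of $\mu$ has dimension $(3g+1) - g = 2g+1$, which matches the dimension of the target affine space of polynomials of degree at most $2g+1$ with constant term $1$. Thus $\mu$ is dominant, and its $2g+1$ components $H_1,\dots,H_{2g+1}$ are functionally independent.

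For (ii), the upper bound on the rank is a standard fact from Poisson geometry: on any Poisson manifold of dimension $n$ and rank $2r$, a set of functionally independent functions in involution has cardinality at most $n-r$, which in the present setting forces $r \leqslant g$. For the matching lower bound, I would first treat the two distinguished members $\phi = x^g$ and $\phi = x^{g+1}$ of the pencil: Proposition \ref{prp:bi-Ham} gives, in these cases, $g+1$ functionally independent Casimirs (namely $H_1,\dots,H_g$ together with $C_\phi$), forcing the rank of these two particular brackets to be exactly $2g$. Because $\PB^\phi$ depends linearly on $\phi$, lower semi-continuity of rank then yields rank $2g$ on a Zariski-open subset of $\C^{g+1}\setminus\{0\}$.

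The main obstacle is handling the remaining, possibly exceptional, values of $\phi$. The cleanest route is through Appendix A, which establishes a birational Poisson isomorphism between $M_g$ equipped with the pencil $\PB^\phi$ and the genus $g$ even Mumford system equipped with its standard multi-Hamiltonian pencil. Since the latter is known to have generic rank $2g$ at every non-zero value of the pencil parameter (see \cite{vanhaecke}), transferring this property back to $M_g$ shows that $\PB^\phi$ has rank $2g$ for every non-zero $\phi$, completing the proof. As an alternative that avoids Appendix A, one could exhibit a non-vanishing $2g \times 2g$ minor of the Poisson matrix at a generic point for an arbitrary non-zero $\phi$, using the explicit formulae \eqref{eq:mumford_poisson} together with \eqref{eq:brackets_with_alpha}.
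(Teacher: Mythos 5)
Your treatment of functional independence (via the dimension of the generic fiber of $\mu$ from Proposition \ref{prp:generic_fibers}) and of the upper bound $\mathrm{rank}\,\PB^\phi\leqslant 2g$ coincides with the paper's argument. The problem is your lower bound on the rank, where there is a genuine gap. You claim that Proposition \ref{prp:bi-Ham} "forces the rank of these two particular brackets to be exactly $2g$" for $\phi=x^g$ and $\phi=x^{g+1}$, but the existence of $g+1$ functionally independent Casimirs only gives $\mathrm{rank}\leqslant (3g+1)-(g+1)=2g$, i.e.\ the \emph{same upper bound} you already had; Casimirs can never produce a lower bound on the rank (the zero bracket has every function as a Casimir). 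With no base case established, the semicontinuity step has nothing to propagate, so your primary route collapses.

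Your fallback via Appendix A is also problematic as the paper is organised: the verification there that $\Psi$ is a Poisson isomorphism is carried out precisely by comparing $\pb{x_i,y_j}'{}^\psi$ with formula \eqref{Mlike_pb_lin}, which is \emph{derived inside the proof of this very proposition}, so invoking the appendix here would be circular. The missing idea — and the actual content of the paper's proof — is a direct computation in the spectral coordinates $x_1,\dots,x_g$ (roots of $\cQ$) and $y_i=\cP(x_i)$ introduced in the proof of Proposition \ref{prp:generic_fibers}: one has $\pb{x_i,x_j}^\phi=\pb{y_i,y_j}^\phi=0$ automatically, and by evaluating $\pb{y_i,\ln\cQ(y)}^\phi$ in two ways (once from \eqref{eq:mumford_poisson}, once from the product formula \eqref{eq:PQ_from_xy}) one obtains $\pb{x_i,y_j}^\phi=-x_i\phi(x_i)\delta_{ij}$, which is generically non-degenerate for every non-zero $\phi$ simultaneously. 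Your closing "alternative" of exhibiting a non-vanishing $2g\times2g$ minor is the right instinct, but it is only a gesture at this computation, not the computation itself.
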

\begin{proof}
We first show that the components of $\mu$, which are the $2g+1$ polynomial functions $H_i$, defined by
$\mu(\cP,\cQ,\cR)=1+\sum_{i=1}^{2g+1}H_ix^i$, are functionally independent. According to
Proposition~\ref{prp:generic_fibers}, the generic fiber of $\mu$ (which is the generic fiber of
$H_1,\dots,H_{2g+1}$) is an open subset of the Jacobian of a curve of genus $g$, hence has dimension $g$. 
The dimension of the generic fiber of $\mu$ is given by $\dim
M_g-s=3g+1-s$, where $s$ denotes the number of functionally independent functions in
$H_1,\dots,H_{2g+1}$. Therefore, $s=2g+1$ and the components of $\mu$ are functionally independent.

It remains to be shown that the rank of $\PB^\phi$ is $2g$. Since $\dim M=3g+1$ and $\PB^\phi$ admits $2g+1$
functionally independent functions in involution, the rank of $\PB^\phi$ is at most $2g$ (see
\cite[Prop.\ 3.4]{vanhaecke}); to show equality, in a neighborhood of a generic point $(\cP,\cQ,\cR)\in M_g$ we
take the functions $x_1,\dots,x_n,y_1,\dots,y_n$, which we constructed in the proof of
Proposition~\ref{prp:generic_fibers}. Then $\pb{x_i,x_j}^\phi=\pb{y_i,y_j}^\phi=0$, since
$\pb{\cP(x),\cP(y)}^\phi=\pb{\cQ(x),\cQ(y)}^\phi=0$. We show that the brackets $\pb{x_i,y_j}^\phi$ are non-zero if
and only if $i=j$, from which it follows that the rank of $\PB^\phi$ is indeed $2g$. To do this, we compute for
$1\leqslant i\leqslant g$ the Poisson bracket $\pb{y_i,\ln\cQ(y)}^\phi$ in two different ways. First, using
\eqref{eq:mumford_poisson},
\begin{equation}
  \pb{y_i,\ln\cQ(y)}^\phi=\pb{\cP(x_i),\ln\cQ(y)}^\phi=\frac{\pb{\cP(x_i),\cQ(y)}^\phi}{\cQ(y)}
    =\frac{y\phi(x_i)-x_i\phi(y)\cQ(x_i)/\cQ(y)} {x_i-y}
    =y\frac{\phi(x_i)} {x_i-y}\;,
\end{equation}
and next, using \eqref{eq:PQ_from_xy},
\begin{equation}
  \pb{y_i,\ln\cQ(y)}^\phi=\sum_{j=1}^g\pb{y_i,\ln(1-y/x_j)}^\phi 
  =y\sum_{j=1}^g\frac{\pb{y_i,x_j}^\phi}{x_j(x_j-y)}\;,
\end{equation}
so that
\begin{equation*}
  \frac{\phi(x_i)} {x_i-y}=\sum_{j=1}^g\frac{\pb{y_i,x_j}^\phi}{x_j(x_j-y)}\;.
\end{equation*}
Since (generically) all $x_j$ are  different, $\pb{x_j,y_i}^\phi=0$ when $j\neq i$, while
$\pb{x_i,y_i}^\phi=-x_i\phi(x_i)$, so that
\begin{equation}\label{Mlike_pb_lin}
  \pb{x_i,y_j}^\phi=-x_i\phi(x_i)\delta_{ij}\;.
\end{equation}
Since $\phi\neq0$, this shows that the rank of $\PB^\phi$ is $2g$.
\end{proof}

We now conclude with the main result of this section, namely that the Volterra map is a \emph{discrete
a.c.i.\ system}. Recall from Definition \ref{def:adi} that this means that, besides being Liouville integrable,
the generic level sets defined by the invariants are affine parts of Abelian varieties (complex algebraic tori) and
the restriction of the map to any of these Abelian varieties is a translation.
\begin{thm}\label{prp:volterra_aci}
  The Volterra map $\cV_g$ is a \emph{discrete a.c.i.\ system} on $(M_g,\PB^\phi)$. 
\end{thm}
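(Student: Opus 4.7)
The plan is to verify the one condition in Definition \ref{def:adi} that remains. By Proposition \ref{prp:volterra_is_poisson}, $\cV_g$ is a Poisson map with respect to $\PB^\phi$; by Proposition \ref{prp:liouville_integrable} it is Liouville integrable, with invariants $H_1,\dots,H_{2g+1}$; and by Proposition \ref{prp:generic_fibers}, the generic fiber $\mu^{-1}(f)$ is isomorphic to an affine part of the Abelian variety $\Jac(\bar\Gamma_f)$. Condition (1) of Definition \ref{def:adi} is thus in place, and it only remains to check condition (2): that the restriction of $\cV_g$ to a generic fiber is a translation in $\Jac(\bar\Gamma_f)$.

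I will do this via the explicit divisor-class description of the isomorphism in the proof of Proposition \ref{prp:generic_fibers}. For $(\cP,\cQ,\cR)\in\mu^{-1}(f)$ the corresponding divisor class is $[D]$ with $D:=\sum_{i=1}^g P_i - g\infty$, where $P_i=(x_i,\cP(x_i))$ and $x_1,\dots,x_g$ are the roots of $\cQ$. Writing $\cV_g(\cP,\cQ,\cR)=(\tilde\cP,\tilde\cQ,\tilde\cR)$, the image is $[\tilde D]$, with $\tilde D:=\sum_i\tilde P_i-g\infty$ defined analogously from $\tilde\cQ$ and $\tilde\cP$. The goal is to show that $[\tilde D]-[D]$ depends only on $\bar\Gamma_f$, and not on the choice of $(\cP,\cQ,\cR)$ within the fiber.

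The key calculation is the divisor on $\bar\Gamma_f$ of the rational function $\tilde F := (y+\tilde\cP(x))/\tilde\cQ(x)$. From \eqref{PQrec2} one has $\tilde\cR(x)=-wx\,\cQ(x)$, so the $g+1$ roots of $\tilde\cR$ are $0,x_1,\dots,x_g$; combined with $\tilde\cP=\cQ-\cP$ and $\cP(0)=1$, $\cQ(0)=2$, this shows that $\tilde\cP$ takes the values $1,-y_1,\dots,-y_g$ at these roots. Inserting this into the standard pole/zero count for $y+\tilde\cP$ on $\bar\Gamma_f$ (pole of order $2g+1$ at $\infty$, $g$ affine zeros from roots of $\tilde\cQ$ and $g+1$ affine zeros from roots of $\tilde\cR$, in each case on the appropriate branch), and then subtracting $\mathrm{div}(\tilde\cQ)$, I expect to obtain
\begin{equation*}
  \mathrm{div}(\tilde F)\;=\;D-\tilde D+(0,-1)-\infty\;.
\end{equation*}
Since the left-hand side is a principal divisor, its class vanishes, giving $[\tilde D]-[D]=[(0,-1)-\infty]$, a fixed element of $\Jac(\bar\Gamma_f)$ independent of the base point. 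This shows that $\cV_g$ acts as a fixed translation on each generic fiber, completing the verification that $\cV_g$ is a discrete a.c.i.\ system.

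The main obstacle is expected to be the careful bookkeeping in the divisor calculation: one must consistently track which branch $y=\pm\tilde\cP(x)$ of $\bar\Gamma_f$ each zero of $\tilde\cR$ lies on, and ensure the hyperelliptic involution is handled correctly so that the cancellations in the final formula are clean. Beyond this the argument is essentially free, since all the heavy machinery (Poisson property, involutivity of the $H_i$, rank computation, and identification of fibers with affine Jacobians) has already been established in Propositions \ref{prp:volterra_is_poisson}--\ref{prp:generic_fibers}.
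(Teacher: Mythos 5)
Your proposal is correct and follows essentially the same route as the paper: the translation property is established by computing the divisor of the rational function $\tilde F=(y+\tilde\cP(x))/\tilde\cQ(x)$, finding $(\tilde F)=\cD-\tilde\cD+(0,-1)-\infty$, and concluding that $\cV_g$ restricts to translation by the fixed class $[(0,-1)-\infty]$ on each generic fiber. The paper handles the branch bookkeeping you flag by using the alternative expression $\tilde F=-wx\,\cQ(x)/(y-\cQ(x)+\cP(x))$ to read off the zeros at $(x_i,y_i)$ and $(0,-1)$ directly, which is equivalent to your computation of $\tilde\cP$ at the roots of $\tilde\cR$.
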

\begin{proof}
Liouville integrability was already shown in Proposition \ref{prp:liouville_integrable}. Let $f\in\C[x]$ be a
polynomial of degree $2g+1$, without repeated roots and with $f(0)=1$. Writing $f(x)=1+\sum_{i=0}^{2g+1}c_ix^i$,
the common level set defined by $H_i=c_i$, $i=1,2,\dots,2g+1$ is the fiber $\mu^{-1}(f)$ which was shown in
Proposition \ref{prp:generic_fibers} to be an affine part of the Jacobian of $\bar\Gamma_f$, and this is indeed an
Abelian variety. It remains to be shown that the restriction of the Volterra map $\cV_g$ to
$\overline{\mu^{-1}(f)}\cong \Jac(\bar\Gamma_f)$ is a translation; more precisely we will show that it is a
translation over $[(0,-1)-\infty]=[\infty-(0,1)]$. Let $(\cP,\cQ,\cR)\in \mu^{-1}(f)$ be a generic point (a regular
triplet) so that $(\tilde\cP,\tilde\cQ,\tilde\cR):=\cV_g(\cP,\cQ,\cR)$ also belongs to $\mu^{-1}(f)$. The degree
$g$ divisors on $\bar\Gamma_f$ corresponding to these two triplets are respectively denoted by
$\cD=\sum_{i=1}^g(x_i,y_i)$ and $\tilde\cD=\sum_{i=1}^g(\tilde x_i,\tilde y_i)$ (so that the corresponding divisor
classes in $\Jac(\bar\Gamma_f)$ are $[\cD-g\infty]$ and $[\tilde\cD-g\infty]$, respectively).  Consider the
rational function $\tilde F$ on $\bar\Gamma_f$ given by the action of $\cV_g$ on (\ref{func}), which in view of
\eqref{eq:map_in_PQR} and the definition \eqref{def:fiber} of $f$, can be written in a few different ways:
\begin{equation}\label{eq:rat_fun}
  \tilde F:=\frac{y+\tilde\cP(x)}{\tilde\cQ(x)}=\frac{\tilde\cR(x)}{y-\tilde\cP(x)}=
  \frac{-w x\cQ(x)}{y-\cQ(x)+\cP(x)}\;.
\end{equation}
It is clear from the last, respectively first, expression that $\tilde F$ has a simple zero at the each of the points $(x_i,y_i)$ and
$(0,-1)$, and a simple pole at each of the points $(\tilde x_i,\tilde y_i)$ and $\infty$. To verify the behaviour at 
$\infty$, one should introduce a local parameter $z$ such that $x=1/z^2$, $y=\sqrt{c_{2g+1}}z^{-(2g+1)}\big(1+\mathcal O(z)\big)$ there, 
which gives $\tilde F= \sqrt{c_{2g+1}}q_g^{-1}/z +\mathcal O(1)$. 
For the other zero or pole candidates in
$\Gamma_f$, coming from places where the numerators or denominators in \eqref{eq:rat_fun} vanish, one checks using
one of the alternative formulae 
that $\tilde F$ is finite and non-zero at these points. Thus $\tilde F$ 
has precisely $g+1$ zeros and $g+1$ poles, in accord with the fact that 
the degree of the divisor of a rational function is zero. The upshot is
that the divisor of zeros and poles of $\tilde F$ is given by
\begin{equation*}
  (\tilde F)=\sum_{i=1}^g(x_i,y_i)+(0,-1)-\sum_{i=1}^g(\tilde x_i,\tilde y_i)-\infty=\cD-\tilde\cD+(0,-1)-\infty\;,
\end{equation*}
which leads to the linear equivalence $\cD+(0,-1)\sim\tilde\cD+\infty,$ and hence to
\begin{equation}\label{shift}
  [\tilde\cD-g\infty]=[\cD-g\infty]+[(0,-1)-\infty]\;,
\end{equation}
as was to be shown.
\end{proof}
According to Proposition \ref{prp:bi-Ham}, when $\phi=\phi_gx^g+\phi_{g+1}x^{g+1}$ we can restrict the Volterra map
and its Poisson structure to $M_g^c=\cap_{i=1}^g(H_i=c_i)$, and so by the above theorem 
the Volterra map is a discrete a.c.i.\ system on $(M_g^c,\PB^\phi)$. In particular, the recursion relations obtained by
fixing the invariants $H_1,\dots,H_g$ to generic values $c_i$ (and possibly also fixing the other Casimir $C_\phi$
to some generic value) are discrete a.c.i.\ systems.

\begin{exa}
  In the genus 1 case,  it follows that \eqref{eq:rec_g1}, \eqref{eq:g=1_red_H3} and \eqref{eq:g=1_red_H2},
  equipped respectively with the Poisson structures $\PB^\phi$, $\PB^x$ and $\PB^{x^2}$, are discrete
  a.c.i.\ systems. The same holds for \eqref{eq:rec_for_somos}, which is a discrete a.c.i.\ system with respect to
  the quadratic Poisson structure \eqref{eq:quadratic_Poisson}.
\end{exa}

\begin{exa}
  In the genus 2 case, we have that for generic $a,b,\nu$ the map (P.iv) is a discrete a.c.i.\ system.
\end{exa}

\section{Continuous flows and the infinite Volterra and Toda lattices}\label{sec:continuous}
\setcounter{equation}{0}

The discrete integrable systems that we have discussed so far are naturally associated with continuous systems
which are equally integrable. More precisely, Liouville integrability 
of the Hamiltonian systems associated with the Volterra maps comes for free, 
and with some extra work we 
show that these continuous systems are also algebraically integrable. 
We further show that in the genus $g$ case any solution $w_i(t)$ of one of the
integrable Hamiltonian vector fields extends, under the action of the Volterra map,  
to a sequence $\big(w_n(t)\big)_{n\in\Z}$ that is a solution to the infinite Volterra lattice;  notice that in particular, 
as discussed in the introduction, this applies to
the map (P.iv). We also discuss the relation between the infinite Toda and Volterra lattices, which 
explains in part how some of the results in this paper
are related to the results in \cite{contfrac}, and what motivated us to introduce 
S-fractions and the corresponding Mumford-like
systems to study the map (P.iv) and its higher genus analogues.

\subsection{Liouville and algebraic integrability}\label{par:integrability}

Recall that on $M_g$ we have a family of compatible Poisson brackets $\PB^\phi$ of rank $2g$, as well as a family
of polynomial functions $H_1,\dots,H_{2g+1}$, where $H_i$ is the coefficient in $x^i$ of $\cP^2(x)+\cQ(x)\cR(x)$;
said differently, $H_1,\dots,H_{2g+1}$ are the components of the momentum map $\mu:M_g\to\C[x]$.
For the sake of clarity, and since the choice of Poisson structure is not important for what follows, we will only
consider $\phi=x^{g+1}$ here, and henceforth write $\PB$ for $\PB^\phi$. With this choice of $\phi$, \eqref{eq:vf1} --
\eqref{eq:vf3} become
\begin{align*}
  \pb{\cP(x),\cP(y)^2+\cQ(y)\cR(y)} &=y^{g+1}\frac{y\cQ(y)\cR(x)-x\cQ(x)\cR(y)}{x-y}\;,\\
  \pb{\cQ(x),\cP(y)^2+\cQ(y)\cR(y)}&=2y^{g+1}\frac{x\cP(y)\cQ(x)-y\cP(x)\cQ(y)}{x-y}-y^{g+1}\cQ(x)\cQ(y)\;,\\
  \pb{\cR(x),\cP(y)^2+\cQ(y)\cR(y)}&=2xy^{g+1}\frac{\cP(x)\cR(y)-\cP(y)\cR(x)}{x-y}+y^{g+1}\cQ(y)\cR(x)\;.
\end{align*}
As we have seen, $H_1,\dots,H_{g}$ are Casimir functions of the Poisson bracket, 
as well as  $H_{g+1}=C_\phi$ (see
\eqref{eq:Casimir_of_pencil}). The vector fields $\frac12\rX_{H_{g+i+1}}$ are denoted by~$\rX_i$; we will mainly be 
interested in $\rX_1=\frac12\rX_{H_{g+2}}$, which we can compute by dividing the above equations by $2y^{g+2}$ and taking
the limit for $y\to0$, so that 
\begin{equation}\label{eq:pdot}
  \dot\cP(x)=\lim_{y\to0}\frac{y\cQ(y)\cR(x)-x\cQ(x)\cR(y)}{2y(x-y)}=\frac{2\cR(x)-x\cQ(x)\cR'(0)}{2x}
            =\frac{\cR(x)}x-\frac{\r1}2\cQ(x)\;,
\end{equation}
where the dot denotes the derivative $\tfrac{\rd}{\rd t}$, 
and similarly
\begin{align}\label{eq:qrdot}
  \dot\cQ(x)&=\frac{2\p1-\q1}2\cQ(x)-\frac{2\cP(x)-\cQ(x)}x\;,\nonumber\\
  \dot\cR(x)&=\r1\cP(x)+\frac{\q1-2\p1}2\cR(x)-\frac{\cR(x)}x\;.
\end{align}
Notice that the vector field $\rX_1$ is (non-homogeneous) quadratic. 
From Proposition \ref{prp:H_in_involution}, the functions $H_j$ are in involution 
with one another with respect to  $\PB$, which means that the vector fields~$\rX_i$ 
all commute.  


Note that the Liouville integrability of this  continuous system is incorporated 
into the Liouville integrability of the discrete system, so the following statement 
is an automatic consequence of  Proposition
\ref{prp:liouville_integrable}. 
\begin{propn}\label{prp:cont_liouville}
  The Hamiltonian system $(M_g,\PB,\mu)$ is a Liouville integrable system.
\end{propn}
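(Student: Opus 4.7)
The statement is essentially a direct translation of the discrete Liouville integrability already established in Proposition \ref{prp:liouville_integrable}. The plan is to verify that the data $(M_g, \PB, \mu)$ meets the definition of a Liouville integrable Hamiltonian system and to observe that all the required ingredients have already been assembled in the previous section.

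First I would recall that $M_g$ has dimension $3g+1$ and that by the argument in Proposition \ref{prp:liouville_integrable} the Poisson bracket $\PB = \PB^{x^{g+1}}$ has rank exactly $2g$; this is the one nontrivial computation from Section 4, carried out there by evaluating $\{x_i,y_j\}^\phi$ in Darboux-type coordinates on a generic fiber of $\mu$ and finding the diagonal expression \eqref{Mlike_pb_lin}. Consequently, a Liouville integrable Hamiltonian structure on $(M_g, \PB)$ requires $3g+1 - g = 2g+1$ functionally independent invariants in involution.

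Next I would assemble these invariants from the momentum map. The components $H_1, \dots, H_{2g+1}$ of $\mu$ are pairwise in involution by Proposition \ref{prp:H_in_involution}, and their functional independence was verified in the proof of Proposition \ref{prp:liouville_integrable}, using that the generic fiber $\mu^{-1}(f)$ is $g$-dimensional (Proposition \ref{prp:generic_fibers}) together with the dimension count $\dim M_g - s = g$, which forces $s = 2g+1$. Since $H_{g+2},\dots,H_{2g+1}$ are not all Casimirs (only $H_1,\dots,H_g$ and $C_\phi$ are, by Proposition \ref{prp:bi-Ham}), the resulting Hamiltonian flows $\rX_i = \tfrac12 \rX_{H_{g+i+1}}$ are nontrivial commuting vector fields, as already exhibited explicitly in \eqref{eq:pdot}--\eqref{eq:qrdot} for $\rX_1$.

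Putting these pieces together, $(M_g, \PB, \mu)$ satisfies the definition of Liouville integrability with $n=3g+1$, $2r=2g$, and $n-r = 2g+1$ independent invariants in involution. There is no genuine obstacle: the only step that required real work was the rank computation inside Proposition \ref{prp:liouville_integrable}, and the present proposition is simply the continuous shadow of that result, obtained by forgetting the Poisson map $\cV_g$ and retaining only the Poisson structure and its invariants.
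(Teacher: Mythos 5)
Your proposal is correct and follows exactly the route the paper takes: the paper states Proposition \ref{prp:cont_liouville} as an automatic consequence of Proposition \ref{prp:liouville_integrable}, since the continuous notion of Liouville integrability requires precisely the rank computation, the involutivity, and the functional independence already established there, with only the Poisson-map condition on $\cV_g$ discarded. Your elaboration of why each ingredient carries over (including the observation that $H_{g+2},\dots,H_{2g+1}$ are not Casimirs and hence generate nontrivial commuting flows) is a faithful unpacking of the same argument.
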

%
%

We now turn to the algebraic integrability of the Mumford-like system, which is 
slightly more involved in the continuous case. 
Recall (for example from \cite[Ch.\ 6]{amv}) that $(M_g,\PB,\mu)$ being an a.c.i.\ system means that
\begin{enumerate}
  \item[(1)] $(M_g,\PB,\mu)$ is a (complex) Liouville integrable system;
  \item[(2)] The generic fiber of $\mu$ is (isomorphic to) an affine part of an Abelian variety;
  \item[(3)] The integrable vector fields are holomorphic (hence constant) on these Abelian varieties.
\end{enumerate}
Items (1) and (2) have been shown already, in Propositions \ref{prp:cont_liouville} and
\ref{prp:generic_fibers}, respectively, so it only remains to address item (3). 

\begin{propn}\label{abelproof} 
 The Hamiltonian system $(M_g,\PB,\mu)$ is an algebraic 
completely integrable system (a.c.i.\ system). 
\end{propn}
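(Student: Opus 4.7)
Items (1) and (2) of the definition of a.c.i.\ systems have already been verified: Liouville integrability of $(M_g,\PB,\mu)$ is Proposition \ref{prp:cont_liouville}, and the generic fiber $\mu^{-1}(f)$ is identified in Proposition \ref{prp:generic_fibers} with $\Jac(\bar\Gamma_f)\setminus(\Theta\cup\Theta_+\cup\Theta_-)$. It therefore remains to establish item (3), namely that each of the commuting Hamiltonian vector fields $\rX_i=\tfrac12\rX_{H_{g+i+1}}$, $i=1,\dots,g$, extends holomorphically (hence as a constant vector field) to the Abelian variety $\Jac(\bar\Gamma_f)$.

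The plan is to exhibit a Lax representation with spectral parameter for each $\rX_i$, and then to linearize the flow via the Abel--Jacobi map. For $\rX_1$, a direct computation starting from \eqref{eq:pdot}--\eqref{eq:qrdot} shows that the equations of motion can be rewritten in the Lax form $\dot\lax(x)=[\mathbf N_1(x),\lax(x)]$, where
\begin{equation*}
  \mathbf N_1(x)=
  \begin{pmatrix}
    -\tfrac{1}{2x}+\tfrac{q_1-2p_1}{4} & -\tfrac{r_1}{2}\\[2pt]
    -\tfrac{1}{x} & \tfrac{1}{2x}-\tfrac{q_1-2p_1}{4}
  \end{pmatrix}\;.
\end{equation*}
Expanding $\cP^2(x)+\cQ(x)\cR(x)-f(x)$ at order $x^{g+i+1}$ yields, in an analogous way, matrices $\mathbf N_i(x)$ with $\dot\lax(x)=[\mathbf N_i(x),\lax(x)]$ for $i=2,\dots,g$. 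The existence of such a Lax representation immediately reproves the invariance of the spectral polynomial $\det(\lax(x)-y\mathbf1)=y^2-f(x)$, hence of all the $H_j$, and is the key ingredient for the linearization statement.

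The Lax representation is then used to compute the induced flow on the fiber $\mu^{-1}(f)$ in divisor coordinates $(x_1,y_1),\dots,(x_g,y_g)$, where $x_i$ are the roots of $\cQ(x)$ and $y_i=\cP(x_i)$, as in the proof of Proposition \ref{prp:generic_fibers}. Differentiating the defining relation $\cQ(x_i)=0$ along the flow of $\rX_1$ and substituting \eqref{eq:qrdot} at $x=x_i$ gives $\dot x_i=\tfrac{2\,y_i}{x_i\,\cQ'(x_i)}$, from which $\dot y_i$ follows by differentiating $y_i=\cP(x_i)$. Pulling back via the Abel--Jacobi map with holomorphic differentials $\omega_k=x^{k-1}\,\rd x/y$ ($k=1,\dots,g$), we find
\begin{equation*}
  \frac{\rd}{\rd t}\sum_{i=1}^g\int^{(x_i,y_i)}\omega_k
  =\sum_{i=1}^g\frac{x_i^{k-1}}{y_i}\dot x_i
  =2\sum_{i=1}^g\frac{x_i^{k-2}}{\cQ'(x_i)}\;,
\end{equation*}
which by the standard Lagrange interpolation identity is constant, depending only on the coefficients of $\cQ$ through the fixed polynomial $f$; an identical argument applied to $\mathbf N_i(x)$ handles the remaining flows $\rX_2,\dots,\rX_g$. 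Thus each $\rX_i$ pulls back to a constant vector field on $\C^g$ under the Abel--Jacobi map, which proves holomorphic extension to $\Jac(\bar\Gamma_f)$ once one checks that the vector fields extend across the three translates $\Theta,\Theta_+,\Theta_-$ of the theta divisor.

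The main obstacle is precisely this last point: a priori one only controls $\rX_i$ on the affine open $\mu^{-1}(f)$, and it must be shown that no pole develops along the divisor at infinity. This is handled as follows: the fields $\rX_i$ are polynomial in the global coordinates $(p_k,q_k,r_k)$ of $M_g$, hence extend as regular vector fields on all of $M_g$; since the eigenvector map $\mu^{-1}(f)\hookrightarrow\Jac(\bar\Gamma_f)$ is birational and its image is precisely the complement of the three theta translates, a standard Hartogs-type argument (together with the fact that $\rX_i$ restricts to a translation-invariant, and hence globally defined holomorphic, vector field on the dense open part of $\Jac(\bar\Gamma_f)$) forces the extension to be holomorphic across the codimension-one theta translates. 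Alternatively, one can invoke the Poisson isomorphism of Appendix A between $M_g$ and the even Mumford system and appeal to the a.c.i.\ property established in \cite{vanhaecke}. Either route completes the verification of (3) and hence establishes the proposition.
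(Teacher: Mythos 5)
Your proposal is correct and follows essentially the same route as the paper's proof: items (1) and (2) are quoted from Propositions \ref{prp:cont_liouville} and \ref{prp:generic_fibers}, and item (3) is obtained by passing to the divisor coordinates $(x_i,y_i)$, computing $\dot x_i$ from $\dot\cQ(x_i)$ via \eqref{eq:qrdot}, and linearizing under the Abel map by means of the Lagrange interpolation identity, exactly as in the paper (where the resulting right-hand side is the absolute constant $-\delta_{k,0}$, not a quantity depending on the coefficients of $\cQ$ or $f$). The two extra ingredients you add are harmless but unnecessary: the Lax matrix $\mathbf{N}_1(x)$ is correct --- one checks that $[\mathbf{N}_1(x),\lax(x)]$ reproduces \eqref{eq:pdot}--\eqref{eq:qrdot} --- yet it plays no role in your subsequent computation, and no Hartogs-type argument is needed at the theta translates, since identifying $\rX_1$ with a constant vector field in the linear coordinates of the universal cover of $\Jac(\bar\Gamma_f)$ already exhibits its unique holomorphic extension to the whole torus, which is how the paper concludes.
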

\begin{proof}
We show (3) for one of the integrable vector fields; then it also holds for the other
integrable vector fields, since the latter are holomorphic on the fiber and commute with a constant vector field.

The vector field which we consider is the Hamiltonian vector field $\rX_1$, given by \eqref{eq:pdot} and
\eqref{eq:qrdot}.  Let $(\cP_0,\cQ_0,\cR_0)$ be a generic point of $M_g$ and consider for small $\vert t\vert$ the
integral curve $t\mapsto (\cP_t,\cQ_t,\cR_t)$ of $\rX_1$, starting at ($\cP_0,\cQ_0,\cR_0)$. Let
$\cD_t=\sum_{i=1}^g(x_i(t),y_i(t))-g\infty$ denote the associated divisor on the algebraic curve~$\Gamma_f$,
defined by it; recall that $\Gamma_f$ is given by $y^2=f(x)$ where $f=\cP_0^2+\cQ_0\cR_0=\cP_t^2+\cQ_t\cR_t$. Since
the $x_i(t)$ are the roots of $\cQ_t(x)$, upon substituting $x=x_i(t)$ in the equation \eqref{eq:qrdot} for
$\dot\cQ(x)$ we get 
\begin{equation*}
  \dot\cQ(x_i(t))=-2\frac{\cP(x_i(t))}{x_i(t)}=-2\frac{y_i(t)}{x_i(t)} \;.
\end{equation*}
However, we can also compute $\dot\cQ(x_i(t))$ directly from the explicit formula \eqref{eq:PQ_from_xy}
for $\cQ(x)$, to wit
\begin{equation*}
  \dot\cQ(x_i(t))=\frac{2\dot x_i(t)}{x_i(t)}\prod_{j\neq i}\left(1-\frac {x_i(t)}{x_j(t)}\right)\;.
\end{equation*}
Comparing these two expressions gives
\begin{equation}\label{eq:compare}
  y_i(t)=-{\dot x_i(t)}\prod_{j\neq i}\left(1-\frac {x_i(t)}{x_j(t)}\right)\;.
\end{equation}
It follows that for $k=0,\dots,g-1$,
\begin{equation}\label{eq:diff_lin}
  \sum_{i=1}^g\frac{x_i^k(t)\dd x_i(t)}{y_i(t)}=-\sum_{i=1}^gx_i^k(t)\prod_{j\neq
    i}\frac{x_j(t)}{x_j(t)-x_i(t)}\dd t=-\delta_{k,0}\,\dd t\;. 
\end{equation}
Above we have used the following identity which is well-known in the theory of symmetric functions: 
\begin{equation}
  \sum_{i=1}^gx_i^k\prod_{j\neq i}\frac{x_j}{x_j-x_i}=\delta_{k,0}\;, \qquad  
k=0,\dots,g-1;  
\end{equation}
the
proof of the latter follows easily from the fact that any lowest degree antisymmetric polynomial in $g$ variables is, up to a
factor, the Vandermonde determinant. 
Integrating \eqref{eq:diff_lin} from $0$ to $t$ gives
\begin{equation}\label{eq:abel}
 \int_{\cD_0}^{\cD_t}\frac{x^k\dd x}{y}=-t\,\delta_{k,0}\;.
\end{equation}
The left-hand side of \eqref{eq:abel} contains the differentials $x^k\dd x/y$ for $k=0,\dots,g-1$, which form a
basis for the holomorphic differentials on $\bar\Gamma_f$. Thus the left-hand side of \eqref{eq:abel} is the image of the divisor $\cD_t-\cD_0$ under 
the Abel map, which is (by Abel's Theorem) an isomorphism between the
algebraic Jacobian of $\bar\Gamma_f$, 
consisting of degree zero divisor classes on $\bar\Gamma_f$,  and the analytic
Jacobian of $\bar\Gamma_f$, which is a complex torus, that is 
$$
\Jac (\bar\Gamma_f)\cong H^0\left(\Omega^1_{\bar\Gamma_f}\right)^*/H_1(\bar\Gamma_f).
$$ 
Formula \eqref{eq:abel} then says that the integral
curve of $\rX_1$  
is a straight line in this complex torus, as was to be shown.
\end{proof}

It follows that for, generic initial conditions, the solutions to $\rX_1$ are meromorphic functions in $t$.

\subsection{Genus $g$ solutions to the infinite Volterra and Toda lattices}

The infinite Volterra lattice is given by the 
set of equations
\begin{equation}\label{eq:infinite_volterra} 
  \dot w_n = w_n (w_{n+1}-w_{n-1})\;,\qquad n\in\Z\;.
\end{equation}
It was first considered by Kac and van Moerbeke~\cite{kacvm}, who also studied the $N$-periodic case ($w_{N+n}=w_n$
for all $n$). We now show 
that the Volterra map allows us to define, for any $g$, infinite sequences
of meromorphic functions $(w_n(t))_{n\in\Z}$ which satisfy \eqref{eq:infinite_volterra}. Since these sequences are
defined from solutions of the genus $g$ Mumford-like system, and hence can be written in terms of genus $g$ theta
functions, we will refer to these solutions to the Volterra lattice as \emph{genus $g$ solutions}.

Let $g>0$ be fixed and consider the vector field on $\C^{3g+1}_w$, corresponding to the vector field $\frac12 \rX_1$ 
on the genus $g$ Mumford-like system, via the birational transformation constructed in Section
\ref{par:birat_trans}. For the sake of brevity, let us call this the w-system (in genus $g$). By algebraic integrability,
the vector field $\frac12 \rX_1$ has globally defined meromorphic solutions $w_1(t),\dots,w_{3g+1}(t)$, corresponding to 
generic initial conditions. Using the recursion relation induced by the action of the Volterra map on  $\C^{3g+1}_w$,
we get also globally defined meromorphic functions
$w_n(t)$ for all $n>3g+1$ and all $n\leqslant 0$. Algebraic integrability further implies that the
recursion and the flow of the vector field must commute, as they both correspond to translations on the fibers of the
momentum map, which are affine parts of $g$-dimensional tori. (The fact that the map and the flow 
commute is already a consequence of Liouville integrability.)   It follows that all formulae only  involving the
variables $w_n$ remain valid when all indices are shifted by the same integer. In the proof of the theorem
which follows we will make extensive use of the birational transformation between the w-system and the Mumford-like
system. The triplet corresponding to a meromorphic solution $(w_1(t),\dots,w_{3g+1}(t))$ of the w-system in genus
$g$ will be denoted $(\cP_0(x;t),\cQ_0(x;t),\cR_0(x;t))$, the index $0$ being added because we will also use the
triplets $(\cP_n(x;t),\cQ_n(x;t),\cR_n(x;t))$, obtained from it by applying  the Volterra map or its
inverse several times.  Again, all formulae involving only the polynomials $\cP_n,\cQ_n,\cR_n$, $n\in\Z$, remain valid when all
indices are shifted by the same integer, and for any $n\in\Z$, $(\cP_n(x;t),\cQ_n(x;t),\cR_n(x;t))$ corresponds to 
$(w_{n+1}(t),\dots,w_{3g+n+1}(t))$ 
under the birational map.

\begin{thm}\label{thm:volterra} 
  The sequence of meromorphic functions $\big(w_n(t)\big)_{n\in\Z}$ is a solution to the infinite Volterra lattice
  \eqref{eq:infinite_volterra}.
\end{thm}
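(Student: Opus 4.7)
The plan is to derive the Volterra lattice equation by reading off the coefficient of $x$ in the shifted version of \eqref{eq:qrdot}, and then substituting the algebraic consequences of the Volterra recursion~\eqref{PQrecn}. The preliminary observation is that, because $\cV_g$ is a Poisson map (Proposition~\ref{prp:volterra_is_poisson}) and each $H_i$ is $\cV_g$-invariant (Proposition~\ref{laxGamma}), the Hamiltonian vector field $\rX_1$ is preserved by $\cV_g$. Consequently $\cV_g$ commutes with the flow of $\rX_1$, and for every $n\in\Z$ the triplet $(\cP_n(x;t),\cQ_n(x;t),\cR_n(x;t)):=\cV_g^n(\cP_0(x;t),\cQ_0(x;t),\cR_0(x;t))$ satisfies the same ODE system \eqref{eq:pdot}--\eqref{eq:qrdot}, merely with the subscripts $0$ replaced by $n$. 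In particular,
\begin{equation*}
  \dot{\cR}_n(x) \;=\; r_{n,1}\,\cP_n(x) + \tfrac12(q_{n,1}-2p_{n,1})\,\cR_n(x) - \frac{\cR_n(x)}{x}.
\end{equation*}

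Next, I would extract the coefficient of $x$ on both sides, which yields
\begin{equation*}
  \dot{r}_{n,1} \;=\; r_{n,1}\,p_{n,1} + \tfrac12(q_{n,1}-2p_{n,1})\,r_{n,1} - r_{n,2} \;=\; \tfrac12\,r_{n,1}\,q_{n,1} - r_{n,2}.
\end{equation*}
Three consequences of \eqref{PQrecn} then close the calculation: the identifications $r_{n,1}=-2w_n$ and $r_{n,2}=-w_n\,q_{n-1,1}$ (from the third line of \eqref{PQrecn}), together with the identity $q_{m,1}=p_{m+1,1}+p_{m,1}$ (from the first line of \eqref{PQrecn}) valid for every $m\in\Z$. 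Inserting these gives
\begin{equation*}
  \dot{r}_{n,1} \;=\; -w_n\,(q_{n,1}-q_{n-1,1}) \;=\; -w_n\,(p_{n+1,1}-p_{n-1,1}).
\end{equation*}
Finally, since $H_1=2(p_{m,1}+r_{m,1})$ is a Casimir of $\PB$, its value $c_1$ is constant along each orbit, so $p_{m,1}=2w_m+c_1/2$ for every $m\in\Z$, and the shift $c_1/2$ drops out of $p_{n+1,1}-p_{n-1,1}=2(w_{n+1}-w_{n-1})$. Combining with $\dot{w}_n=-\tfrac12\,\dot{r}_{n,1}$ delivers the infinite Volterra equation \eqref{eq:infinite_volterra}.

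The only conceptual step is the commutativity used at the outset: it ensures that the ODE system valid for $(\cP_0,\cQ_0,\cR_0)$ propagates verbatim to every iterate $(\cP_n,\cQ_n,\cR_n)$, so that the $n$-dependence in the final formula enters purely through the algebraic recursion relations and not through any new vector-field computation. Once this is in place the rest is a two-line manipulation, and the only arithmetic that has to be tracked is the cancellation of the Casimir shift $c_1/2$ when forming $p_{n+1,1}-p_{n-1,1}$. For negative $n$ one uses the inverse of $\cV_g$, but since $\cV_g$ is a Poisson isomorphism the same commutativity argument applies without modification.
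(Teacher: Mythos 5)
Your proposal is correct and follows essentially the same route as the paper: the same preliminary observation that $\cV_g$ commutes with the flow of $\rX_1$ (so the ODEs propagate to every iterate), the same identifications $r_{n,1}=-2w_n$, $r_{n,2}=-w_nq_{n-1,1}$, $q_{m,1}=p_{m+1,1}+p_{m,1}$ and $p_{m,1}=2w_m+c_1/2$ from \eqref{PQrecn} and the conservation of $H_1$. The only (cosmetic) difference is that you read off $\dot r_{n,1}$ from the $\dot\cR$ equation in \eqref{eq:qrdot}, whereas the paper reads off $\dot p_{n,1}$ from \eqref{eq:pdot}; since $p_{n,1}+r_{n,1}$ is constant these are literally the same computation.
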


\begin{proof}
We first recall the recursion relations \eqref{PQrec2} for the triplets $(\cP_n,\cQ_n,\cR_n)$, which we evaluate at
any meromorphic solution to $\rX_1=\frac12\rX_{H_{g+2}}$: 
\begin{gather}
  \cP_{n+1}(x;t)=\cQ_n(x;t)-\cP_n(x;t)\;,\
    \cQ_{n+1}(x;t)=\frac{2\cP_n(x;t)-\cQ_n(x;t)+\cR_n(x;t)}{-w_{n+1}(t) x}\;,\label{PQrect}\\
  \cR_{n+1}(x;t)=-w_{n+1}(t) x\cQ_n(x;t)\;.\label{Rrect}
\end{gather}
From \eqref{Rrect}, since $H_1=2(\p1+\r1)$ is a first integral, and using $\p{n+1,1}=\q{n,1}-\p{n,1}$, which
follows from the first equation in the recursion relation \eqref{PQrecn}, we have
\begin{equation}\label{eq:w_to_p}
 w_n(t)=-\frac12\r {n,1}(t)=\frac12\p{n,1}(t)-\frac{c_1}4\;,\quad\hbox{and}\quad \r{n,2}(t)=-w_n(t)\q{n-1,1}(t)\;,
\end{equation}
where $c_1$ is a constant. It follows that 
\begin{align*}
  \dot w_n(t)&\stackrel{\eqref{eq:w_to_p}}=\frac12\dot p_{n,1}(t)
    \stackrel{\eqref{eq:pdot}}=\frac12\r{n,2}(t)-\frac14
    \r{n,1}(t)\q{n,1}(t)\stackrel{\eqref{eq:w_to_p}}=\frac{w_n(t)}2(\q{n,1}(t)-\q{n-1,1}(t))\\
    &\;\stackrel{\eqref{PQrect}}=\frac{w_n(t)}2(\p{n+1,1}(t)-\p{n-1,1}(t))
    \stackrel{\eqref{eq:w_to_p}}=w_n(t)(w_{n+1}(t)-w_{n-1}(t))\;,
\end{align*}
as was to be shown.
\end{proof} 

\begin{remark} It is fairly  straightforward to modify the proof of Proposition \ref{abelproof}, and the preceding result, to all of the Hamiltonian vector fields $\rX_i$, associated 
with times $t_i$, $1\leqslant i\leqslant g$, which correspond to the first $g$ flows in the Volterra lattice hierarchy. 
This replaces $t$ by $t_i$ and modifies the Kronecker delta on the right-hand side of  \eqref{eq:abel} to $\delta_{k,i-1}$, hence producing  solutions that are  meromorphic 
in $t=t_1,t_2,\ldots,t_g$. 
\end{remark}

We now apply a standard Miura-like formula, to show how a genus $g$ solution of the Volterra lattice, given by an infinite sequence of meromorphic functions $w_n(t)$, also leads to a corresponding solution to the
infinite Toda lattice, given  by 
\begin{equation}\label{eq:infinite_Toda}
\frac{\rd a_n}{\rd t}  =  a_n(b_{n-1} - b_{n})\;,  \qquad  
  \frac{\rd b_n}{\rd t}  =  a_n - a_{n+1}\;. 
\end{equation}
(These are almost the same as the Flaschka variables for the Toda lattice, except that traditionally $\sqrt{a_n}$ is used in place of $a_n$; 
and similarly, the quantities $\sqrt{w_n}$ are used in \cite{moser}.)  
\begin{cor}\label{thm:Toda} 
  Let $w_n(t)$, $n\in\Z$ be a genus $g$ meromorphic solution to the infinite Volterra lattice
  \eqref{eq:infinite_volterra}. Upon setting, for $j\in\Z$, 
  \begin{equation}\label{eq:moser_map}
    a_{j+1}:=w_{2j-1}w_{2j}\;,\qquad b_{j+1}:=-w_{2j}-w_{2j+1}\;,
  \end{equation}
  the sequence of meromorphic functions $a_j(t),b_j(t)$ 
is a solution to the infinite Toda lattice, 
while another sequence of meromorphic solutions to  \eqref{eq:infinite_Toda} is given for $j\in\Z$ by 
\begin{equation}\label{eq:moser2}
    a^*_{j+1}:=w_{2j}w_{2j+1}\;,\qquad b^*_{j+1}:=-w_{2j+1}-w_{2j+2}\;.
  \end{equation}
\end{cor}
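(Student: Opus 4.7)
The plan is a direct verification: substitute the definitions \eqref{eq:moser_map} into the Toda equations \eqref{eq:infinite_Toda} and reduce everything to the Volterra equations \eqref{eq:infinite_volterra}, which hold by Theorem \ref{thm:volterra}. Since the manipulations are purely local in the index $n$, no algebro-geometric input beyond Theorem \ref{thm:volterra} is needed.

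Concretely, I would first check the equation for $\dot a_{j+1}$. Applying the Leibniz rule and \eqref{eq:infinite_volterra} gives
\begin{equation*}
  \dot a_{j+1}=\dot w_{2j-1}w_{2j}+w_{2j-1}\dot w_{2j}
  =w_{2j-1}w_{2j}\bigl((w_{2j}-w_{2j-2})+(w_{2j+1}-w_{2j-1})\bigr),
\end{equation*}
and the bracketed expression is exactly $b_j-b_{j+1}$ by the definition of $b$, so the first Toda equation follows. For the second, again by \eqref{eq:infinite_volterra},
\begin{equation*}
  \dot b_{j+1}=-\dot w_{2j}-\dot w_{2j+1}
  =-w_{2j}(w_{2j+1}-w_{2j-1})-w_{2j+1}(w_{2j+2}-w_{2j})
  =w_{2j-1}w_{2j}-w_{2j+1}w_{2j+2},
\end{equation*}
which is $a_{j+1}-a_{j+2}$. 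Hence $(a_j,b_j)$ satisfies \eqref{eq:infinite_Toda}.

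For the second sequence $(a^*_j,b^*_j)$, I would observe that the definitions \eqref{eq:moser2} are obtained from \eqref{eq:moser_map} by the global shift $w_n\mapsto w_{n+1}$. Since $(w_{n+1}(t))_{n\in\Z}$ is again a solution of the Volterra lattice by Theorem~\ref{thm:volterra} (indeed, it is the solution associated with the triplet $(\cP_1(x;t),\cQ_1(x;t),\cR_1(x;t))$ obtained by applying the Volterra map once), the same computation as above, with the same shift applied, yields \eqref{eq:infinite_Toda} for $(a^*_j,b^*_j)$. Meromorphicity of $a_j,b_j,a^*_j,b^*_j$ is immediate from that of the $w_n$.

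There is no real obstacle; the only point worth noting is that the two Miura-type substitutions are genuinely distinct (they are not related by a mere reindexing of the Toda variables), reflecting the twofold ambiguity in pairing adjacent sites of the Volterra lattice into Toda bonds. This is the discrete/continuous shadow of the contraction of S-fractions into J-fractions that will be discussed in the sequel, and which accounts for the Somos-8 relation mentioned in Example~\ref{experitau}.
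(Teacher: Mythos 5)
Your proof is correct and takes essentially the same route as the paper, whose entire argument is ``Differentiating \eqref{eq:moser_map} and using \eqref{eq:infinite_volterra} one gets immediately \eqref{eq:infinite_Toda}, and similarly for \eqref{eq:moser2}''; you have simply written out the short index-bookkeeping that the paper leaves implicit. Your observation that \eqref{eq:moser2} follows from \eqref{eq:moser_map} by the shift $w_n\mapsto w_{n+1}$ (under which the Volterra lattice is manifestly invariant) is exactly what the paper's ``similarly'' means.
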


\begin{proof}
Differentiating 
\eqref{eq:moser_map} and using  \eqref{eq:infinite_volterra} one gets immediately
\eqref{eq:infinite_Toda}, and similarly for \eqref{eq:moser2}. 
\end{proof}

\begin{exa} \label{g1wpsoln}
Theorem \ref{thm:volterra} and  Corollary \ref{thm:Toda} imply that we can obtain elliptic (genus 1) 
solutions to the infinite Volterra
and Toda lattices, by starting from a generic solution to
(\ref{eq:rec_g1}). On a fixed orbit of the latter, any such solution can be identifed with an orbit 
of the QRT map (\ref{eq:rec_for_somos}) associated with Somos-5. 
Hence this means that the analytic results of \cite{hones5} can be applied, to write the tau function 
explicitly as  
$$
\tau_n =A_{\pm} B^n \frac{\si (z_0 +nz)}{\si(z)^{n^2}}, 
$$ 
where $A_+,A_-,B$ are non-zero constants (with $A_{\pm}$ chosen according to the parity of $n$), and  
$\si(z)=\si (z; g_2,g_3)$ denotes the Weierstrass sigma function associated with an 
elliptic curve ${y}^2=4{x}^3-g_2{x}-g_3$, 
isomorphic to (\ref{weicub}). 
The parameters $z,g_2,g_3$ all depend on $c_1,c_2,c_3$, while $z_0$ also depends on the initial point 
on the orbit. Then we can write 
the solution of the map explicitly in terms of the Weierstrass zeta function,  as 
\beq\label{wwp} 
w_n = \frac{\si \big(z_0+nz\big) \si \big(z_0+(n+3)z\big) }{\si (z)^4 \si \big(z_0+(n+1)z\big) \si \big(z_0+(n+2)z\big) } 
= \hat{c} \Big( \zeta \big(z_0+(n+2)z\big) -   \zeta \big(z_0+(n+1)z\big) + C \Big), 
\eeq 
where $\hat{c} = {\si (2z)}/{\si (z)^4}$, $ C=   \zeta (z)-  \zeta(2z)$. 
Now extending this by the flow of the vector field $\rX_1$, with parameter $t$, we find that only 
$z_0$ changes, being replaced by $z_0+\hat{c}\,t$ (giving a linear flow on the Jacobian of the elliptic curve).  
Hence we arrive at the genus 1 solution of the Volterra lattice, given for $n\in\Z$ by 
$$
w_n(t)  = \hat{c} \Big( \zeta \big(z_0+\hat{c}\,t+(n+2)z\big) -   \zeta \big(z_0+  \hat{c}\,t+(n+1)z\big) +C \Big) 
$$   
(equivalent to the travelling waves found in \cite{yan}), 
and from \eqref{eq:moser_map} we get a corresponding genus 1 solution of the Toda lattice, that is 
\begin{align}  
a_n(t) & =  \hat{c}^4 \Big( \wp\big(2z\big)  -  \wp\big(z_0 +\hat{c}\,t+(2n-1)z\big)\Big)\;, 
\label{todawp} 
\\ 
b_n(t) & =  \hat{c}\Big(\zeta \big(z_0+\hat{c}\,t+(2n-1)z\big) -\zeta \big(z_0+\hat{c}\,t+(2n+1)z\big) -2C\Big)\;, \nonumber 
\end{align} 
written in terms of  the Weierstrass $\wp$ function, with the constants $\hat{c}$, $C$ as above.  
Note that some more general elliptic solutions of the Volterra lattice, with the form of $w_n$ depending on the parity of $n$, have been 
presented elsewhere in the literature \cite{kitaev, vereshchagin, veselov_volterra}.  
\end{exa} 
\begin{remark} 
Similarly, we can 
produce genus 2 solutions to the infinite Volterra and Toda lattices, by starting from a generic solution to the
map (P.iv), extended to meromorphic functions $w_n(t)$ by the flow of the vector field $\rX_1$.
\end{remark} 

In \cite{moser},  
the transformation \eqref{eq:moser_map} was used 
to connect the finite Volterra and Toda lattices by Moser, 
who attributed it to H\'enon. The same transformation has further been applied to connect 
real-valued solutions of the infinite lattices, subject to suitable  (smoothness/boundedness) 
conditions \cite{ghsz}. 
Moser also employed finite continued fractions in \cite{moser}. 
However, it turns out that the map  \eqref{eq:moser_map} 
has a much earlier origin in the classical theory of continued fractions,  
where it arises from the method of contraction for S-fractions (see J.3 in \cite{stieltjes}, and \cite{shohat}), 
a fact that has perhaps 
been overlooked  
in the integrable systems literature.  In the next subsection, we show how the Volterra maps, 
as presented in this paper, are related to the integrable maps recently constructed by one of us \cite{contfrac}; 
the key is to apply contraction to the S-fraction \eqref{eq:F1}, which produces a J-fraction, and 
thereby yields the 
Miura-type formula \eqref{eq:moser_map}.

\subsection{Contraction of continued fractions}

The following equivalence between a pair of continued fractions, which was introduced in \cite{stieltjes},  is known as \textit{contraction}: 
$$ 
 X -\cfrac{w_1}{1-\cfrac{w_2}{X-\cfrac{w_3}{1-\cfrac{w_4}{X-\cdots}}}}
  =X+b_1-\cfrac{a_2}{X+b_2-\cfrac{a_3}{X+b_3-\cfrac{a_4}{X+b_4-\cdots}}}\;. 
$$ 
The form of the fraction on the left is the original way that an  S-fraction was written by Stieltjes, in terms of a variable 
$X=x^{-1}$, while the fraction on the right is a 
Jacobi continued fraction (J-fraction). To be  more precise, the above equality 
 is an identity of continued fractions, 
obtained by combining successive pairs of adjacent lines in the S-fraction into a single sequence of lines in the J-fraction, 
with the coefficients $a_j,b_j$ on the right being related to $w_j$ on the left by 
\begin{equation}\label{contractform}
  b_1=-w_1\;, \qquad a_{j+1} = w_{2j-1}w_{2j}\;, \qquad b_{j+1}=-w_{2j}-w_{2j+1}\;,
  \qquad \mathrm{for} \quad j\geqslant1\;.
\end{equation} 
(Within the theory of continuants, the formulae for contraction of two or more lines of a general continued fraction are presented in \cite{duv}.) 
To make contact with our previous discussion, 
observe that (\ref{contractform})  reproduces the  transformation \eqref{eq:moser_map} between   
the Volterra and Toda lattices, but for indices $j\geqslant 1$ only. 

In order to see how contraction arises in the context of Volterra maps, we start from a hyperelliptic curve $\Gamma_f$ 
of the form previously considered. 
We take a fixed set of coefficients $c_i$, which are arbitrary except that, as usual, we assume that the polynomial 
\beq\label{fpolyci} 
f(x)=1+\sum_{i=1}^{2g+1}c_ix^i
\eeq 
is square-free with $c_{2g+1}\neq0$, 
so that the hyperelliptic curve $\Gamma_f: \, y^2=f(x)$
is smooth and has genus $g$. 
In order to simplify the presentation below, 
initially  we make the further assumption that $c_1=0$. 
Then setting
\begin{equation}\label{eq:cur_birat}
  X=\frac1x\;,\qquad Y=\frac{y}{x^{g+1}} \; 
\end{equation}
establishes a birational isomorphism between $\Gamma_f$ and an algebraic curve $\cC$ 
which (by completing the square) can be written in the form
\beq\label{ccurve}
{\cal C}: \,\, Y^2=\hat{\rF}(X), \qquad \hat{\rF}(X) = A(X)^2+4R(X), 
\eeq 
where $A(X)$ is a monic polynomial in $X$ of degree $g+1$ with no term of degree $g$ (so 
that the right-hand side of (\ref{ccurve}) has no degree $2g+1$ term), and $R$ is
a polynomial of degree at most $g$ in $X$, not identically zero but otherwise arbitrary; such curves are exactly the ones which were considered in
\cite{contfrac}.

Now let $(\cP,\cQ,\cR)=(\cP_0,\cQ_0,\cR_0)\in M_g$ be a 
generic triplet satisfying $\cP^2+\cQ\cR=f$, in the sense discussed above \eqref{eq:F1}. 
As we have seen in
\eqref{eq:F1}, the associated rational function on $\Gamma_f$, denoted $F_0$, admits the following expansion as an S-fraction:
\begin{equation}\label{eq:F_1_fraction}
  F_0 = \frac{y+\cP_0(x)}{\cQ_0(x)}=1-\cfrac{w_1x}{ 1 -\cfrac {w_2x}{ 1-\cfrac{w_3x}{1- \cdots} } } \;.
\end{equation}
Then, upon multiplying  \eqref{eq:F_1_fraction} by $x^{-1}=X$, rewriting the S-fraction in terms of the new spectral 
parameter $X$, and applying contraction, we find 
\begin{equation}\label{contract}
  x^{-1} F_0 = X -\cfrac{w_1}{1-\cfrac{w_2}{X-\cfrac{w_3}{1-\cfrac{w_4}{X-\cdots}}}}
  =X+b_1-\cfrac{a_2}{X+b_2-\cfrac{a_3}{X+b_3-\cfrac{a_4}{X+b_4-\cdots}}}\;,
\end{equation}
where the J-fraction on the right above is defined to be the contraction of the S-fraction. 
Thus, via the second equality in (\ref{contract}), the coefficients $a_j,b_j$  of the J-fraction 
are specified in terms of the  $w_j$ according to (\ref{contractform}). 

We now briefly recall the construction of integrable maps associated with J-fractions, as presented in \cite{contfrac}. 
The starting point is a rational function $Y_0$ on an even hyperelliptic curve $\cC$ of the form \eqref{ccurve}, 
whose completion $\bar{\cC}$ includes two points $\infty_1,\infty_2$ at infinity. This function 
has $g+1$ simple poles and $g+1$ simple zeros, with one pole being at the point $\infty_1$ (where 
$Y\sim X^{g+1}\sim A(X)$ as $X\to\infty$), and one zero being at $\infty_2$ (where  $Y\sim -X^{g+1}\sim -A(X)$), 
taking the form 
\begin{equation}\label{Y0}
  Y_0 =\frac{Y+P_0(X)}{Q_0(X)}\;,
\end{equation}
for polynomials $P_0$, of degree $g+1$ with no term at ${\cal O}(X^{g})$, and $Q_0$, of degree $g$; and there exists another polynomial 
$Q_{-1}$, of degree $g$, satisfying $Y^2=P_0^2+Q_0Q_{-1}=\hat{\rF}$. The expansion of $Y_0$ around the point $\infty_1$, with $X^{-1}$ as a local parameter, 
can be considered as an element of $\C((X^{-1}))$, and it was shown by van der Poorten (see \cite{vdphyp, vdp3}) 
that this power series admits a J-fraction expansion of the form 
\begin{equation}\label{eq:andy_rec}
  Y_0 = \al_0 + \cfrac{1}{Y_1}
      = \al_0 + \cfrac{1}{\al_1+\cfrac{1}{Y_2}}=\cdots
      = \al_0 + \cfrac{1}{\al_1+\cfrac{1}{\al_2+\cfrac{1}\ddots}} 
\end{equation}
with $\al_n:=\lc Y_{n} \rc$, the polynomial part of $Y_n$. 
Furthermore, for a generic choice of such $P_0$, $Q_0$ in \eqref{Y0}, the polynomial parts $\al_n$ are of degree 1 in $X$ for any $n$, 
and the recursion $Y_n =  \al_n+ \frac{1}{Y_{n+1}}$
leads to a sequence of polynomials $P_n,Q_n$ satisfying the same degree constraints as above, such that  
\begin{equation}\label{YnPQ}
  Y_n=\frac{Y+P_n(X)}{Q_n(X)} = \frac{Q_{n-1}(X)}{Y-P_n(X)}\; \implies Y^2=\hat{\rF}(X) = P_n(X)^2+Q_n(X)Q_{n-1}(X), 
\end{equation}
where the above relations extend to all $n\in\Z$, not just $n\geqslant 0$, by reversing \eqref{eq:andy_rec} to find $Y_{-1}$ from $Y_0$, etc. 

The situation for the J-fraction expansion \eqref{eq:andy_rec} is very similar to that for the expansion \eqref{eq:F_1_fraction}, and 
allows the construction of a birational dynamical system that is defined by a recursion for the polynomials $P_n,Q_n$, 
analogous to the derivation of the Volterra map from the S-fraction in Section \ref{sec:volterra}. 
Here we refer to the dynamical system (in dimension $3g+1$) defined by   \eqref{eq:andy_rec} as the \textit{J-fraction map} in genus $g$, 
denoted ${\cal J}_g$. 
In order to explain the very 
close connection between  ${\cal V}_g$ and ${\cal J}_g$,   
and prove Theorem \ref{contractthm},   we further summarize some  
features of the latter, while referring the reader 
to \cite{contfrac} for a complete description. 

To describe the dynamics defined by $P_n,Q_n$, new variables $u_n$, $d_n$ and $v_n$ are introduced from 
\begin{equation}\label{PQToda}
  P_n(X) = A(X) + 2d_n X^{g-1} +{\cal O}(X^{g-2})\;, \quad Q_n(X) = u_n \big(X^g -
  v_nX^{g-1}+{\cal O}(X^{g-2})\big)\;, 
\end{equation}
so that from the terms of ${\cal O}(X^{2g})$ in the equation for $\hat\rF$ on the  right-hand side of \eqref{YnPQ}, the relation 
\begin{equation}\label{eq:a_n}
  u_n u_{n-1}=-4d_n\neq 0\;
\end{equation}
must hold, while calculating the (degree 1) 
polynomial parts in each line  of \eqref{eq:andy_rec} 
shows that, for any $n$, we have  $\al_n=(X+b_n)/u_n$. 
Thus, upon substituting for $\al_n$ and rescaling each line of the continued fraction using \eqref{eq:a_n}, we may rewrite the 
J-fraction  \eqref{eq:andy_rec} more explicitly as 
$$ 
 {Y_0}=\al_0+\frac{1}{Y_1}
={\frac{2(X+v_0)}{u_0}+
                 \cfrac1{\frac{2(X+v_1)}{u_1}+
                 \cfrac1{\frac{2(X+v_2)}{u_2}-\cdots}}},  
$$
where, by setting $\hat{s}_0=\frac{u_1}{2}$, we have 
\begin{equation}\label{jfrac}
 Y_1=\frac{1}{\hat{s}_0}\left(\,{X+v_1-\cfrac{d_2}{X+v_2-\cfrac{d_3}{X+v_3-\cdots}}}\;\right)\;.
\end{equation}
Then the J-fraction map ${\cal J}_g$ is a dynamical system 
on an affine phase space $\hat{M}_g^{\hat{c}_1=0}$ of dimension $3g+1$, which fibers over the space of curves $\cC$ of the form (\ref{ccurve}), with 
each  (generic) fiber being an affine part of the corresponding Jacobian variety $\Jac(\bar{\cC})$; and  on each such fiber, the map corresponds 
to a translation by the class of the divisor $\infty_2-\infty_1$. 
It is defined recursively by \eqref{YnPQ}, in terms of the coefficients of the 
polynomials $P_n,Q_n$, except that the prefactors $u_n$ in front of each $Q_n$, as  in \eqref{PQToda}, are completely decoupled from the dynamics. 
Indeed, the constant $\hat{s}_0$ in \eqref{jfrac} is arbitrary: it determines the first coefficient in the series expansion of the moment generating function
$1/Y_1=\sum_{j\geqslant 0}\hat{s}_j X^{-j-1}$, whose coefficients allow the solutions of ${\cal J}_g$ 
to be written in terms of tau functions given by Hankel determinants; 
but $\hat{s}_0$ can be removed by a gauge transformation on the tau functions. Moreover, 
once $\hat{s}_0$ is fixed, then $u_1$ and all the other prefactors $u_n$ are determined from $\hat{s}_0$ and $d_n$,  due to \eqref{eq:a_n}; 
and the phase space $\hat{M}_g^{\hat{c}_1=0}$ 
(which is an affine space of Lax matrices) does not include the parameter $\hat{s}_0$. After decoupling from $u_n$,  the 
map ${\cal J}_g$ can be written equivalently 
as a recursion for the remaining coefficients in $P_n,Q_n$, or as  coupled recurrences for the quantities $d_n,v_n$. 
(See (\ref{g1todamap}) and   (\ref{vdg2map}) in the examples below for the cases $g=1$ and $g=2$, respectively.) 

We would now like to identify (\ref{jfrac}) with the J-fraction appearing on the right in (\ref{contract}), but there are two problems: firstly, the relation  
\eqref{contractform} is valid only for $j\geqslant 1$, and gives a different formula for $b_1$ when $j=0$; and secondly, we initially made the assumption that $c_1=0$, which 
does not hold in general. To relax the latter assumption, we must 
shift the spectral parameter $X$, 
make a compensating shift in $b_j$, 
and allow a linear relation between $x^{-1}F_0$ and $Y_1$, yielding a  modification of  \eqref{eq:moser_map}, valid for all $j\in\Z$. 

\begin{propn}\label{todadouble} The odd, genus $g$ spectral curve $\Gamma_f: \, y^2=f(x)$ with $f(x)=1+\sum_{i=1}^{2g+1}c_ix^i$, associated with a
generic orbit of the Volterra map ${\cal V}_g$, is isomorphic to $\cC$, the even spectral curve (\ref{ccurve}) 
for a corresponding orbit of the J-fraction map ${\cal J}_g$, 
via the birational equivalence $X=x^{-1}+c_1/(2g+2)$, $Y=y/x^{g+1}$.
Under this isomorphism of 
curves, the 
function $F_0$ on $\Gamma_f$ 
and the  function $Y_1$ on $\cC$ 
are  related 
by 
\beq\label{funcFtoY} 
x^{-1}F_0-w_0=\hat{s}_0Y_1, 
\eeq  
and the quantities $d_n,v_n$  satisfying the 
map ${\cal J}_g$ are given in terms of the solution of ${\cal V}_g$ 
by 
\beq\label{contractshift} 
d_{j+1} = w_{2j-1}w_{2j}\;, \qquad v_{j+1}=-w_{2j}-w_{2j+1}-\frac{c_1}{2(g+1)}\;,
  \qquad \mathrm{for} \quad j\in\Z\;.
\eeq
Hence each iteration on the orbit of ${\cal J}_g$ 
corresponds to two iterations on the corresponding orbit of ${\cal V}_g$. 
\end{propn}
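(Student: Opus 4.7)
The plan is to prove the three claims of the proposition in sequence: the birational equivalence of curves, the continued-fraction identity \eqref{funcFtoY}, and the identification \eqref{contractshift}. The main ingredient is a classical identity of Stieltjes for the contraction of continued fractions, combined with a Tschirnhaus-type shift of the spectral parameter.

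For the isomorphism $\Gamma_f \cong \cC$, I would substitute $x^{-1}=X-\beta$ (with $\beta:=c_1/(2g+2)$) and $Y=y/x^{g+1}$ into $y^2=f(x)$ to obtain
\[
Y^2 = (X-\beta)^{2g+2} + c_1(X-\beta)^{2g+1} + \cdots + c_{2g+1}(X-\beta),
\]
a monic polynomial of degree $2g+2$ in $X$. The shift by $\beta$ is chosen precisely to kill the $X^{2g+1}$ term: the contribution to that coefficient is $-(2g+2)\beta + c_1 = 0$. Hence the polynomial can be written uniquely as $A(X)^2+4R(X)$, with $A$ monic of degree $g+1$ having no $X^g$ term (obtained as the polynomial part of the Puiseux expansion of $Y$ at infinity) and $R$ of degree at most $g$, matching the form \eqref{ccurve}.

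Next I would establish the Stieltjes contraction identity
\[
  X - \cfrac{w_1}{1 - \cfrac{w_2}{X - \cfrac{w_3}{1 - \cdots}}}
  = X - w_1 - \cfrac{w_1 w_2}{X - w_2 - w_3 - \cfrac{w_3 w_4}{X - w_4 - w_5 - \cdots}},
\]
which follows by pairwise simplification of adjacent lines using $w_k/(1-w_{k+1}/T) = w_k + w_kw_{k+1}/(T-w_{k+1})$ (cf.~\cite{stieltjes,duv}). Applied to \eqref{eq:F_1_fraction} after multiplying by $x^{-1}$, this gives a J-fraction representation of $x^{-1}F_0$ in the variable $X_{\mathrm{old}}=x^{-1}$. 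Substituting $X_{\mathrm{old}}=X-\beta$ shifts every line by $-\beta$, and subtracting $w_0$ absorbs into the first line to yield
\[
x^{-1}F_0 - w_0 = X + v_1 - \cfrac{d_2}{X + v_2 - \cfrac{d_3}{X + v_3 - \cdots}},
\]
with $d_{j+1},v_{j+1}$ as in \eqref{contractshift} for $j\geq 0$. Since the gauge constant $\hat s_0$ in \eqref{jfrac} is fixed by prescribing the leading behaviour of $Y_1$ at infinity, this J-fraction coincides with $\hat s_0 Y_1$, giving both \eqref{funcFtoY} and \eqref{contractshift}.

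To extend \eqref{contractshift} to $j<0$, I would apply the identical argument to the S-fraction expansion \eqref{eq:F1_neg} around the conjugate point $(0,-1)\in\Gamma_f$, which generates $w_n$ for $n\leq 0$ under $\cV_g^{-1}$ and matches the iterates of ${\cal J}_g^{-1}$ by the same contraction; birationality of both maps then pins down the identification on all of $\Z$. The final claim, that one iteration of ${\cal J}_g$ equals two of $\cV_g$, is a direct combinatorial consequence of \eqref{contractshift}: each pair $(d_{j+1},v_{j+1})$ involves only $(w_{2j-1},w_{2j},w_{2j+1})$, so $j\mapsto j+1$ shifts the $w$-indices by $2$. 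The main technical subtlety is the coordination of the Tschirnhaus shift $\beta$, the subtraction of $w_0$, and the $\hat s_0$-gauge, which must align exactly to identify the two J-fractions line-by-line; once this is handled, the remainder is a routine comparison of coefficients.
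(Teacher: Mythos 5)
Your route is essentially the paper's: the shift $\beta=c_1/(2g+2)$ chosen to kill the $X^{2g+1}$ term, Stieltjes' pairwise contraction of the S-fraction, absorption of the shift and of the subtracted $w_0$ into the lines of the resulting J-fraction, and the doubling of the translation read off from the index pattern in \eqref{contractshift}. The one step you explicitly defer — "the coordination of the Tschirnhaus shift, the subtraction of $w_0$, and the $\hat s_0$-gauge" — is, however, precisely where the paper's proof does its real work, and it is not merely a routine comparison of coefficients. Matching the two continued fractions line-by-line does not by itself certify that $x^{-1}F_0-w_0$ is admissible input for the construction of \cite{contfrac}: for \eqref{funcFtoY} to hold with $Y_1$ a genuine iterate of ${\cal J}_g$, one must check that this function, rewritten in $(X,Y)$, equals $\hat s_0\,(Y+P_1(X))/Q_1(X)$ with $P_1$ monic of degree $g+1$ having \emph{no} $X^g$ term and $Q_1$ of degree $g$, as required by \eqref{Y0}--\eqref{PQToda}. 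The paper verifies this directly: setting $\tilde X=X-\beta$ one computes $P_1(X)=\tilde X^{g+1}\cP_0(\tilde X^{-1})-w_0\tilde X^g\cQ_0(\tilde X^{-1})=\tilde X^{g+1}+(p_{0,1}-2w_0)\tilde X^g+\cdots$, and since $p_{0,1}-2w_0=c_1/2$ by \eqref{eq:wn_to_p} while $\tilde X^{g+1}=X^{g+1}-\tfrac{c_1}{2}X^g+\cdots$, the $X^g$ coefficient cancels exactly; likewise $Q_1(X)=\hat s_0\,\tilde X^g\cQ_0(\tilde X^{-1})=u_1\bigl(X^g+\mathcal O(X^{g-1})\bigr)$. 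This short computation is the crux and should be carried out rather than asserted. Two smaller remarks: your extension to $j<0$ via the expansion \eqref{eq:F1_neg} at $(0,-1)$ (which maps to $\infty_2$) is a legitimate alternative to the paper's device of shifting the fraction down two lines and inverting it; and note that $d_1=w_{-1}w_0$ is not visible in the forward J-fraction for $Y_1$ at all, so the case $j=0$ of the first formula in \eqref{contractshift} genuinely needs that backward extension (or the relation $u_nu_{n-1}=-4d_n$) and does not follow from the forward contraction alone.
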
 
\begin{proof}
The shift in $X$ in the birational transformation, as in  the formula $X=x^{-1}+c_1/(2g+2)$,  
ensures that the equation $Y^2=\hat{\rF}(X)=A(X)^2+4R(X)$ 
for $\cC$ has no term at ${\cal O}(X^{2g+1})$ in $\hat{\rF}(X)$, so that $A(X)=X^{g+1}+{\cal O}(X^{g-1})$, as required 
for a spectral curve of the J-fraction map. Also, from the explicit form of the function $F_0$ in \eqref{eq:F_1_fraction}, 
we may rewrite the left-hand side of (\ref{funcFtoY})  in terms of $Y$ and $\tX=X-c_1/(2g+2)$, as 
$$ 
\frac{\tX\big(\tX^{-(g+1)}Y+\cP_0(\tX^{-1})\big)-w_0\cQ_0(\tX^{-1})}{\cQ_0(\tX^{-1})} 
=\frac{Y+P_1(X)}{Q_1(X)/\hat{s}_0} 
$$ 
where we calculate 
$P_1(X)=\tX^{g+1}\cP_0(\tX^{-1})-w_0\tX^g\cQ_0(\tX^{-1})=\tX^{g+1}+(p_{0,1}-2w_0)\tX^g+{\cal O}(\tX^{g-1})$, 
and then in view of \eqref{eq:wn_to_p} we see that $P_1(X)=X^{g+1}+ {\cal O}(X^{g-1})$, while 
$Q_1(X)=\hat{s}_0 \tX^g\cQ_0(\tX^{-1})=\tfrac{u_1}{2}\big(2\tX^g+{\cal O}(\tX^{g-1})\big)=u_1\big(X^g+{\cal O}(X^{g-1})\big)$, 
so both of $P_1$ and $Q_1$ are polynomials in $X$ of the required form for the J-fraction map.  
Now from the S-fraction in (\ref{funcFtoY}), we find that combining contraction with the 
shift of spectral parameter 
modifies 
\eqref{contract}, so that, 
in terms of fractions in $X$,  $x^{-1} F_0 -w_0$ is equal to 
$$ 
 X-\tfrac{c_1}{2g+2}-w_0 -\cfrac{w_1}{1-\cfrac{w_2}{X-\tfrac{c_1}{2g+2}-\cfrac{w_3}{1-\cfrac{w_4}{X-\tfrac{c_1}{2g+2}-\cdots}}}}
  =X+v_1-\cfrac{d_2}{X+	v_2-\cfrac{d_3}{X+v_3-\cfrac{d_4}{X+v_4-\cdots}}}\;,
$$ 
where we have inserted the formula for $Y_1$ from (\ref{jfrac}), and cancelled the arbitrary constant $\hat{s}_0=\frac{u_1}{2}$. 
Comparing the first line of the above fractions on each side, we see that $v_1=-w_0-w_1-c_1/(2g+2)$, which is the correct form of the relation 
for $v_1$ in (\ref{contractshift}) when $j=0$, and contraction of the subsequent lines on the left give these expressions for $d_{j+1},v_{j+1}$ for all $j\geqslant 1$. 
One can also shift the fraction on the left down by two lines, to get a relation between $F_2$ and $Y_2$, and the fraction can be inverted to find 
a relation between $F_{-2}$ and $Y_0$; so, by continuing down or up in this way, we find that $x^{-1}F_{2j} -w_{2j}=\tfrac{u_{j+1}}{2} Y_{j+1}$ holds for 
all $j\in\Z$, extending   (\ref{contractshift}) to negative $j$ as well. In all expressions, a shift of indices $j\to j+1$ gives a single iteration 
of ${\cal J}_g$, but all indices of the Volterra variables increase by 2, giving two iterations of ${\cal V}_g$.   
\end{proof} 

\begin{cor}
Under the action of the Hamiltonian vector field $\rX_1$, each generic solution of ${\cal V}_g$ 
produces a genus $g$ solution of the Toda lattice equation \eqref{eq:TodaL} which also satisfies the 
map ${\cal J}_g$, 
via the transformation (\ref{contractshift}).
\end{cor}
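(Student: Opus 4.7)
\begin{prf}[Proof proposal]
The plan is to combine Theorem~\ref{thm:volterra}, Corollary~\ref{thm:Toda}, and Proposition~\ref{todadouble} in a direct way. First, Theorem~\ref{thm:volterra} tells us that if $(w_n(t))_{n\in\Z}$ is produced from a generic orbit of $\cV_g$ extended by the flow of $\rX_1$, then each $w_n(t)$ is a meromorphic function of $t$ and the sequence satisfies the infinite Volterra lattice \eqref{eq:infinite_volterra}.

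Next, I would verify that the $d_j(t),v_j(t)$ defined by \eqref{contractshift} satisfy \eqref{eq:TodaL}. This is essentially an application of Corollary~\ref{thm:Toda}: comparing \eqref{eq:moser_map} with \eqref{contractshift} shows that $d_{j+1}=a_{j+1}$ and $v_{j+1}=b_{j+1}-\tfrac{c_1}{2(g+1)}$, so $v_{j+1}$ differs from $b_{j+1}$ only by an additive constant. Since the Toda equations \eqref{eq:TodaL} involve only differences $v_{j-1}-v_j$ (and not $v_j$ itself) in the equation for $\dot d_j$, and since $\dot v_j = \dot b_j$, this constant shift is harmless. A one-line direct calculation from the Volterra equation confirms it:
\begin{align*}
  \dot d_{j+1} &= \dot w_{2j-1}w_{2j}+w_{2j-1}\dot w_{2j} = w_{2j-1}w_{2j}\big((w_{2j}-w_{2j-2})+(w_{2j+1}-w_{2j-1})\big)\\
  &= d_{j+1}(v_j - v_{j+1})\;,
\end{align*}
and similarly $\dot v_{j+1} = d_{j+1}-d_{j+2}$.

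For the claim that the resulting solution $(d_j(t),v_j(t))$ also satisfies the map~$\cJ_g$, I would invoke Proposition~\ref{todadouble}: for each fixed $t$, the formulas \eqref{contractshift} describe precisely how a point on an orbit of $\cV_g$ (at time~$t$) corresponds to a point on an orbit of $\cJ_g$, and applying $\cJ_g$ once corresponds to applying $\cV_g$ twice. Because the continuous flow of $\rX_1$ commutes with the Volterra map $\cV_g$ (as both translate by a fixed vector on the Jacobian of $\bar\Gamma_f\cong\bar\cC$), iterating $\cJ_g$ on the initial data $(d_j(0),v_j(0))$ commutes with the $t$-evolution, so the $t$-family of points $(d_j(t),v_j(t))$ is indeed an orbit of $\cJ_g$ at every instant.

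The main obstacle, and hence where the bulk of the argument lives, is ensuring that the $\rX_1$ flow on the phase space of $\cV_g$ descends consistently to a flow on the phase space $\hat M_g^{\hat c_1=0}$ of $\cJ_g$ via the map \eqref{contractshift} --- equivalently, that the Hamiltonian flow commutes with the birational map of Proposition~\ref{todadouble}. This follows because both continuous and discrete systems are a.c.i., with fibers that are identified (Jacobians of isomorphic curves) and with motions that are linear on those Jacobians; so the $\rX_1$-flow on $\cV_g$ pushes forward to a well-defined linear flow on $\Jac(\bar\cC)$, which is exactly a Hamiltonian flow of~$\cJ_g$ (namely the one coming from $\tfrac12 \rX_{H_{g+2}}$ transported via \eqref{funcFtoY}). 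This completes the proof.
\end{prf}
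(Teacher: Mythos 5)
Your proposal is correct and follows essentially the same route as the paper: invoke Theorem~\ref{thm:volterra} for the Volterra lattice property, observe that Corollary~\ref{thm:Toda} survives the constant shift $v_n=b_n-c_1/(2g+2)$ because the Toda equations only involve differences of the $v$'s, and appeal to Proposition~\ref{todadouble} for the $\cJ_g$ part. Your closing paragraph about the flow descending to $\hat M_g^{\hat c_1=0}$ is harmless but not needed, since the commutation of the $\rX_1$ flow with $\cV_g$ was already established before Theorem~\ref{thm:volterra}, so Proposition~\ref{todadouble} applies pointwise in $t$.
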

\begin{proof}
This follows immediately  from the preceding result, by applying Theorem \ref{thm:volterra}, and noting that the result of 
Corollary \ref{thm:Toda} still stands if we set $a_n=d_n$, $b_n=v_n+c_1/(2g+2)$ for all $n$.  
\end{proof} 
The main results of this subsection are collected in the following statement. 
\begin{thm}\label{contractthm} 
By contraction of the S-fraction  \eqref{eq:F_1_fraction} 
for the associated rational function $F_0$,
a generic orbit 
of the Volterra map  ${\cal V}_g$ corresponding to a fixed odd spectral curve 
$\Gamma_f: \, y^2=f(x)$ of genus $g$,  
for square-free $f(x)$ as in \eqref{fpolyci}, 
is transformed to an orbit of the integrable map ${\cal J}_g$ constructed in \cite{contfrac} from the J-fraction \eqref{jfracvdp}, 
with the even spectral curve   
$\cC$ given by  (\ref{ccurve}) with $\hat{\rF}(X)=X^{2g+2}+\sum_{j=2}^{2g+2}\hat{c}_{j}X^{2g+2-j}$. 
For  coefficients $\hat{c}_j$ given suitably in terms of $c_i$, there is a birational   
equivalence between 
$\Gamma_f$ and $\cC$, given by 
 \beq\label{birat} X=x^{-1}+\frac{c_1}{2(g+1)}, \qquad Y=\frac{y}{x^{g+1}}. \eeq
Moreover, the translation 
on $\Jac (\bar{\cC})$ associated with a single iteration of the 
J-fraction map 
corresponds to twice the  translation on $\Jac(\bar{\Gamma}_f)$ 
associated with each iteration of ${\cal V}_g$. 
In fact, 
each generic orbit of ${\cal V}_g$ is related to two different orbits of 
${\cal J}_g$ in this way. 
\end{thm}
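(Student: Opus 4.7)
\begin{prcontr}
The plan is to leverage Proposition \ref{todadouble}, which already establishes the bulk of the statement: the birational equivalence (\ref{birat}) between $\Gamma_f$ and $\cC$, the transformation (\ref{contractshift}) from Volterra variables $w_n$ to J-fraction variables $d_n,v_n$, and the fact that two iterations of $\cV_g$ correspond to one iteration of ${\cal J}_g$ in these variables. What remains is to verify three items: the normal form of $\hat{\rF}(X)$; the correspondence of translations on the Jacobians of $\bar{\Gamma}_f$ and $\bar{\cC}$; and the existence of two distinct orbits of ${\cal J}_g$ associated with each orbit of $\cV_g$.

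For the first, I would substitute (\ref{birat}) into $y^2=f(x)$ to obtain $Y^2=f(x)/x^{2g+2}$, and re-express the right-hand side as a polynomial in $X$. Writing $\tilde X:=1/x$, it becomes the monic polynomial $\tilde X^{2g+2}+c_1\tilde X^{2g+1}+\cdots+c_{2g+1}\tilde X$ of degree $2g+2$; the shift $X=\tilde X+c_1/(2(g+1))$ is chosen precisely so as to eliminate the $X^{2g+1}$ term, yielding $\hat{\rF}(X)=X^{2g+2}+\sum_{j=2}^{2g+2}\hat c_j X^{2g+2-j}$ with coefficients $\hat c_j$ polynomial in $c_1,\ldots,c_{2g+1}$.

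For the second, Theorem \ref{prp:volterra_aci} asserts that each iteration of $\cV_g$ is a translation by $[(0,-1)-\infty]$ on $\Jac(\bar{\Gamma}_f)$, while the corresponding result from \cite{contfrac} asserts that each iteration of ${\cal J}_g$ is a translation by $[\infty_2-\infty_1]$ on $\Jac(\bar{\cC})$. Under (\ref{birat}) the points $(0,\pm 1)\in\bar{\Gamma}_f$ map to the two points at infinity of $\bar{\cC}$: as $x\to 0$ one has $Y/X^{g+1}\to y$, which together with the convention $Y\sim\pm X^{g+1}$ at $\infty_{1,2}$ forces the identifications $(0,1)\leftrightarrow \infty_1$ and $(0,-1)\leftrightarrow \infty_2$, while $\infty\in\bar{\Gamma}_f$ maps to a finite point $P_\infty\in\bar{\cC}$ lying above $X=c_1/(2(g+1))$. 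It then suffices to establish that $2[(0,-1)-\infty]=[\infty_2-\infty_1]$ on $\Jac(\bar{\cC})$ under this dictionary, equivalently that $\infty_1+\infty_2-2P_\infty$ is a principal divisor on $\bar{\cC}$; this follows at once from the divisor $(x)=(0,1)+(0,-1)-2\infty$ of the function $x$ on $\bar{\Gamma}_f$, which under the birational equivalence becomes the divisor of the rational function $1/(X-c_1/(2(g+1)))$ on $\bar{\cC}$.

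For the third, from an orbit $(w_n)_{n\in\Z}$ of $\cV_g$ the contraction procedure admits two inequivalent pairings of the lines of the S-fraction: the pairing $(2j-1,2j)$ produces the orbit of ${\cal J}_g$ with $d_{j+1}=w_{2j-1}w_{2j}$ and $v_{j+1}=-w_{2j}-w_{2j+1}-c_1/(2(g+1))$ exactly as in (\ref{contractshift}), while the shifted pairing $(2j,2j+1)$ (equivalently, applying $\cV_g$ once first and then contracting) yields the alternative orbit with $d^*_{j+1}=w_{2j}w_{2j+1}$ and $v^*_{j+1}=-w_{2j+1}-w_{2j+2}-c_1/(2(g+1))$, in direct parallel to the two Toda sequences of Corollary \ref{thm:Toda}. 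The main obstacle is the factor-of-two identification of translations in the second step: while the resulting linear-equivalence statement is clean, correctly matching the sign conventions that identify $(0,\pm1)$ with $\infty_{1,2}$ and producing the precise principal divisor that yields the factor-of-two correspondence require careful local analysis at the three exceptional points $(0,\pm1)$ and $\infty$ of $\bar{\Gamma}_f$.
\end{prcontr}
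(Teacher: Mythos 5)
Your proposal is correct and follows essentially the same route as the paper's proof: it delegates the main content to Proposition \ref{todadouble}, identifies $(0,\pm 1)\in\bar\Gamma_f$ with the points $\infty_{1,2}$ at infinity of $\bar\cC$ to deduce $[\infty_2-\infty_1]=2[(0,-1)-\infty]$ from the linear equivalence $(0,1)+(0,-1)\sim 2\infty$, and obtains the second orbit by starting the contraction one line lower in the S-fraction (the analogue of \eqref{eq:moser2}). The only cosmetic difference is that you justify the key linear equivalence by exhibiting the principal divisor of $x$ (equivalently of $1/(X-c_1/(2(g+1)))$ on $\bar\cC$), whereas the paper simply invokes it as already established in the proof of Theorem \ref{prp:volterra_aci}.
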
 
\begin{proof} 
The main statements in the theorem were already proved in Proposition \ref{todadouble}.  
For the relation between shifts on complex tori, note that in $M_g$, we have that $\mu^{-1}(f)$, the fiber over a generic curve $\Gamma_f$, 
is an affine part of $\Jac(\bar{\Gamma}_f)$, while in $\hat{M}_g^{\hat{c}_1=0}$ the fiber over $\cC$ is an affine part of $\Jac(\bar{\cC})$; 
but then the isomorphism (\ref{birat}) between these two spectral curves means that  $\Jac(\bar{\Gamma}_f)\cong \Jac(\bar{\cC})$. 
It was shown in \cite{contfrac} that each iteration of the J-fraction map ${\cal J}_g$ 
corresponds to a translation by the class of the divisor $\infty_2-\infty_1$ on $\Jac {\bar{\cC}}$, where
$\infty_{1,2}$ are the two points at infinity on ${\bar{\cC}}$, and these are equivalent to the points $(0,\pm 1)$ on
(\ref{weier}). So in terms of 
$\Jac(\bar{\Gamma}_f)$, this is a translation by the class 
of the divisor $(0,-1)-(0,1)=2(0,-1)-(0,-1)-(0,1)\sim 2 \big(
(0,-1)-\infty\big)$, that is by $2[(0,-1)-\infty\big]$,  twice the shift corresponding to the Volterra map $\cV_g$ 
(as found in the proof of Theorem \ref{prp:volterra_aci}). 
Finally, notice that in Corollary \ref {thm:Toda} there is the second, alternative formula \eqref{eq:moser2}, with the indices 
on all $w_i$ 
shifted one step forwards. This corresponds to the freedom to start the 
contraction procedure one line lower in the S-fraction  \eqref{contract}, beginning 
with $F_1$ rather than $F_0$; so the indices on all Volterra variables must be shifted 
by the same amount, and 
the relation \eqref{funcFtoY} with the corresponding 
rational function on $\cC$ is modified to 
$x^{-1}F_1 - w_1 =\hat{s}_0 Y_1$.   
Then, in terms of the J-fraction coefficients, this produces 
\beq\label{contractshift2} 
d_{j+1} = w_{2j}w_{2j+1}\;, \qquad v_{j+1}=-w_{2j+1}-w_{2j+2}-\frac{c_1}{2(g+1)}\;,
  \qquad \mathrm{for} \quad j\in\Z\;.
\eeq
Thus each orbit of $\cV_g$ 
is transformed to two different orbits of ${\cal J}_g$, 
since the resulting orbit of the latter map remains the same when the index on the $w_i$ 
in (\ref{contractshift2}) is shifted by a multiple of 2. 
\end{proof} 
%
\begin{remark} 
In \cite{contfrac},  the phase space $\hat{M}_g^{\hat{c}_1=0}$ for the J-fraction map ${\cal J}_g$ is obtained from $\hat{M}_g$, an affine 
space of dimension $3g+2$ with a specific Poisson structure $\PB$, by restricting to a subvariety defined by setting the value of one of the Casimirs to zero. 
(This is analogous to the situation described in Appendix A, Section \ref{par:even_mumford}.) Although Theorem \ref{contractthm} has been stated in terms of a 
correspondence between specific orbits of ${\cal V}_g$ and  ${\cal J}_g$, the considerations in the proof make it clear that, since the coresponding generic 
fibers 
are birationally equivalent, the restriction of  ${\cal J}_g$ to each fiber is (conjugate to) the square $({\cal V}_g)^2={\cal V}_g\circ{\cal V}_g$. 
This gives a strong hint that $\hat{M}_g^{\hat{c}_1=0}$ and $M_g$ should also be birationally equivalent, making this into a global statement about the two  
maps. While this global statement is by no means obvious, the first example below shows that it is correct when $g=1$; but it is not true 
as a Poisson isomorphism, at least for the specific Poisson structure introduced in \cite{contfrac}. 
The problem of 
finding alternative Poisson structures on $\hat{M}_g$, and  making all these statements precise, is best left for future work. 
\end{remark} 
\begin{exa} 
When $g=1$, with the cubic $\Gamma_f: \, y^2=1+c_1x+c_2x^2+c_3x^3$, the transformation (\ref{birat}) is  
$$ 
X=x^{-1}+\frac{c_1}{4}, \qquad Y=\frac{y}{x^2} 
\quad \implies \quad  \cC: \quad Y^2 =(X^2+\hat{f})^2+4\hat{u}X+4\hat{h}, 
$$ 
where the  quartic curve $\cC$ is 
written   
in terms of the parameters  
\beq\label{fuh_coeffs}
\hat{f}=\frac{1}{2}c_2-\frac{3}{16}c_1^2, \quad 
\hat{u}=\frac{1}{4}c_3-\frac{1}{8}c_1c_2+\frac{1}{32}c_1^3, \quad  
\hat{h}=\frac{1}{16} (c_1^2c_2-c_1c_3-c_2^2) -\frac{3}{256}c_1^4. 
\eeq 
Under the transformation (\ref{contractshift}), 
solutions of the $g=1$ Volterra map ${\cal V}_1$, given by \eqref{eq:rec_g1}, 
or equivalently by \eqref{eq:g=1_red_H3} (with fixed $c_3$), or by 
\eqref{eq:g=1_red_H2} (with fixed $c_2$), are mapped to 
solutions of the corresponding J-fraction map, 
which (according to the results in Example 3.2 in \cite{contfrac}) can be written as a 2D map 
defined by 
\begin{align} 
\label{g1todamap} 
d_{n+1} &=-d_n-v_n^2-\hat{f}\;, \\
v_{n+1} & = -v_n+\frac{\hat{u}}{d_{n+1}}\;, \nonumber
\end{align} 
on a reduced phase space with  fixed parameters  $\hat{f},\hat{u}$, 
which are determined from (\ref{fuh_coeffs}) in terms of 
the values of the constants $c_1,c_2,c_3$ for the solution of the 
map ${\cal V}_1$. The map (\ref{g1todamap}) has the conserved quantity 
$$ 
\hat{H}=d_n(v_n^2+d_n+\hat{f})-\hat{u}v_n,
$$ 
which, on the orbit corresponding to 
a fixed solution of ${\cal V}_1$,  takes the value $\hat{H}=\hat{h}$ given in (\ref{fuh_coeffs}). 
The Poisson bracket presented for the J-fraction maps in \cite{contfrac} 
becomes the canonical bracket 
\beq\label{canonbr}
\{v_n,d_n\}=1
\eeq
on the 2D phase space of the map (\ref{g1todamap}), with coordinates 
$(d_n,v_n)$, and it can be verified directly that the vector field 
$\{ \cdot , \hat{H}\}$ extends to the Toda lattice flow 
(\ref{eq:TodaL}) for all $n\in\Z$ under the action of  this J-fraction map. 
Upon comparing with Example \ref{g1wpsoln}, it is clear that 
the analytic expressions for $a_n,b_n$ in \eqref{todawp} provide explicit formulae 
for the solutions of both the map \eqref{g1todamap} and the Toda lattice, by setting 
$d_n(t)=a_n(t)$ and $v_n(t)=b_n(t)  -\tfrac{1}{4}c_1$. Upon comparison 
of \eqref{wwp} with \eqref{todawp}, it can be seen that each iteration of  ${\cal V}_1$ 
gives a shift by $z$ on the Jacobian of the elliptic curve, while each iteration 
of \eqref{g1todamap} produces a shift by $2z$.

However, the bracket \eqref{canonbr} 
cannot be a reduction of any of the Poisson structures  in 
Example \ref{g1ex}, because the parameters $\hat{f},\hat{u}$ 
do not correspond to 
Casimirs of any of these brackets on the phase space $M_1$. (This is immediately 
obvious for the pencil of brackets generated by $\PB^x$ and $\PB^{x^2}$, while 
a short calculation shows this to be the case for \eqref{eq:quadratic_Poisson} as 
well.) Nevertheless, 
it is still  possible to interpret  (\ref{contractshift})  
as a Poisson map in terms of members of the pencil $\PB^\phi$, 
with $c_1$ being the fixed value of a Casimir. 
For instance, using the Nambu-Poisson structure, 
we can 
construct a birational transformation between the symplectic leaves 
of $\PB^{x^2}$, that is 
$M_1^c\cap(H_2=c_2)$, and a 2D phase space for the 
J-fraction map with coordinates $(d_n,v_n)$,  with $c_1,c_2$ viewed as fixed parameters, 
on which the bracket reduces to 
$$ 
\{v_n,d_n\}^{x^2}=v_n-\frac{1}{4}c_1.     
$$ 
%

Now we take a particular numerical example, 
with the 
elliptic curve $y^2=1-4x+4x^3$ ($c_1=-4$, $c_2=0$, $c_3=4$), and $w_1=1$, $w_2=2$. 
The sequence $(w_n)$ extends backwards to $n\leqslant 0$ to give a singular orbit of  \eqref{eq:rec_g1}, 
with the same singularity pattern appearing as was found for (P.iv) in Section \ref{sec:singularity}: 
$$ 
\ldots,\tfrac{7}{15}, \tfrac{10}{3}, -\tfrac{3}{2},-\tfrac{1}{2},2,1,0,\infty,\infty,0,1,2,-\tfrac{1}{2},-\tfrac{3}{2},\tfrac{10}{3},\tfrac{7}{15},\ldots .
$$
This orbit is symmetrical, in the sense that $w_{-n}=w_{n-3}$ for all $n\in\Z$. 
By applying the formulae in Example \ref{genus1hankel}, the solution is expressed in 
terms of Hankel determinants $\Delta_n$ constructed from the moment sequence  
determined from $s_1=1$, $s_2=2$, 
$s_j = \sum_{i=1}^{j-1}s_is_{j-i} -\sum_{i=1}^{j-2}s_is_{j-i-1}$ ($j\geqslant 3$), that is 
$(s_j)_{j\geqslant 1}:\,1,2,3,6,14,37,105,312,956,2996,9554, \ldots$, which gives 
$$ 
(\Delta_n)_{n\geqslant -2}: \quad 1,1,1,1,2,-1,3,-5,7,-4,23,29,59,129,314,\ldots. 
$$ 
With $\Delta_{-3}=0$, this extends backwards to a sequence of 
tau functions $\tau_n =\Delta_{n-3}=(-1)^{n+1}\tau_{-n}$ for $n\in\Z$;
hence from \eqref{eq:c_to_c'} and (\ref{s5})  we find $c_2'=-1$, $c_3'=1$, so 
for all $n$  
they satisfy the Somos-5 relation 
$$ 
\tau_{n+5}\tau_n = -\tau_{n+4}\tau_{n+1}+\tau_{n+3}\tau_{n+2}.
$$
Applying (\ref{birat}) in this case  produces the quartic curve 
$Y^2=(X^2-3)^2-4(X+2)$, the same one as in Example 4.2 from 
\cite{contfrac}, and  the 
contraction formulae (\ref{contractshift}) and (\ref{contractshift2}) with $j=1$ 
give initial points on two different orbits of the map \eqref{g1todamap} with 
parameters $\hat{f}=-3$, $\hat{u}=-1$, namely 
$(d_2,v_2)=(2,-\tfrac{1}{2})$ and $(-1,3)$, respectively, which 
both correspond to the value $\hat{h}=-2$ for the conserved quantity of this map. 
For these two different orbits, we find that 
$$ 
d_n = \frac{\hat{\tau}_{n-1}\hat{\tau}_{n+1}}{\hat{\tau}_n^2}, 
$$  
where $\hat{\tau}_n = \Delta_{2n-4}$ (even index Hankel determinants) for the first orbit, 
and  $\hat{\tau}_n = \Delta_{2n-3}$  (odd index Hankel determinants) for the second one. It follows from 
Proposition 5.1 in \cite{contfrac} that either of these even/odd index subsequences ($1,1,2,3,7,23,59,314,\ldots$ 
and $1,1,-1,-5,-4,29,129,\ldots$, respectively) must satisfy the same Somos-4 relation, in this case 
the original one introduced by Somos \cite{somos}:
$$ 
\hat{\tau}_{n+4}\hat{\tau}_{n}= \hat{\tau}_{n+3}\hat{\tau}_{n+1}+\hat{\tau}_{n+2}^2. 
$$ 
This connection between Somos-5 and Somos-4 has already been exploited 
elsewhere (in \cite{hones5}, and in \cite{chang}, for example); however, the specific Hankel determinants for 
Somos-4 obtained here via contraction differ from the ones found in \cite{chang}, and also 
from the ones derived directly from the J-fraction in \cite{contfrac}. 
\end{exa} 
\begin{exa}
An orbit of the map (P.iv), with the spectral curve 
\eqref{wg2},  can be 
transformed to an orbit of the $g=2$ J-fraction map, as discussed in Example 3.3 of \cite{contfrac}, 
which has an associated sextic curve, related to it via $X=x^{-1}+\nu/3$, $Y=y/x^3$, 
that can be taken in the form 
$$ 
\cC: \quad Y^2=(X^3+\hat{f}X+\hat{g})^2 +4(\hat{u}X^2+\hat{h}_1X+\hat{h}_2).  
$$ 
Defined on a 4D phase space with coordinates $(d_{n-1},d_n,v_{n-1},v_n)$ 
and depending on the 3 parameters $\hat{f},\hat{g},\hat{u}$, 
the map in \cite{contfrac} is given by 
\begin{align} 
d_{n+1}+d_n+d_{n-1} +\hat{u}/d_n +v_n^2+v_nv_{n-1}+v_{n-1}^2+\hat{f} &=0,  
\label{vdg2map} \\ 
(2v_n+v_{n-1})d_n + (2v_n+v_{n+1})d_{n+1} +v_n^3+\hat{f}v_n+\hat{g} & =0.  
\nonumber 
\end{align} 
More precisely, there are two different orbits of (\ref{vdg2map}) obtained from each 
orbit of (P.iv), depending on whether the formula 
(\ref{contractshift}) or (\ref{contractshift2}) is applied. 
Then, upon writing each term $w_j$ 
satisfying (P.iv) as a ratio of the Hankel determinants $\Delta_j$ defined in Example \ref{genus2hankel}, we see that 
the quantities $d_n$ that appear in the solution of the J-fraction map, as above, are given as a ratio of tau functions, in two different ways: 
$$ 
d_n = \frac{\hat{\tau}_{n-1}\hat{\tau}_{n+1}}{\hat{\tau}_n^2}, \qquad   
\hat{\tau}_n = \Delta_{2n-4} \quad \mathrm{or}\quad \Delta_{2n-3}, 
$$ 
for $n\in\Z$, where the orbit is determined by the choice of parity of the index on $\Delta_j$. It follows 
from the proof of Theorem 5.5 in \cite{contfrac} that (regardless of which choice is made), 
the tau functions $\hat{\tau}_n$ satisfy a Somos-8 relation, which explains why 
the relation (\ref{s8iv}) appears in Example \ref{experitau}.    
\end{exa} 

\section{Conclusions and outlook}\label{sec:conc} 
\setcounter{equation}{0}

We have seen how the map (P.iv) obtained by Gubbiotti et al.\  can 
naturally be viewed as the $g=2$ member of a family of algebraically integrable  maps, 
defined for each $g$, that are naturally related 
to the infinite Volterra  lattice equation, leading to genus $g$ solutions of the latter. 
This begs the question as to what can be said about the other Liouville integrable maps in 4D found in \cite{gjtv2}, namely (P.v) and (P.vi), which take the same form (\ref{themap}) 
but for a different rational function   $\rF$. Note that (P.vi) depends on an 
extra parameter compared with (P.v), which we denote here by $\de$ (instead of $\de^2$ in 
 \cite{gjtv2}). In fact, (P.v) arises from (P.vi) in the limit $\de\to 0$.  

So far we have made the following observations: 
\begin{itemize}
\item Each solution $\hat{w}_n$ of  (P.v) is mapped to a solution $w_n$ of (P.iv) 
via the transformation 
\beq\label{pvtrans}
w_n = \hat{w}_{n+1}\hat{w}_n,  
\eeq 
by suitably identifying the parameters $\nu,a,b$ for (P.iv) in terms of the parameters and first integrals 
for (P.v).   
\item Under the flow of  the Hamiltonian vector field $\tfrac{\rd}{\rd t}$ 
associated with one of its first integrals, the sequence  $(\hat{w}_n)_{n\in\Z}$ 
generated by iteration of the map (P.v) 
extends to a solution of the modified Volterra lattice in the form 
\beq\label{modvol} 
\frac{\rd\hat{w}_n}{\rd t} = \hat{w}_n^2 (\hat{w}_{n+1}-\hat{w}_{n-1}). 
\eeq 
\item Each solution $\hat{w}_n$ of  (P.vi) is mapped to a pair of solutions $w_n^{(+)}$, 
$w_n^{(-)}$ of (P.iv), via 
the transformations  
\beq\label{pvitrans}
w_n^{(\pm)} = (\hat{w}_{n+1}\pm \de)(\hat{w}_n\mp\de),  
\eeq 
by suitably identifying the parameters $\nu,a,b$ for (P.iv) in terms of the parameters and first integrals 
for (P.vi).   
\item Under the flow of  the Hamiltonian vector field $\tfrac{\rd}{\rd t}$ 
associated with one of its first integrals, the sequence  $(\hat{w}_n)_{n\in\Z}$ 
generated by iteration of the map (P.vi) 
extends to a solution of the  modified Volterra lattice in the form 
\beq\label{modvolde} 
\frac{\rd \hat{w}_n}{\rd t} = (\hat{w}_n^2 -\de^2) (\hat{w}_{n+1}-\hat{w}_{n-1}). 
\eeq 
\end{itemize}  
The formulae (\ref{pvtrans}) and (\ref{pvitrans}) are the well-known expressions for the Miura transformation connecting the two forms of the  modified Volterra lattice equation, given by  (\ref{modvol}) and (\ref{modvolde}), respectively,  
to the Volterra lattice \eqref{eq:infinite_volterra}. Thus the above statements about the connections between the maps can be viewed as restrictions of a Miura 
 transformation to a finite-dimensional phase space. Preliminary calculations, and 
the results of \cite{yan} on elliptic solutions, indicate that this picture should extend 
to arbitrary genus $g$.  Our initial results, including an explicit description of how both (P.v) and (P.vi) are related to (P.iv), have recently appeared in \cite{ocnmp}.   
We propose that the full description of the above observations, and their extension to genus $g$ analogues 
of the maps (P.v) and (P.vi), should be left as the subject of future work. 

It is also worth pointing out that part of the original motivation for  the work 
in  \cite{gjtv2} was to consider autonomous versions of the higher order discrete 
Painlev\'e equations from \cite{cj}, and  new applications of the latter have been found very recently.  
Non-autonomous analogues of the Volterra maps ${\cal V}_g$ have been considered in the context of Hermitian matrix models \cite{ben}, where they arise as string equations, and they also appear as recursion relations for orthogonal polynomials associated with generalised 
Freud weights of higher order \cite{cjl}. In these applications, the algebro-geometric structure of the Volterra maps should be relevant to the  asymptotic description of the oscillatory behaviour that is observed in specific parameter regimes. 

\section{Appendix A: the relation between the Mumford-like and the even Mumford system}\label{par:even_mumford}
\setcounter{equation}{0}

Here we now show that the Hamiltonian system $(M_g,\PB^\phi,\mu)$ associated with the Volterra map is birationally isomorphic 
(as a Poisson isomorphism) to  the even Mumford system, or more precisely to a subsystem thereof, obtained 
simply 
by fixing the value of one of the Casimirs. Recall from
\cite[Ch.\ 6]{vanhaecke} that the even Mumford system (of genus $g$) is the Hamiltonian system
$(M_g',\PB'{}^\psi,\mu')$, whose phase space $M_g'$ is the $(3g+2)$-dimensional affine space
\begin{equation}
  M_g':=\left\{(U(\xi),V(\xi),W(\xi))\in\C[\xi]^3  ~\Bigl|~
  \begin{array}{lll}
    \deg U(\xi)=g\;,&U\hbox{ monic}\\
    \deg V(\xi) <g\;,&\\
    \deg W(\xi)=g+2\;,&W\hbox{ monic}\\
  \end{array}
 \right\}\;.
\end{equation}
Elements $(U(\xi),V(\xi),W(\xi))$ of $M_g'$ are written as Lax matrices
\begin{equation*}
  \lax'(\xi):=\left(\begin{array}{lr}
  V(\xi)&U(\xi)\\
  W(\xi)&-V(\xi)
  \end{array}\right)\;,\qquad
\end{equation*}
whose polynomial entries have the form 
\begin{equation*}
  U(\xi)=\xi^g+\sum_{i=0}^{g-1} U_i\xi^i\;,  \qquad
  V(\xi)=\sum_{i=0}^{g-1}V_i\xi^i\;,  \qquad
  W(\xi)=\xi^{g+2}+\sum_{i=0}^{g+1}W_i\xi^i\;.  
\end{equation*}
The $3g+2$ coefficients $W_{g+1},W_g$ and $U_i,V_i,W_i$ with $0\leqslant i<g$ are used as linear coordinates on
$M_g'$. The momentum map $\mu'$ is given by
\begin{equation*}
  \begin{array}{lcrcl}
    \mu'&:&M_g'&\to&\C[\xi]\\
    & &\lax'(\xi)=\left(\begin{array}{lr}
      V(\xi)&U(\xi)\\
      W(\xi)&-V(\xi)
  \end{array}\right)&\mapsto&-\det \lax'(\xi)=V(\xi)^2+U(\xi)W(\xi)\;.
  \end{array}
\end{equation*}
It is clear that $-\det \lax'(\xi)$ is monic of degree $2g+2$, so $2g+2$ polynomial functions
$H'_0,H'_1,\dots,H'_{2g+1}$ on~$M_g'$ are defined by
\begin{equation*}
  V(\xi)^2+U(\xi)W(\xi)=\xi^{2g+2}+\sum_{i=0}^{2g+1}H'_i\xi^i\;,
\end{equation*}
For any non-zero polynomial $\psi$ of degree at most $g+1$, a Poisson structure of rank $2g$ on $M_g'$ is given by
\begin{align*}
  \pb{U(\xi),U(\eta)}'{}^\psi&=\pb{V(\xi),V(\eta)}'{}^\psi=0\;,\\
  \pb{U(\xi),V(\eta)}'{}^\psi&=\frac{U(\xi)\psi(\eta)-U(\eta)\psi(\xi)}{\xi-\eta}\;,\\
  \pb{U(\xi),W(\eta)}'{}^\psi&=-2\frac{V(\xi)\psi(\eta)-V(\eta)\psi(\xi)}{\xi-\eta}\;,\\
  \pb{V(\xi),W(\eta)}'{}^\psi&=\frac{W(\xi)\psi(\eta)-W(\eta)\psi(\xi)}{\xi-\eta}
     -(\xi+\eta+W_{g+1}-U_{g-1})U(\xi)\psi(\eta)\;,\\
  \pb{W(\xi),W(\eta)}'{}^\psi&=2(\xi+\eta+W_{g+1}-U_{g-1})\left(V(\xi)\psi(\eta)-V(\eta)\psi(\xi)\right)\;.
\end{align*}
The polynomial functions $H'_i$ are functionally independent and in involution, which accounts for the Liouville
integrability of the even Mumford system. It is algebraically integrable, with the fiber of $\mu'$ over any monic
polynomial $f'(x)$ of degree $2g+2$ and without multiple roots being an affine part of the Jacobian of the smooth
hyperelliptic curve defined by $\eta^2=f'(\xi)$.

For the isomorphism, we consider only the Poisson structures $\PB'{}^\psi$ for which $\psi(0)=0$. They admit
$H'_0$ as Casimir function, hence we can restrict $(M_g',\PB'{}^\psi,\mu')$ to the subvariety, defined by
$H'_0=0$; the resulting system is denoted by $(M_g^0,\PB^{0,\psi},\mu^0)$. We show that this system is
birationally equivalent to the Mumford-like system $(M_g,\PB^\phi,\mu)$, where the relation between $\psi$ and
$\phi$ will be spelled out below. To do this, we construct a biregular map $\tilde\Psi$ and a birational map $\Psi$
making the following diagram commutative:
\begin{equation*}
  \begin{tikzcd}[row sep=5ex, column sep=4ex]
      M_g^0\arrow{r} {\Psi}\arrow[swap]{d}{\mu^0}&M_g\arrow{d} {\mu}\\
      B_g^0\arrow[swap]{r} {\tilde\Psi}&B_g
    \end{tikzcd}
\end{equation*}
In this diagram, $B_g$ and $B_g^0$ are the images of $\mu$ and $\mu^0$, which we view as spaces of curves: $B_g$
consists of the hyperelliptic curves of the form $y^2=f(x)$, with $f(0)=1$ and $\deg f\leqslant 2g+1$, while
$B_g^0$ consists of the hyperelliptic curves of the form $\eta^2=f'(\xi)$, with $f'$ monic of degree $2g+2$,
vanishing at $0$. The curves of $B_g^0$ have two points at infinity, which we denote by $\infty_1$ and $\infty_2$.

We first establish a natural correspondence between the curves of $B_g^0$ and the curves of $B_g$. Let $y^2=f(x)$
be a curve of $B_g$ and substitute 
$x=\xi^{-1}$ and $y=\eta\xi^{-g-1}$, to get
$\eta^2=f'(\xi)=\xi^{2g+2}f(\xi^{-1})$, where $f'(0)=0$ and $f'$ is monic of degree $2g+2$, so $\eta^2=f'(\xi)$ is
a curve of $B_g^0$. From the latter, one gets back $y^2=f(x)$ by setting $\xi=x^{-1}$ and $\eta=yx^{-g-1}$. Notice
that when $f(x)=1+\sum_{i=1}^{2g+1}c_ix^i$ then $\eta^2=\xi^{2g+2}+\sum_{i=1}^{2g+1}c_{2g+2-i}\xi^i$, which yields
the biregular map $\tilde\Psi$.

For the construction of $\Psi$, the biregular map $\tilde\Psi$ 
between the spaces of curves $y^2=f(x)$ and $\eta^2=f'(\xi)$ is
extended to divisors on these curves. To do this, we compare the description of points $(\cP(x),\cQ(x),\cR(x))$
on a generic fiber of $\mu$ in terms of divisors on the corresponding curve $y^2=f(x)$ with the description of
points $(U(\xi),V(\xi),W(\xi))$ on a generic fiber of $\mu'$ in terms of divisors on the corresponding curve
$\eta^2=f'(\xi)$. The first description was given in the proof of Proposition \ref{prp:generic_fibers}, while the
second description, which we quickly recall, can be found in \cite[Ch.\ 6]{vanhaecke}. Let $\xi_1,\dots,\xi_g$
denote the roots of $U(\xi)$ and let $\eta_i:=V(\xi_i)$, for $i=1,\dots,g$. Then the points $(\xi_i,\eta_i)$ belong
to the curve $\eta^2=\phi(\xi)$ and so the divisor class $\left[\sum_{i=1}^g(\xi_i,\eta_i)-g\infty_1\right]$ is a
point of its Jacobian.  The relation between the polynomials $\cQ(x)$ and $U(\xi)$ is clearly given by
\begin{equation*}
  \frac{\xi^g}2\cQ(\xi^{-1})=U(\xi)\;.
\end{equation*}
Indeed, both sides of this equality are polynomials of degree $2g+2$ which vanish for $\xi=\xi_i=x_i^{-1}$ (which
we may assume to be different). Similarly, $\cP(x)$ is related to $U(\xi)$ and $V(\xi)$ by
\begin{equation*}
  \xi^{g}\cP(\xi^{-1})-U(\xi)=\frac{U_0V(\xi)-V_0U(\xi)}{\xi U_0}\;,
\end{equation*}
because both sides of this equation are the unique polynomial in $\xi$ of degree less than $g$ which takes for
$\xi=\xi_i$ the value $\eta_i/\xi_i$.  A formula for $\cR$ follows from the equations of the curves, namely 
\begin{equation*}
  \cP^2(\xi^{-1})+\cQ(\xi^{-1})\cR(\xi^{-1})=f(\xi^{-1})=\xi^{-2g-2}f'(\xi)=\xi^{-2g-2}(V^2(\xi)+U(\xi)W(\xi))\;.
\end{equation*}
It is clear that this defines a birational map between $M_g$ and $M_g^0$. Under this map, $H_i$ corresponds to
$H'_{2g+2-i}$, for $i=1,\dots,2g+1$. To see that $\Psi$ is a Poisson map, hence a birational Poisson isomorphism,
we recall that the Poisson bracket $\PB'{}^{\psi}$ is given in terms of the $\xi_i$ and $\eta_j$ by
$\pb{\xi_i,\eta_j}'{}^\psi=\psi(\xi_i)\delta_{ij},$ and hence
\begin{equation*}
  \pb{x_i,y_j}'{}^\psi=\left\{\xi_i^{-1},\eta_j\xi^{-g-1}_j\right\}'{}^\psi
  =-\xi_i^{-2}\xi_j^{-g-1}\pb{\xi_i,\eta_j}'{}^\psi
  =-{\xi_i^{-g-3}}\psi(\xi_i)\delta_{ij}
  =-{x_i^{g+3}}\psi(x_i^{-1})\delta_{ij}\;.  
\end{equation*}
Compared with \eqref{Mlike_pb_lin}, this shows that $\Psi:(M_g^0,\PB'{}^\psi)\to(M_g,\PB^\phi)$ is a birational
Poisson isomorphism when taking $\phi(x)=\psi(x^{-1})x^{g+2}$; notice that $\phi$, defined by this formula, is
indeed a polynomial of degree at most $g+1$, vanishing at $0$ and that we get all such polynomials $\phi$ for some
appropriate polynomial $\psi$ of degree at most $g+1$, vanishing at $0$.

\section{Appendix B: MAPLE code for Proposition \ref{bils9}} 

For completeness, below we have included MAPLE code (without output) which 
verifies the computer algebra required for the proof of Proposition  \ref{bils9}. 
For the reader interested in using the code, please see the 
repository {\tt https://github.com/anwh1729/Volterra_maps.git} 
where the original MAPLE file can be downloaded.  

\vspace{.2in}
 
\lstset{basicstyle=\ttfamily,breaklines=true,columns=flexible}
\pagestyle{empty}
\DefineParaStyle{Maple Bullet Item}
\DefineParaStyle{Maple Heading 1}
\DefineParaStyle{Maple Warning}
\DefineParaStyle{Maple Heading 4}
\DefineParaStyle{Maple Heading 2}
\DefineParaStyle{Maple Heading 3}
\DefineParaStyle{Maple Dash Item}
\DefineParaStyle{Maple Error}
\DefineParaStyle{Maple Title}
\DefineParaStyle{Maple Ordered List 1}
\DefineParaStyle{Maple Text Output}
\DefineParaStyle{Maple Ordered List 2}
\DefineParaStyle{Maple Ordered List 3}
\DefineParaStyle{Maple Normal}
\DefineParaStyle{Maple Ordered List 4}
\DefineParaStyle{Maple Ordered List 5}
\DefineCharStyle{Maple 2D Output}
\DefineCharStyle{Maple 2D Input}
\DefineCharStyle{Maple Maple Input}
\DefineCharStyle{Maple 2D Math}
\DefineCharStyle{Maple Hyperlink}
\mapleinput
{$ \displaystyle \texttt{>\,} \mathit{restart} \colon \,\mathit{with} (\mathit{LinearAlgebra})\colon \, $}

\mapleinput
{$ \displaystyle \texttt{>\,} \,\mathit{wrec} \coloneqq w [4]\cdot w [3]\cdot w [2]+w [2]\cdot w [1]\cdot w [0]+2\cdot w [2]^{2}\cdot (w [3]+w [1])+w [2]\cdot (w [3]^{2}+w [3]\cdot w [1]+w [1]^{2})+w [2]^{3}+\mathrm{nu}\cdot w [2]\cdot (w [3]+w [2]+w [1])+b \cdot w [2]+a ; $}

\mapleinput
{$ \displaystyle \texttt{>\,} \,\esnum \mathit{\,The\,(P.iv)\,equation\,(1.2),\,expressed\,as\,a\,relation\,between\,variables\,w[0],w[1],w[2],w[3],w[4].\,}  $}

\mapleinput
{$ \displaystyle \texttt{>\,} \,\esnum \mathit{\,Making\,the\,tau\,function\,substitution\,(2.3)\,to\,express\,(P.iv)\,as\,a\,homogeneous\,relation\,of\,order\,7\,for\,tau[n]:\,\,\,}  $}

\mapleinput
{$ \displaystyle \texttt{>\,} \,\,\mathit{tauseven} \coloneqq \mathit{wrec} \colon \,\mathrm{for}\,n \,\mathrm{from}\,0\,\mathrm{to}\,4\,\mathrm{do}\,\mathit{tauseven} \coloneqq \mathit{simplify} (\mathit{subs} (w [n]=\frac{\mathrm{tau}[n]\cdot \mathrm{tau}[n +3]}{\mathrm{tau}[n +1]\cdot \mathrm{tau}[n +2]},\mathit{tauseven}))\colon \,\mathrm{od}\colon \, $}

\mapleinput
{$ \displaystyle \texttt{>\,} \,\mathit{tauseven} \coloneqq \mathit{numer} (\mathit{tauseven}); $}

\mapleinput
{$ \displaystyle \texttt{>\,} \esnum \mathit{\,The\,above\,equation\,tauseven\,is\,the\,7th\,order\,degree\,8\,equation\,(2.4).\,}  $}

\mapleinput
{$ \displaystyle \texttt{>\,} \esnum \mathit{\,(2.4)\,is\,expressed\,here\,as\,a\,relation\,between\,tau[0],tau[1],tau[2],tau[3],tau[4],tau[5],tau[6],tau[7].\,}  $}

\mapleinput
{$ \displaystyle \texttt{>\,} \esnum \mathit{\,This\,directly\,verifies\,the\,proof\,of\,part\,(1)\,of\,Proposition\,2.1.\,}  $}

\mapleinput
{$ \displaystyle \texttt{>\,} \esnum \mathit{\,In\,order\,to\,derive\,and\,verify\,part\,(2),\,the\,above\,recurrence\,is\,iterated\,to\,generate\,14\,adjacent\,iterates\,\,}  $}

\mapleinput
{$ \displaystyle \texttt{>\,} \esnum \mathit{\,given\,in\,terms\,of\,the\,7\,initial\,values\,tau[0],tau[1],tau[2],tau[3],tau[4],tau[5],tau[6].\,}  $}

\mapleinput
{$ \displaystyle \texttt{>\,} \esnum \mathit{\,The\,iterates\,are\,denoted\,t[n]\,for\,brevity.\,}  $}

\mapleinput
{$ \displaystyle \texttt{>\,} \esnum  \mathit{t[0],t[1],t[2],t[3],t[4],t[5],t[6]\,are\,7\,arbitrary\,initial\,data\,for\,} ( 2.4). $}

\mapleinput
{$ \displaystyle \texttt{>\,} \mathit{topsol} \coloneqq \mathit{solve} (\mathit{tauseven} ,\mathrm{tau}[7])\colon  $}

\mapleinput
{$ \displaystyle \texttt{>\,} \mathrm{for}\,n \,\mathrm{from}\,7\,\mathrm{to}\,10\,\mathrm{do}\,t [n]\coloneqq \mathit{topsol} \colon \,\mathrm{for}\,m \,\mathrm{from}\,0\,\mathrm{to}\,6\,\mathrm{do}\,t [n]\coloneqq \mathit{factor} (\mathit{simplify} (\mathit{subs} (\mathrm{tau}[m]=t [m +n -7],t [n])))\colon \,\mathrm{od}\colon \,\mathit{print} (t [n])\colon \,\mathrm{od}\colon \, $}

\mapleinput
{$ \displaystyle \texttt{>\,} \esnum \mathit{\,Have\,generated\,t[7],t[8],t[9],t[10].\,Rather\,than\,generating\,up\,to\,t[13],\,which\,is\,slow\,\,}  $}

\mapleinput
{$ \displaystyle \texttt{>\,} \esnum \mathit{\,due\,to\,the\,increasing\,size\,of\,the\,expressions,\,apply\,the\,inverse\,map\,to\,generate\,t[-1],t[-2],t[-3].\,}  $}

\mapleinput
{$ \displaystyle \texttt{>\,} \esnum \mathit{\,This\,is\,faster\,and\,more\,efficient.\,}  $}

\mapleinput
{$ \displaystyle \texttt{>\,} \mathit{botsol} \coloneqq \mathit{solve} (\mathit{tauseven} ,\mathrm{tau}[0]); $}

\mapleinput
{$ \displaystyle \texttt{>\,} \mathrm{for}\,n \,\mathrm{from}\,1\,\mathrm{to}\,3\,\mathrm{do}\,t [-n]\coloneqq \mathit{botsol} \colon \,\mathrm{for}\,m \,\mathrm{from}\,0\,\mathrm{to}\,6\,\mathrm{do}\,t [-n]\coloneqq \mathit{factor} (\mathit{simplify} (\mathit{subs} (\mathrm{tau}[m +1]=t [m -n +1],t [-n])))\colon \,\mathrm{od}\colon \,\mathit{print} (t [-n])\colon \,\mathrm{od}\colon \, $}

\mapleinput
{$ \displaystyle \texttt{>\,} \,\esnum \mathit{\,Remark:\,Observe\,that\,these\,iterates\,are\,all\,Laurent\,polynomials,\,in\,accordance\,with\,Proposition\,2.3.\,}  $}

\mapleinput
{$ \displaystyle \texttt{>\,} \esnum \mathit{\,Now\,calculate\,determinant\,of\,matrix\,corresponding\,to\,5\,copies\,of\,Somos-9\,recurrence.\,}  $}

\mapleinput
{$ \displaystyle \texttt{>\,} f \coloneqq (i ,j)\rightarrow \,t [i +j -5]\cdot t [i -j +6]\colon \,M \coloneqq \mathit{Matrix} (5,f)\colon \,\mathit{simplify} (\mathit{Determinant} (M));\, $}

\mapleinput
{$ \displaystyle \texttt{>\,} \,\mathit{nullM} \coloneqq \mathit{NullSpace} (M);\, $}

\mapleinput
{$ \displaystyle \texttt{>\,} \,\mathit{nops} (\mathit{nullM});\,\mathit{kernelvec} \coloneqq \mathit{op} (1,\mathit{nullM}); $}

\mapleinput
{$ \displaystyle \texttt{>\,} \,\mathit{nullity} \coloneqq 5-\mathit{Rank} (M);\,\mathrm{for}\,j \,\mathrm{from}\,1\,\mathrm{to}\,5\,\mathrm{do}\,\mathit{al} [j]\coloneqq \mathit{kernelvec} [j]\,\mathrm{od}; $}

\mapleinput
{$ \displaystyle \texttt{>\,} \,\esnum \mathit{\,Entries\,of\,vector\,that\,spans\,the\,1-dimensional\,nullspace\,(kernel)\,of\,matrix\,M.\,}  $}

\mapleinput
{$ \displaystyle \texttt{>\,} \esnum \mathit{\,Checking\,that\,Somos-9\,relation\,is\,invariant\,under\,shifts:\,}  $}

\mapleinput
{$ \displaystyle \texttt{>\,} \esnum \mathit{\,Would\,like\,to\,normalize\,the\,entries\,of\,the\,vector\,in\,the\,kernel,\,and\,show\,they\,are\,invariant\,under\,shift.\,}  $}

\mapleinput
{$ \displaystyle \texttt{>\,} \esnum \mathit{\,It\,is\,convenient\,to\,first\,show\,they\,are\,functions\,of\,the\,initial\,data\,w[0],w[1],w[2],w[3],\,then\,check\,invariance.\,}  $}

\mapleinput
{$ \displaystyle \texttt{>\,} \mathrm{for}\,j \,\mathrm{from}\,1\,\mathrm{to}\,4\,\mathrm{do}\,\mathit{alw} [j]\coloneqq \mathit{al} [j]\colon \,\mathrm{for}\,m \,\mathrm{from}\,0\,\mathrm{to}\,3\,\mathrm{do}\,\mathit{alw} [j]\coloneqq \mathit{subs} (t [m]=\frac{w [m]\cdot t [m +1]\cdot t [m +2]}{t [m +3]},\mathit{alw} [j])\colon \,\mathrm{od}\colon \mathrm{od}\colon \, $}

\mapleinput
{$ \displaystyle \texttt{>\,} \mathrm{for}\,j \,\mathrm{from}\,1\,\mathrm{to}\,4\,\mathrm{do}\,\mathit{alw} [j]\coloneqq \mathit{simplify} (\mathit{alw} [j])\colon \,\mathit{print} (\mathit{alw} [j])\colon \,\mathrm{od}\colon \,k [1]\coloneqq \mathit{denom} (\mathit{alw} [4]);\, $}

\mapleinput
{$ \displaystyle \texttt{>\,} \esnum \mathit{\,Can\,recognize\,this\,polynomial\,as\,the\,first\,integral\,K[1]\,for\,the\,map\,(P.iv),\,given\,by\,(1.3).\,}  $}

\mapleinput
{$ \displaystyle \texttt{>\,} \mathit{alw} [5]\coloneqq 1\colon \,\mathrm{for}\,j \,\mathrm{from}\,1\,\mathrm{to}\,5\,\mathrm{do}\,\mathrm{alpha}[j]\coloneqq \mathit{simplify} (-k [1]\cdot \,\mathit{alw} [j]\cdot \mathit{denom} (\mathit{alw} [1]))\,\mathrm{od}\colon \, $}

\mapleinput
{$ \displaystyle \texttt{>\,} \esnum \mathit{\,Write\,all\,normalized\,coefficients\,as\,polynomials\,in\,the\,two\,invariants\,(first\,integrals)\,for\,the\,map\,(P.iv).\,}  $}

\mapleinput
{$ \displaystyle \texttt{>\,} k [2]\coloneqq \mathit{collect} (\mathit{collect} (\mathit{collect} (\mathit{simplify} ((\mathrm{alpha}[2]+k [1]^{2})\cdot a^{-1}),a),b),\mathrm{nu});\, $}

\mapleinput
{$ \displaystyle \texttt{>\,} \esnum \mathit{\,Recognize\,this\,as\,the\,first\,integral\,K[2]\,for\,the\,map\,(P.iv),\,given\,by\,(1.4).\,}  $}

\mapleinput
{$ \displaystyle \texttt{>\,} \mathit{simplify} (\mathrm{alpha}[1]-k [1]);\,\mathit{simplify} (\mathrm{alpha}[2]-a \cdot k [2]+k [1]^{2});\,\mathit{simplify} (\mathrm{alpha}[3]-a \cdot (a \cdot k [2]-2\cdot \,k [1]^{2}));\,\mathit{simplify} (\mathrm{alpha}[4]-a \cdot (a^{2}\cdot k [1]+b \cdot k [1]^{2}+\mathrm{nu}\cdot k [1]\cdot k [2]+k [2]^{2}));\,\mathit{simplify} (\mathrm{alpha}[5]+k [1]\cdot (a^{2}\cdot k [1]+b \cdot k [1]^{2}+\mathrm{nu}\cdot k [1]\cdot k [2]+k [2]^{2}));\, $}

\mapleinput
{$ \displaystyle \texttt{>\,} \,\esnum \mathit{\,This\,completes\,the\,proof\,of\,part\,(2)\,of\,Proposition\,2.1.\,}  $}

\bigskip

\noindent \textbf{Acknowledgments:} The research of ANWH was supported by Fellowship EP/M004333/1 from the
Engineering \& Physical Sciences Research Council, UK and grant IEC\textbackslash R3\textbackslash 193024 from the
Royal Society; he is also grateful to the School of Mathematics and Statistics, University of New South Wales, for
hosting him during 2017-2019 as a Visiting Professorial Fellow with funding from the Distinguished Researcher
Visitor Scheme, and to Wolfgang Schief, who provided additional support during his time in Sydney. 
We would like to thank the two anonymous referees for numerous detailed comments, which have helped to clarify the 
paper in several places.

\small


\begin{thebibliography}{99}

\bibitem{amv}
M.~Adler, P.~van Moerbeke, and P.~Vanhaecke.
\newblock {\em Algebraic integrability, {P}ainlev\'e geometry and {L}ie
  algebras}, vol. \textbf{47} of {\em Ergebnisse der Mathematik und ihrer
  Grenzgebiete.} 
\newblock Springer-Verlag, Berlin, 2004.

\bibitem{akhiezer} N.I. Akhiezer, Elements of the Theory of Elliptic Functions, 
Translations of Mathematical Monographs, vol. \textbf{79}, American Mathematical Society, 1990.


\bibitem{apt} A.I. Aptekarev, Asymptotic properties of polynomials orthogonal on a
system of contours and periodic motions of Toda lattices, Mat. Sb. \textbf{125} (1984)
231--258.

\bibitem{ben} 
C. Benassi and A. Moro, 
Thermodynamic limit and dispersive regularization in matrix models, Phys. Rev. E \textbf{101} (2020) 052118.



\bibitem{bertola} M. Bertola, 
Pad\'e approximants on Riemann surfaces and KP tau functions, 
Analysis Math. Phys. \textbf{11} (2021) 149.

\bibitem{bv}  M. Bellon and C.M. Viallet, Algebraic entropy, Comm. Math. Phys. \textbf{204} (1999) 425--437. 



\bibitem{bruschi} M. Bruschi, O. Ragnisco, P.M. Santini and T. Gui-Zhang, 
Integrable symplectic maps, Physica D {\bf 49} (1991) 273--294. 

\bibitem{chang} 
X.-K. Chang, X.-B. Hu and G. Xin, 
Hankel Determinant Solutions to Several Discrete Integrable Systems and the Laurent Property, 
SIAM J. Discrete Math. {\bf 29} (2015) 
667--682. 

\bibitem{chenits}      
Y. Chen and A.R. Its,  
A Riemann–Hilbert approach to the Akhiezer polynomials, Phil. Trans. R. Soc. A {\bf 366} (2008) 973--1003.  

\bibitem{chen}X.-M. Chen, X.-B. Hu and F. M\"uller-Hoissen, 
Non-isospectral extension of the Volterra lattice hierarchy, and Hankel 
determinants, 
Nonlinearity \textbf{31} (2018) 4393--4422. 

\bibitem{cjl} 
P.A. Clarkson, K. Jordaan and A. Loureiro,  
Generalized higher-order Freud weights, 
Proc. R. Soc. A \textbf{479}  (2023) 20220788. 

\bibitem{cj} C. Cresswell  and N. Joshi,  The discrete first, second and thirty-fourth Painlev\'e hierarchies, 
J. Phys. A: Math. Gen. \textbf{32} (1999) 655--69.

\bibitem{dtkq} 
 D.K. Demskoi,  D.T. Tran, P.H. van der Kamp  and G.R.W. Quispel, A novel $n$th order difference 
equation that may be integrable, J. Phys. A: Math. Theor. \textbf{45}  (2012) 135202.

\bibitem{df} J. Diller and C. Favre, Dynamics of bimeromorphic maps of surfaces, Amer. J. of Math. \textbf{123} (2001) 1135--1169.

\bibitem{ds} A. Doliwa and A. Siemaszko, 
Hermite-Pad\'e approximation and integrability, 
J. Approximation Theory \textbf{292} (2023); {\tt arXiv:2201.06829} 



\bibitem{duistermaat} J.J. Duistermaat, Discrete Integrable Systems: QRT Maps and Elliptic Surfaces, 
Springer, 2010. 


\bibitem{duv} D. Duverney and  I. Shiokawa, 
Continuants and convergence of certain continued fractions, 
{\tt arXiv:2210.09862v1} 


\bibitem{fz1} S.~Fomin and A.~Zelevinsky,  
Cluster algebras I: Foundations, 
J. Amer. Math. Soc. \textbf{15} (2002) 497--529. 


\bibitem{fordy_hone}A.P. Fordy and A.N.W Hone, 
Discrete integrable systems and Poisson algebras from cluster maps, 
Comm. Math. Phys. 
325 (2014) 527--584.

\bibitem{ghsz} 
F. Gesztesy, H. Holden, B. Simon and Z. Zhao, 
On the Toda and Kac-van Moerbeke Systems, 
Trans. Amer. Math. Soc. \textbf{339} (1993) 849--868. 

\bibitem{gjtv1} 
G. Gubbiotti, N. Joshi, D.T. Tran and C.-M. Viallet, 
Complexity and integrability in 4D bi-rational maps  with two invariants. 
 In: Nijhoff F., Shi Y., Zhang D. (eds), Asymptotic, Algebraic and Geometric Aspects of Integrable Systems. Springer Proceedings in Mathematics \& Statistics, vol. \textbf{338},
 Springer, Cham, pp.17--36, 2020.   


\bibitem{gjtv2} 
G. Gubbiotti, N. Joshi, D.T. Tran and C.-M. Viallet, 
Bi-rational maps in four dimensions with two invariants, 
J. Phys. A: Math. Theor. \textbf{53} (2020) 115201.

\bibitem{gub} 
G. Gubbiotti,
Lagrangians and integrability for additive fourth-order difference equations, 
Eur. Phys. J. Plus \textbf{135} (2020) 853. 



  
\bibitem{hhkq}
K. Hamad, A.N.W.~Hone, P. van der Kamp and G.R.W. Quispel, 
QRT maps and related Laurent systems, 
Adv. Appl. Math.  \textbf{96} (2018) 216--248. 




\bibitem{hones5} A.N.W.~Hone,
Sigma function solution of the initial value problem 
for Somos 5 sequences, 
Trans. Amer. Math. Soc.
\textbf{359} (2007) 5019--5034.

\bibitem{conf} A. N. W. Hone,  Singularity confinement for maps with the Laurent property, 
Phys. Lett. A \textbf{361} (2007) 341--345.



\bibitem{honesenthil}
A.N.W. Hone and M. Senthilvelan, 
Note on the Poisson structure of the damped oscillator, 
J. Math. Phys. \textbf{50} (2009) 102902. 

\bibitem{honep} A.N.W. Hone and M. Petrera, 
Three-dimensional discrete systems of Hirota-Kimura type  and deformed Lie-Poisson algebras, 
Journal of Geometric Mechanics \textbf{1} 
(2009) 55--85. 



\bibitem{hones6} A.N.W. Hone, 
Analytic solution and integrability for a bilinear recurrence of order six, Applicable Analysis 
\textbf{89} (2010) 473--492.



\bibitem{hkq} A.N.W. Hone, T.E. Kouloukas and G.R.W. Quispel, 
Some integrable maps and their Hirota bilinear forms, 
J. Phys. A: Math. Theor. \textbf{51} (2018) 044004.   

\bibitem{contfrac}  A.N.W. Hone,  
Continued fractions and Hankel determinants from hyperelliptic curves, 
Comm. Pure Appl. Math. \textbf{73} (2020) 2310--2347. 

\bibitem{ocnmp} 
A.N.W. Hone, J.A.G. Roberts, P. Vanhaecke and F. Zullo, 
Integrable maps in 4D and modified Volterra lattices, 
 Open Communications in Nonlinear Mathematical Physics, Special Issue in Memory of Decio Levi (2024) ocnmp:12491; 
{\tt arXiv:2310.19584} 


\bibitem{ir2} A.  Iatrou  and  J.A.G. Roberts,  Integrable mappings of the plane preserving biquadratic invariant curves II,  
Nonlinearity \textbf{15} (2002) 459--489. 

\bibitem{jv} N. Joshi and  C.-M.  Viallet,  Rational maps with invariant surfaces, J. Int. Syst. \textbf{3} 
(2018) xyy017.

\bibitem{kacvm}M. Kac and P. van Moerbeke, 
On an Explicitly Soluble System of Nonlinear Differential Equations Related to Certain Toda Lattices, 
Adv. Math. \textbf{16} (1975) 160--169. 

\bibitem{kanki}
M. Kanki, J. Mada,  K.M. Tamizhmani, T. Tokihiro,
Discrete Painlev\'e II equation over finite fields,
J. Phys. A: Math. Theor. \textbf{45} (2012) 342001.




\bibitem{kitaev} A.V. Kitaev, 
A note on the averaging for single-phase elliptic solutions of the Toda and the Volterra lattices, 
Physica D \textbf{74} 
(1994) 45--58. 

\bibitem{lp} 
T. Lam and  P. Pylyavskyy, 
Laurent phenomenon algebras, 
Cam. J. Math.
\textbf{4} (2012) 121--162.


\bibitem{maeda} S. Maeda, Completely integrable symplectic mapping, Proc. Japan Acad. Ser. A 
Math. Sci. {\bf 63} (1987) 198--200. 

\bibitem{mase} T. Mase, Investigation into the role of the Laurent property in integrability, J. Math. Phys. \textbf{57} (2016) 022703. 

\bibitem{sing} T. Mase, R. Willox, A. Ramani, B. Grammaticos, Singularity confinement as an integrability criterion, J. Phys. A: Math. Theor. \textbf{52} (2019) 205201.


\bibitem{moser}J. Moser, Three Integrable Systems Connected with Isospectral Deformations, 
Adv. Math. \textbf{16} (1975) 197--220. 




\bibitem{tata2} D. Mumford, Tata Lectures on Theta II, 
Birkh\"{a}user, Boston, 1984.

\bibitem{oeis} OEIS Foundation Inc. (2021), The On-Line Encyclopedia of Integer Sequences, {\tt http://oeis.org/A006721}

\bibitem{PS} R. Penrose and C.A.B. Smith, 
A quadratic mapping with invariant cubic curve, Math. Proc. Camb. Phil. Soc. 
{\bf 89} (1981) 89--105.  

\bibitem{vdphyp} A.J. van der Poorten,  Non-periodic continued fractions in hyperelliptic function fields. Bull. Austral. Math. Soc. \textbf{64} (2001) 331--343.  


\bibitem{vdp1} A.J. van der Poorten,  Elliptic curves and continued fractions. J. Integer Seq. \textbf{8} (2005), no. 2, Article 05.2.5, 19 pp.

\bibitem{vdp2} A.J. van der Poorten,  Curves of genus 2, continued fractions, and Somos sequences,  J. Integer Seq. \textbf{8} (2005), no. 3, Article 05.3.4, 9 pp. 

\bibitem{vdp3} A.J. van der Poorten, Hyperelliptic curves, continued fractions, and Somos sequences, Dynamics \& stochastics, 212--224. IMS Lecture Notes Monogr. Ser. \textbf{48}, 
Inst. Math. Statist., Beachwood, OH, 2006. 




\bibitem{qrt} G.R.W. Quispel, J.A.G. Roberts and C.J. Thompson,
Integrable mappings and soliton equations, 
Phys. Lett. A \textbf{126} (1988) 
419--421.

\bibitem{quispel}G.R.W.  Quispel,  H.W. Capel and  J.A.G. Roberts, Duality for discrete integrable systems, J. Phys. 
A: Math. Gen. \textbf{38} (2005) 3965--3980.

\bibitem{rcgo} 
A. Ramani, A.S. Carstea, B. Grammaticos and Y. Ohta, 
On the autonomous limit of discrete Painlev\'e equations, 
Physica A  \textbf{305} (2002) 437--444.

\bibitem{shohat}J. Shohat, 
On Stieltjes Continued Fractions, 
Amer. J. Math. \textbf{54} (1932) 79--84. 


\bibitem{somos} M. Somos, Problem 1470, Crux Mathematicorum \textbf{15} (1989) 208. 

\bibitem{stieltjes}T.-J. Stieltjes, 
Recherches sur les fractions continues, 
Ann. Fac. Sci. Toulouse Math. \textbf{8} (1894) J1-J122.



\bibitem{svinin1} A.K. Svinin, On some classes of discrete polynomials and ordinary difference equations, 
J. Phys. A: Math. Theor. \textbf{47} (2014) 155201.

\bibitem{svinin2} A.K. Svinin, On integrals for some class of  ordinary difference equations 
admitting a Lax representation, 
J. Phys. A: Math. Theor. \textbf{49} (2016) 095201.

\bibitem{svinin3} A.K. Svinin, On solutions for some class of integrable discrete difference equations, Journal of Difference Equations \textbf{27} (2021) 1734--1750.

\bibitem{takhtajan} L. Takhtajan, 
On foundation of the generalized Nambu mechanics, 
Comm. Math. Phys.  \textbf{160} (1994) 295--315. 

\bibitem{vanhaecke} P. Vanhaecke, 
Integrable systems in the realm of algebraic geometry, 2nd edition, 
Springer, 2001. 

\bibitem{vereshchagin} 
 V.L. Vereshchagin, Hamiltonian structure of averaged difference systems, 
Math. Zametkii \textbf{44} (1988) 584--595. 
 
\bibitem{veselov_volterra}  A.P. Veselov,
Integration of the stationary problem for a classical spin chain, Theor. Math. Phys. \textbf{71}  (1987) 446--450. 

\bibitem{veselov} A.P. Veselov, Integrable Maps, Russ. Math. Surv. \textbf{46} (1991) 1--51. 

\bibitem{viallet} 
C.-M. Viallet, On the algebraic structure of rational discrete dynamical systems, 
J. Phys. A: Math. Theor. \textbf{48} (2015) 16FT01.

\bibitem{yan} Z. Yan, 
Discrete exact solutions of modified Volterra and Volterra lattice equations via the new discrete sine-Gordon expansion algorithm, 
Nonlinear Analysis \textbf{64} (2006) 1798--1811. 




\bibitem{zannier} 
U. Zannier, 
Hyperelliptic continued fractions and generalized Jacobians, 
Amer. J. Math. \textbf{141} (2019) 
1--40.

\end{thebibliography}
\end{document}